\documentclass{article}
\usepackage{arxiv}

\usepackage[utf8]{inputenc} % allow utf-8 input
\usepackage[T1]{fontenc}    % use 8-bit T1 fonts
\usepackage{hyperref}       % hyperlinks
\usepackage{url}            % simple URL typesetting
\usepackage{booktabs}       % professional-quality tables
\usepackage{amsfonts}       % blackboard math symbols
\usepackage{nicefrac}       % compact symbols for 1/2, etc.
\usepackage{microtype}      % microtypography
\usepackage{lipsum}

% added
\usepackage[small]{caption}
\usepackage{algorithm}
\usepackage{algorithmic}
\usepackage{amssymb}
\usepackage[round]{natbib}
\usepackage{mathtools}
\usepackage{amsmath}
\usepackage{tikz}
\usetikzlibrary{calc}
\usepackage{todonotes}
\usepackage{dsfont}
\usepackage{enumitem}
\usepackage{tikz}
\usepackage{amsthm}
\usepackage{thm-restate}
\usepackage{booktabs}
\usepackage{multirow}
\usepackage{newfloat}
\usepackage{wrapfig}
\usepackage{dsfont}
\usepackage{comment}
\usepackage{multirow}

\newtheorem{example}{Example}
\newtheorem{theorem}{Theorem}
\newtheorem{lemma}{Lemma}

\newtheorem{definition}{Definition}

\newcommand{\avec}{\textbf{a}}

\newcommand{\svec}{\mathbf{s}}
\newcommand{\sset}{\mathcal{S}}
\newcommand{\Expec}{\mathbb{E}}
\newcommand{\defeq}{\vcentcolon=}

\newcommand{\opt}{\textsc{Opt}}
\newcommand{\pr}{\textnormal{Pr}}
\newcommand{\nrec}{\bar n}

\newcommand{\MyParagraph}[1]{\textsc{#1}}

\DeclareMathOperator*{\argmax}{arg\,max}

\newcommand{\br}{\textsf{BR}}

\newcommand{\rec}{\mathcal{R}}
\newcommand{\A}{\mathcal{A}}
\newcommand{\Z}{\mathcal{Z}}

\newcommand{\brset}{\mathcal{M}}
\newcommand{\nAct}{\varrho}

%
% two provers
%
\newcommand{\mUno}{\textsf{Merlin}$_1$}
\newcommand{\mDue}{\textsf{Merlin}$_2$}
\newcommand{\ansUno}{\xi_1}
\newcommand{\ansDue}{\xi_2}
\newcommand{\ansUnoSet}{\Xi_1}
\newcommand{\ansDueSet}{\Xi_2}
\newcommand{\art}{\textsf{Arthur}}
\newcommand{\mStrat}{\eta}
\newcommand{\mStratUno}{\eta_1}
\newcommand{\mStratDue}{\eta_2}
\newcommand{\F}{\mathcal{F}}
\newcommand{\fg}{$\textsc{FreeGame}_\delta$}
\newcommand{\ver}{\mathcal{V}}
\newcommand{\ass}{\zeta}
\newcommand{\twoProvGame}{\mathcal{G}}

%
% negative
%

\newcommand{\mssi}{$\epsilon$-\textsc{MFS}}
\newcommand{\mssiPrime}{$\epsilon'$-\textsc{MFS}}
\newcommand{\xvec}{\mathbf{x}}

\newcommand{\wvec}{\mathbf{w}}
\newcommand{\vvec}{\mathbf{v}}
\newcommand{\dvec}{\mathbf{d}}
\newcommand{\tcal}{\mathcal{T}}
\newcommand{\ical}{\mathcal{I}}
\newcommand{\pvec}{\mathbf{p}}
\newcommand{\nrow}{n_{\textnormal{row}}}
\newcommand{\ncol}{n_{\textnormal{col}}}
\newcommand{\dist}{\textnormal{dist}}
\newcommand{\type}{\mathtt{t}}
\newcommand{\auxFunc}{\mathfrak{f}}

%
% bi-criteria approx
%
\newcommand{\K}{\mathcal{K}}
\newcommand{\Pcal}{\mathcal{P}}

\title{Public Bayesian Persuasion: Being Almost Optimal and Almost Persuasive}

\author{
	Matteo~Castiglioni\\
	Politecnico di Milano\\
	\texttt{matteo.castiglioni@polimi.it}
	\And
	Andrea~Celli\\
	Politecnico di Milano\\
	\texttt{andrea.celli@polimi.it}
	\And
	Nicola~Gatti \\
	Politecnico di Milano\\
	\texttt{nicola.gatti@polimi.it} \\
}

\begin{document}

\maketitle

\begin{abstract}
{\em Persuasion} studies how an informed principal may influence the behavior of agents by the strategic provision of payoff-relevant information.
We focus on the fundamental multi-receiver model by~\citet{arieli2019private}, in which there are no inter-agent externalities. 
Unlike prior works on this problem, we study the {\em public} persuasion problem in the general setting with: (i) arbitrary state spaces; (ii) arbitrary action spaces; (iii) arbitrary sender's utility functions.
We fully characterize the computational complexity of computing a bi-criteria approximation of an optimal public signaling scheme.
In particular, we show, in a voting setting of independent interest, that solving this problem requires at least a quasi-polynomial number of steps even in settings with a binary action space, assuming the Exponential Time Hypothesis.
In doing so, we prove that a relaxed version of the \textsc{Maximum Feasible Subsystem of Linear Inequalities} problem requires at least quasi-polynomial time to be solved.
Finally, we close the gap by providing a quasi-polynomial time bi-criteria approximation algorithm for arbitrary public persuasion problems that, in specific settings, yields a QPTAS.
\end{abstract}

\section{Introduction}

{\em Information structure design} studies how to shape agents' beliefs in order to achieve a desired outcome. 
When information is incomplete, the information structure determines ``who knows what'' about the parameters determining payoff functions. 
There has been a recent surge of interest in the study of how an informed principal may influence agents' collective behavior toward a favorable outcome, via the strategic provision of payoff-relevant information.
The prescriptive problems arising in such setting are often termed {\em persuasion} or {\em signaling}.
The study of these problems has been driven by their application in domains such as auctions and online advertisement~\citep{badanidiyuru2018targeting,bro2012send,emek2014signaling}, voting~\citep{alonso2016persuading,cheng2015mixture}, traffic routing~\citep{bhaskar2016hardness,vasserman2015implementing}, recommendation systems~\citep{mansour2016bayesian}, security~\citep{xu2015exploring,xu2016signaling,rabinovich2015information}, and product marketing~\citep{babichenko2017algorithmic,candogan2019persuasion}.

Persuasion is the task faced by an informed principal---the {\em sender}---, trying to influence the behavior of the self-interested agent(s) in the game---the {\em receiver}(s).
Such a sender faces the algorithmic problem of determining the optimal information structure to further her own objectives. 
This is typically modeled through the selection of a {\em signaling scheme}, which maps the sender's parameters observations to distributions over possible signals.
A foundational model describing the persuasion problem is the Bayesian persuasion framework (BP) by~\citet{kamenica2011bayesian}.
Here, there is a sender and a single receiver.
The parameters determining the payoff functions are collectively termed the {\em state of nature}, and model exogenous stochasticity in the environment.
Their prior distribution is known to both the sender and the receiver, but the sender observes the realized state of the environment, originating a fundamental asymmetry in the information available to the two agents.
The prior distribution and the sender's signaling scheme determine the receiver equilibrium behavior.
The model assumes the sender's commitment, which is a natural assumption in many settings~\citep{kamenica2011bayesian,dughmi2017survey}.
One  argument  to  that  effect  is  that  reputation  and credibility may be a key factor for the long-term utility of the sender~\citep{rayo2010optimal}.

In practice, the sender may need to persuade {\em multiple receivers}, revealing information to each of them.
In the multiple-receiver setting, the sender may employ either {\em private} or {\em public} signaling schemes. 
In the former setting, the sender may reveal different information to each receiver through private communication channels.
In the latter, which is the focus of this paper, the sender has to reveal the same information to all receivers.
Public persuasion is well suited for settings where private communication channels are either too costly or impractical (e.g., settings with a large population of receivers, such as voting), and settings where receivers may share private information with each other, which frequently happens in practice.

This paper adopts and generalizes the multi-agent persuasion model by~\citet{arieli2019private}, which rules out the possibility of {\em inter-agent externalities}.
Specifically, each receiver's utility depends only on her own action and the realized state of nature, but not on the actions of other receivers.
This assumption allows one to focus on the key problem of coordinating the receivers' behaviors, without the additional complexity arising from externalities, which have been shown to make the problem largely intractable~\citep{bhaskar2016hardness,rubinstein2015honest}.
Our paper is the first, to the best of our knowledge, focusing on public persuasion with no inter-agent externalities and: (i) an arbitrary space of states of nature; (ii) arbitrary receivers' action spaces; (iii) arbitrary sender's utility function.
Previous works on~\citet{arieli2019private}'s model either address the private persuasion setting~\citep{arieli2019private,babichenko2016computational,dughmi2017algorithmic}, or make some structural assumptions which render them special cases of our model~\citep{xu2019tractability}.

\subsection{Context: Persuasion with Multiple Receivers}

% fare vedere che c'è una serie di lavori che affrontano problemi sempre più generali, e noi li dominiamo strettamente tutti

% dire come i modelli precedenti sono sottocasi dei nostri (vedi dughmi e xu no externalities)

\citet{dughmi2016algorithmic} analyze for the first time Bayesian persuasion from a  computational perspective, focusing on the single receiver case.
\citet{arieli2019private} introduce the model of persuasion with no inter-agent externalities. 
The authors study the setting with binary actions and state spaces, providing a characterization of the optimal signaling scheme in the case of supermodular, anonymous submodular, and supermajority sender's utility functions.
\citet{babichenko2016computational} provide a tight $1-1/e$ approximate signaling scheme for monotone submodular sender's utilities and show that an optimal private scheme for anonymous utility functions can be found efficiently.
\citet{dughmi2017algorithmic} generalize the previous model to the case of many states of nature.

Various works study public persuasion, showing that designing public signaling schemes is usually harder than with private communication channels.
\citet{bhaskar2016hardness} and \citet{rubinstein2015honest} study public signaling problems in which two receivers play a zero-sum game. 
In particular, \citet{bhaskar2016hardness} rule out an additive PTAS assuming the planted-clique hardness. 
Moreover, \citet{rubinstein2015honest} proves that, assuming the Exponential Time Hypothesis (ETH), computing an $\epsilon$-optimal signaling scheme requires at least quasi-polynomial time. 
This result is tight due to the quasi-polynomial approximation scheme by \citet{cheng2015mixture}.

A number of previous works focus on the public signaling problem in the no inter-agent externalities framework by~\citet{arieli2019private}.
\citet{dughmi2017algorithmic} rule out the existence of a PTAS even when receivers have binary action spaces, and objectives are linear, unless $\textsc{P}=\textsc{NP}$.
For this reason, most of the works focus on the computation of bi-criteria approximations in which the persuasion constraints can be violated by a small amount. 
\citet{cheng2015mixture} present a polynomial-time bi-criteria approximation algorithm for voting scenarios.
\citet{xu2019tractability} studies public persuasion with binary action spaces and proposes a PTAS with a bi-criteria guarantee for monotone submodular sender's utility functions.
Moreover, \citet{xu2019tractability} also provides, under a non-degenerate assumption, a polynomial-time algorithm to compute an optimal signaling scheme when the number of states of nature is fixed.

\subsection{Our Results and Techniques}

%The present paper differs from~\citet{arieli2019private,babichenko2016computational,dughmi2017algorithmic} since it studies public persuasion instead of private.
%%
%Moreover, we are interested in the general setting of arbitrary state and action spaces, and arbitrary sender's utility functions.

The main result of the paper is providing a tight characterization of the complexity of computing bi-criteria approximations of optimal public signaling schemes in arbitrary persuasion problems with no inter-agent externalities.
Previous works on the same model exploit specific structures of the sender's utility functions to provide polynomial-time algorithms.
Our main result is negative, showing that restricting the space of possible sender's utility functions is a necessary condition to design polynomial-time bi-criteria approximation algorithms. 
More precisely, the following result shows that it is unlikely that there exists a bi-criteria polynomial-time approximation algorithm even in simple settings with binary action spaces.

\begin{restatable}{corollary}{mainHardness}\label{th:main_hardnessGeneral}
	Assuming ETH, there exists a constant $\epsilon^\ast$ such that, for any $\epsilon\leq\epsilon^\ast$, finding a signaling scheme that is $\epsilon$-persuasive and $\alpha$-approximate requires time $n^{\tilde\Omega(\log n)}$ for any multiplicative or additive factor $\alpha$, even for binary action spaces.
\end{restatable}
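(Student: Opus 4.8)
The plan is to obtain Corollary~\ref{th:main_hardnessGeneral} as a consequence of a more refined hardness result established earlier in the paper for a concrete voting instance, combined with the ETH-hardness of the relaxed \textsc{Maximum Feasible Subsystem} problem mentioned in the abstract. The key observation that makes the statement so strong---ruling out \emph{any} approximation factor $\alpha$, multiplicative or additive---is that it suffices to exhibit a family of instances in which an optimal public signaling scheme yields strictly positive sender utility (say, normalized to $1$), while \emph{every} signaling scheme that is $\epsilon$-persuasive for $\epsilon \le \epsilon^\ast$ yields sender utility exactly $0$. On such a gap instance, distinguishing the two cases is exactly what any finite-factor approximation algorithm must do: an $\alpha$-approximate (multiplicative) algorithm returning value $\ge \alpha \cdot 1 > 0$, or an additive-$\alpha$ algorithm returning value $\ge 1-\alpha$ for small enough $\alpha$, would certify we are in the ``good'' case. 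Hence the reduction only needs to produce a $\{0,1\}$-gap, not a bounded ratio.

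The first step is therefore to recall the voting setting of independent interest that the paper builds (binary action space for each receiver---vote for one of two candidates---and a sender whose utility encodes whether a target candidate wins a majority). I would invoke the earlier theorem giving a reduction from $3$-SAT (or from the relaxed \textsc{MFS} instance) with a Karp-style promise: YES-instances admit a public scheme that is fully persuasive and makes the target win with probability $1$ (utility $1$); NO-instances are such that no $\epsilon^\ast$-persuasive scheme makes the target win with any positive probability (utility $0$). The ETH lower bound on that reduction---size $n^{\tilde\Theta(\log n)}$, inherited either from the quasi-polynomial PCP-type construction à la \citet{rubinstein2015honest} or from the relaxed \textsc{MFS} hardness the paper proves---then transfers verbatim.

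The second step is the short argument above formalized: suppose, for contradiction, that for every $\epsilon \le \epsilon^\ast$ there is an algorithm running in time $n^{o(\log n)}$ that, on any instance, outputs a scheme which is $\epsilon$-persuasive and $\alpha$-approximate (for some fixed multiplicative $\alpha \in (0,1]$, or additive $\alpha < 1$). Run it on the gap instances with persuasion slack $\epsilon^\ast$. In a YES-instance the optimum is $1$, so the returned scheme has value $\ge \alpha$ (resp.\ $\ge 1-\alpha$), which is strictly positive; in a NO-instance every $\epsilon^\ast$-persuasive scheme---in particular the returned one---has value $0$. Thus a single call, followed by evaluating the sender's utility of the returned scheme (which is polynomial-time given the scheme), decides the underlying problem in time $n^{o(\log n)}$, contradicting ETH. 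Choosing $\epsilon^\ast$ as the slack constant from the voting reduction, and noting the construction already uses binary action spaces, completes the argument.

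The main obstacle is entirely in the step I am allowed to assume, namely establishing the $\{0,1\}$-gap voting reduction itself with the quasi-polynomial ETH lower bound and making sure the slack $\epsilon^\ast$ is an absolute constant independent of the instance: one has to show both that fully persuasive schemes survive in the YES-case (a soundness-preserving rounding of the \textsc{MFS}/SAT solution into a public scheme) and that \emph{approximate} persuasiveness cannot be leveraged in the NO-case to sneak out positive utility (a robustness argument showing the voting margins are bounded away from the threshold by a constant). Given that theorem, the deduction of Corollary~\ref{th:main_hardnessGeneral} is immediate and uniform over all $\alpha$.
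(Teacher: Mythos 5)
Your proposal matches the paper's own route: Corollary~\ref{th:main_hardnessGeneral} is obtained there exactly as a direct consequence of the $k$-voting gap result (Theorem~\ref{th:main_hardness}), whose reduction from \fg{} guarantees sender value $1$ via a fully persuasive scheme in the YES-case and value $0$ for every $\epsilon$-persuasive scheme in the NO-case, so any multiplicative or additive approximation would decide the promise problem, contradicting ETH. Your gap/distinguishing argument, including the observation that this rules out all factors $\alpha$ at once on binary action spaces, is precisely the intended deduction (and note that in the additive case any $\alpha<1$ already suffices, so no ``small enough $\alpha$'' caveat is needed).
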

The proof of this result requires an intermediate step that is of independent interest and of general applicability. 
Specifically, we focus on a slight variation of the \textsc{Maximum Feasible Subsystem of Linear Inequalities} problem (\mssi)~\citep{cheng2015mixture}, where, given a linear system $A\,\xvec\geq 0$, $A\in[-1,1]^{\nrow\times\ncol}$, we look for the vector $\xvec\in\Delta_{\ncol}$ almost (i.e., except for an additive factor $\epsilon$) satisfying the highest number of inequalities (Definition~\ref{def:mssi}). 
This is a constrained version of the \textsc{Max FLS} problem previously studied by~\citet{amaldi1995complexity}, and it is commonly used in scheduling~\citep{daskalakis2014polynomial}, signaling, and mechanism design~\citep{cheng2015mixture}.
Assuming ETH, we prove that solving~\mssi~requires at least a quasi-polynomial number of steps via a reduction from two-provers games~\citep{aaronson2014multiple,deligkas2016inapproximability} (Section~\ref{subsec:free_games}). 

Then, we focus on a simple public persuasion problem where receivers are voters, and they have a binary action space since they must choose one between two candidates.
We prove an hardness result (Theorem~\ref{th:main_hardness}) for this setting which directly implies Corollary~\ref{th:main_hardnessGeneral}.
We show that the~\mssi~problem is deeply connected to the problem of computing ``good'' posteriors, as the choice of an optimal $\xvec$ in \mssi{} maps to the choice of an $\epsilon$-persuasive posterior.

In order to design an approximation algorithm, we resort to the assumption of $\alpha$-approximable utility functions for the sender, as previously defined by~\citet{xu2019tractability}.
An $\alpha$-approximable sender's utility function is such that it is possible to obtain in polynomial time a tie breaking for the receivers guaranteeing to the sender an $\alpha$-approximation of the optimal objective value.
%
%Following~\citet{xu2019tractability}, we say that a sender's utility function is $\alpha$-approximable if it is possible to obtain in polynomial time a tie breaking for the receivers guaranteeing to the sender an $\alpha$-approximation of the optimal objective value.
%
The request of $\alpha$-approximability is natural since otherwise even the problem of evaluating the sender's objective for a given posterior over the states of nature would not be tractable.
When the sender's utility function is $\alpha$-approximable, there is no hope for a better approximation than an $\alpha$-approximate signaling scheme. \textcolor{red}{}
The following result shows that it is possible to compute, in quasi-polynomial time, a bi-criteria approximation with a factor arbitrarily close to $\alpha$, i.e., the best factor that can be guaranteed on the objective value, and an arbitrary small loss in persuasiveness.
\begin{restatable*}{theorem}{qptas}\label{th:qptas}
	Assume $f$ is $\alpha$-approximate, there exists a $\textnormal{poly}\left(n^{\frac{\log(n/\delta)}{\epsilon^2}}\right)$ algorithm that outputs an $\alpha(1-\delta)$-approximate $\epsilon$-persuasive public signaling scheme.
\end{restatable*}
For 1-approximable functions, Theorem~\ref{th:qptas} yields a bi-criteria QPTAS. 
In the setting of~\citet{xu2019tractability} (i.e., binary action spaces and state-independent sender's utility function), our result automatically yields a QPTAS for any monotone sender's utility function.
In order to prove the result, we show that any posterior can be represented as a convex combination of $k$-uniform posteriors with only a small loss in the objective value.
By restricting our attention to the set of $k$-uniform posteriors, which has quasi-polynomial size, the problem can be solved via a linear program of quasi-polynomial size.

\section{Preliminaries}\label{sec:preliminaries}

This section describes the instantiation of the Bayesian persuasion framework which is the focus of this work (Section~\ref{subsec:model}), public signaling problems (Section~\ref{subsec:public_signaling}), the notion of bi-criteria approximation adopted (Section~\ref{subsec:bicrit}), and it presents an explanatory application to voting problems (Section~\ref{sec:application}).
For a comprehensive overview of the Bayesian persuasion framework we refer the reader to~\citet{kamenica2018bayesian,bergemann2019information} and~\citet{dughmi2017survey}.
~
\footnote{
Throughout the paper, the set $\{1,\ldots,x\}$ is denoted by $[x]$,  $\textnormal{int}(X)$ is the {\em interior} of set $X$, and $\Delta_X$ is the set of all probability distributions on $X$.
The indicator function for the event $\mathcal{E}$ is denoted by
$I[\mathcal{E}]$.
Bold case letters denote column vectors.
Moreover, we generally denote the size of a problem input by $n$. 
}

\subsection{Basic Model}\label{subsec:model}

Our model is a generalization of the fundamental special case introduced by \citet{arieli2019private}, i.e.,  multi-agent persuasion with no inter-agent externalities. 
We adopt the perspective of a {\em sender} facing a finite set of {\em receivers} $\rec\defeq[\nrec]$.
Each receiver $r$ has a finite set of $\nAct^r$ actions $\A^r\defeq\{a_i\}_{i=1}^{\nAct^r}$.
Each receiver's payoff depends only on her own action and on a (random) {\em state of nature} $\theta$, drawn from a finite set $\Theta\defeq\{\theta_i\}_{i=1}^d$ of cardinality $d$.
In particular, receiver $r$'s utility is specified by the function $u^r: \A^r \times\Theta\to [0,1]$.
Each receiver's utility does not depend on other receivers' actions for the no inter-agent externalities assumption~\cite{arieli2019private}.
We denote by $u_\theta^r(a^r)\in [0,1]$ the utility observed by receiver $r$ when the state of nature is $\theta$ and she plays $a^r$.
Let $\A\defeq \times_{r\in\rec} \A^r$. An action profile (i.e, a tuple specifying an action for each receiver) is denoted by $\avec\in \A$.
The sender's utility, when the state of nature is $\theta$, is described via the function $f_\theta: \A\to [0,1]$.
We write $f_\theta(\avec)$ to denote sender's payoff when receivers behave according to action profile $\avec$ and the state of nature is $\theta$.
As customary in the BP literature, $f_\theta$ is represented implicitly for each $\theta$ (see Equation~\ref{def:f_k_voting} for an example).

As it is customary in Bayesian persuasion, we assume $\theta$ is drawn from a common prior distribution $\mu\in\textnormal{int}(\Delta_\Theta)$, which is explicitly known to the sender and the receivers.
Moreover, the sender can publicly commit to a policy $\phi$ (i.e., a {\em signaling scheme}, see Section~\ref{subsec:public_signaling}) which maps states of nature to {\em signals} for the receivers.
A generic signal for receiver $r$ is denoted by $s^r$.
The interaction between the sender and the receivers goes as follows: 
\begin{enumerate} 
\item the sender commits to a publicly known signaling scheme $\phi$; 
\item the sender observes the realized state of nature $\theta \sim \mu$;
\item the sender draws $(s^r)_{r=1}^{\nrec}\sim \phi_\theta$ and communicates to each receiver $r$ the signal $s^r$;
\item each receiver $r$ observes $s^r$ and rationally updates her prior beliefs over $\Theta$ according to the Bayes rule. Then, each receiver selects an action maximizing her expected reward. 
\end{enumerate}
%
%(1) the sender commits to a publicly known signaling scheme $\phi$; 
%%
%(2) the sender observes the realized state of nature $\theta \sim \mu$;
%%
%(3) the sender draws $(s^r)_{r=1}^{n_r}\sim \phi_\theta$ and communicates to each receiver $r$ the signal $s^r$;
%%
%(4) each receiver $r$ observes $s^r$ and rationally updates her prior beliefs over $\Theta$. Then, each receiver selects an action maximizing her expected reward. 
%
Let $\avec$ be the tuple of receivers' choices. Each receiver $r$ observes payoff $u^r_{\theta}(a^r)$, and the sender observes payoff $f_\theta(\avec)$.

\subsection{Public Signaling Schemes}\label{subsec:public_signaling}

Each receiver $r$ has a set $\sset^r$ of available signals. 
A {\em signal profile} is a tuple $\svec=(s^r)_{r=1}^{\nrec}\in \sset$ specifying a signal for each receiver, where $\sset\defeq \times_{r\in\rec} \sset^r$. 
A {\em public signaling scheme} is a function $\phi:\Theta\to\sset$ mapping states of nature to signal profiles, with the constraint that each receiver has to receive the same signal.
With an overload of notation we write $s\in\sset$ for the public signal received by all receivers. 
The probability with which the sender selects $s$ after observing $\theta$ is denoted by $\phi_{\theta}(s)$.
Thus, it holds $\sum_{s\in\sset} \phi_\theta(s)=1$ for each $\theta\in\Theta$.
After observing $s\in \sset$, receiver $r$ performs a Bayesian update and infers a posterior belief $\pvec\in\Delta_\Theta$ over the states of nature as follows: the realized state of nature is $\theta$ with probability 
%\[
%p_\theta\defeq \frac{\mu_\theta \phi_\theta(s)}{\sum_{\theta\in\Theta}\phi_\theta(s)}.
%\]
$
p_\theta\defeq \mu_\theta \, \phi_\theta(s)/\sum_{\theta\in\Theta}\mu_\theta\,\phi_\theta(s).
$
Since the prior is common and all receivers observe the same $s$, they all perform the same Bayesian update and have the same posterior belief regarding the states of nature. 
After forming $\pvec$, each receiver solves a disjoint single-agent decision problem to find the action maximizing her expected utility.

A signaling scheme is {\em direct} when signals can be mapped to actions of the receivers, and interpreted as action recommendations.
Each receiver is sent a vector specifying a (possibly different) action for each other receiver, 
i.e., for each $r\in\rec$, $\sset^r=\A$. 
%
%We assume receivers break ties in favor of the sender.
%
%This guarantees a unique pure response for each receiver.
%
Moreover, a signaling scheme is {\em persuasive} if following the recommendations is an equilibrium of the underlying {\em Bayesian game}~\citep{bergemann2016bayes,bergemann2016information}.
A direct signaling scheme is persuasive if the sender's action recommendation belongs to the set $\argmax_{a\in\A^r} \sum_{\theta\in\Theta} p_\theta \, u^r_{\theta}(a)$.
A simple revelation-principle style argument shows that there always exists an optimal public signaling scheme which is both direct and persuasive~\citep{kamenica2011bayesian,arieli2019private}.
A signal in a direct signaling scheme can be equivalently expressed as an action profile $\avec\in \A$.
Therefore, there is an exponential number of such signals.
We write $\phi_\theta(\avec)$ to denote the probability with which the sender selects $s=\avec$ when the realized state of nature is $\theta$.
The problem of determining an optimal public signaling scheme which is direct and persuasive can be formulated with the following (exponentially sized) linear program (LP):
\begin{subequations}\label{eq:public_base}
\begin{align}\max_{\phi\geq 0} & \sum_{\theta\in\Theta,\avec\in\A}\,\mu_\theta\,\phi_\theta(\avec) \,f_\theta(\avec) \label{eq:public_base_obj}\\
\textnormal{s.t. } & \sum_{\theta\in\Theta}\mu_\theta \,\phi_\theta(\avec) \,\Big(u^r_\theta(a^{r})-u^r_\theta(a')\Big)\geq 0 & \forall r\in\rec ,\forall \avec\in\A,a'\in\A^r\label{eq:public_base_ic}\\
&\sum_{\avec\in \A}\phi_\theta(\avec)=1&\forall\theta\in\Theta
% &\phi_\theta(\avec)\geq 0 &\forall \theta\in\Theta,\forall \avec\in \A
\end{align} 
\end{subequations}
The sender's goal is computing the signaling scheme maximizing her expected utility (objective function~\ref{eq:public_base_obj}).
Constraints~\ref{eq:public_base_ic} force the public signaling scheme to be persuasive.

\subsection{Bi-criteria Approximation}\label{subsec:bicrit}

We say that a public signaling scheme is {\em $\epsilon$-persuasive} if the following holds for any $r\in\rec$, $\avec\in\A$, and $a'\in\A^r$:
\begin{equation}\label{eq:eps_persuasive}
\sum_{\theta\in\Theta}\,\mu_\theta \,\phi_\theta(\avec) \,\Big(u^r_\theta(a^{r})-u^r_\theta(a')\Big)\geq -\epsilon.
\end{equation}

Throughout the paper, we focus on the computation of approximately optimal signaling schemes.
Let \opt~be the optimal value of LP~\eqref{eq:public_base}, i.e., the best sender's expected revenue under public persuasion constraints.
Since $f_\theta$s are non-negative functions, we have that $\opt\geq0$.
When a signaling scheme yields an expected sender utility of at least $\alpha\,\opt$, with $\alpha\in(0,1]$, we say that the signaling scheme is  $\alpha${\em-approximate}.
When a signaling scheme yields an expected sender utility of at least $\opt-\alpha$, with $\alpha \in [0,1)$, we say that the scheme is {\em $\alpha$-optimal}.

Finally, we consider approximations which relax both the optimality and the persuasiveness constraints.
When a signaling scheme is both $\epsilon$-persuasive and $\alpha$-approximate (or $\alpha$-optimal), we say it is a {\em bi-criteria approximation}. 
We say that one such signaling scheme is {\em $(\alpha,\epsilon)$-persuasive}.

\subsection{An Application: Persuasion In Voting Problems}\label{sec:application}

In an election with a $k$-voting rule, candidates are elected if they receive at least $k\in [\nrec]$ votes.
In this setting, a sender (e.g., a politician or a lobbyist) may send signals to the voters on the basis of private information which is hidden from them.
After observing the sender's signal, each voter (i.e., the receivers) chooses one among the set of candidates.

In the following, we will employ instances of $k$-voting in which receivers have to choose one between two candidates.
Then, they have a binary action space with actions $a_0$ and $a_1$ corresponding to the choice of the first and the second candidate, respectively.
Each receiver $r$ has utility $u_\theta^r(a)\in[0,1]$ for each $a\in\{a_0,a_1\}$, $\theta\in\Theta$.
The sender's preferred candidate is the one corresponding to action $a_0$.
Therefore, her objective is maximizing the probability that $a_0$ receives more than $k$ votes.
Formally, the sender's utility function is such that $f_\theta=f$ for each $\theta$, and 
\begin{equation}\label{def:f_k_voting}
f(\avec)\defeq \begin{cases}
\begin{array}{ll}
1 & \textnormal{ if } |\{r\in\rec : a^r=a_0\}|\geq k\\
0 & \textnormal{ otherwise}
\end{array}
\end{cases}\textnormal{ for each } \avec\in\A.
\end{equation}
Moreover, let $W:\Delta_\Theta\to \mathbb{N}_0^+$ be a function returning, for a given posterior distribution $\pvec\in\Delta_\Theta$, the number of receivers such that $\sum_{\theta}p_\theta\,(u_\theta^r(a_0)-u_\theta^r(a_1))\geq 0$.
Analogously, $W_\epsilon(\pvec)$ is the number of receivers for which $\sum_{\theta}p_\theta\,(u_\theta^r(a_0)-u_\theta^r(a_1))\geq -\epsilon$.
In the above setting, we refer to the problem of finding an $\epsilon$-persuasive signaling scheme which is also $\alpha$-approximate (or $\alpha$-optimal) as $(\alpha,\epsilon)$-$k$-voting.
To further clarify this election scenario, we provide the following simple example, by \citet{castiglioni2019persuading}.
\begin{example}
	There are three voters $\rec=\{1,2,3\}$ who must select one between two candidates $\{a_0,a_1\}$.
	The sender (e.g., a politician or a lobbyist) observes the realized state of nature, drawn from the uniform distribution over $\Theta=\{A,B,C\}$, and exploits this information to help $a_0$ being elected.
	The state of nature describes the position of $a_0$ on a matter of particular interest to the voters.
	Moreover, all the voters  have a slightly negative opinion of candidate $a_1$, independently of the state of nature.
	Table~\ref{tab:example} describes the utility of the three voters.
	
%	\begin{table}[h]
%		\begin{minipage}{.45\linewidth}
%			\scriptsize
%			\includegraphics[scale=1.5]{figures/tab1}	
%			\caption{Payoffs from voting different candidates.}
%			\label{tab:example}
%		\end{minipage}
%		%
%		\hspace{1cm}
%		%
%		\begin{minipage}{.3\linewidth}
%			\centering
%			\includegraphics[scale=1.5]{figures/tab2}
%			\caption{Optimal signaling scheme.}
%			\label{tab:example_scheme}
%		\end{minipage}
%	\end{table}

	\renewcommand{\arraystretch}{1}\setlength{\tabcolsep}{2pt}
	\begin{table}[h]
		\begin{minipage}{.49\linewidth}
			\small
			\centering
			\begin{tabular}{rl||ccr|lccr|lccr}
				\toprule
				&&& \multicolumn{2}{c}{State $A$} &&& \multicolumn{2}{c}{State $B$} &&& \multicolumn{2}{c}{State $C$} \\
				&&& $a_0$ & $a_1$ &&& $a_0$ & $a_1$ &&& $a_0$ & $a_1$ \\
				\midrule
				\multirow{3}{*}{\rotatebox[origin=c]{90}{Voters}} &1 && $+1$ & $-1/4$ &&& $-1$ & $-1/4$ &&& $-1$ & $-1/4$\\
				&2 && $-1$ & $-1/4$ &&& $+1$ & $-1/4$ &&& $-1$ & $-1/4$\\
				&3 && $-1$ & $-1/4$ &&& $-1$ & $-1/4$ &&& $+1$ & $-1/4$\\
				\bottomrule
			\end{tabular}
			\caption{Payoffs from voting different candidates.}
			\label{tab:example}
		\end{minipage}
		\hspace{1cm}
		\begin{minipage}{.33\linewidth}
			\centering
			\small
			\begin{tabular}{lr||lcr lcr lcr}
				\toprule
				&& \multicolumn{8}{c}{Signals}&\\
				&&& \textsf{not A} &&& \textsf{not B} &&& \textsf{not C}&\\
				\midrule
				\multirow{3}{*}{\rotatebox[origin=c]{90}{States}}&$A$ && 0 &&& $1/2$ &&& $1/2$ &\\
				&$B$ && $1/2$ &&& 0 &&& $1/2$ &\\
				&$C$ && $1/2$ &&& $1/2$ &&& 0 &\\
				\bottomrule
			\end{tabular}
			\caption{Optimal signaling scheme.}
			\label{tab:example_scheme}
		\end{minipage}
		%	\caption{(a) Payoffs from voting different candidates. (b) Optimal signaling scheme.}
		%	\label{tab:big}
	\end{table}
	We consider a $k$-voting rule with $k=2$. 
	Without any form of signaling, all the voters would vote for $a_1$ because it provides an expected utility of $-1/4$, against $-1/3$.
	If the sender discloses all the information regarding the state of nature (i.e., with a fully informative signal), he would still get $0$ utility, since two out of three receivers would pick $a_1$ in each of the possible states.
	However, the sender can design a public signaling scheme guaranteeing herself utility 1 for each state of nature.
	Table~\ref{tab:example_scheme} describes one such scheme with arbitrary signals.
	Suppose the observed state is $A$, and that the signal is \textsf{not B}.
	Then, the posterior distribution over the states of nature is $(1/2, 0, 1/2)$. 
	Therefore, receiver 1 and receiver 3 would vote for $a_0$ since their expected utility would be 0 against $-1/4$.
	Similarly, for any other signal, two receivers vote for $a_0$. Then, the sender's expected payoff is $1$.
	We can recover an equivalent direct signaling scheme by sending a tuple with a candidates' suggestion for each voter.
	For example, \textsf{not A} would become $(a_1,a_0,a_0)$, and each voter would observe the recommendations given to the others.
\end{example}

\section{Technical Toolkit}

In this section, we describe some key results previously studied in the literature that we will exploit in the remainder of the paper. 
In particular, we summarize some of the results on two-prover games by~\citet{babichenko2015can} and~\citet{deligkas2016inapproximability} (Section~\ref{subsec:free_games}), and we describe a useful Theorem on error-correcting codes by~\citet{gilbert52} (Section~\ref{subsec:errorcorrectingcodes}).

\subsection{Two-Provers Games}\label{subsec:free_games}
%This section reviews some key concepts and results by~\citet{deligkas2016inapproximability}, building on previous works by~\citet{aaronson2014multiple} and~\citet{babichenko2015can}, which we exploit in the paper.

A {\em two-prover} game $\twoProvGame$ is a co-operative game played by two players (\mUno~and \mDue, resp.), and an adjudicator (verifier) called \art.
At the beginning of the game, \art~draws a pair of questions $(x,y)\in\mathcal{X}\times\mathcal{Y}$ according to a probability distribution $\mathcal{D}$ over the joint set of questions (i.e., $\mathcal{D}\in\Delta_{\mathcal{X}\times\mathcal{Y}}$).
\mUno{} (resp., \mDue) observes $x$ (resp., $y$) and chooses an answer $\ansUno$ (resp., $\ansDue$) from her finite set of answers $\ansUnoSet$ (resp., $\ansDueSet$).
Then, \art~declares the Merlins to have {\em won} with a probability equal to the value of a {\em verification function} $\ver(x,y,\ansUno,\ansDue)$.
A {\em strategy} for \mUno{} is a function $\mStrat_1:\mathcal{X}\to \ansUnoSet$ mapping each possible question to an answer.
Analogously, $\mStrat_2: \mathcal{Y}\to\ansDueSet$ is a strategy of \mDue.
Before the beginning of the game, \mUno~and \mDue~can agree on their pair of (possibly mixed) strategies $(\mStrat_1,\mStrat_2)$, but no communication is allowed during the games.
The payoff of a game $\twoProvGame$ under $(\mStratUno,\mStratDue)$ is defined as:
$
u(\twoProvGame,\mStratUno,\mStratDue)\defeq\Expec_{{(x,y)\sim\mathcal{D}}} [\ver(x,y,\mStratUno(x),\mStratDue(y))]
$
.
The {\em value} of a two-prover game $\twoProvGame$, denoted by $\omega(\twoProvGame)$, is the maximum expected payoff to the Merlins when they play optimally: $\omega(\twoProvGame)\defeq\max_{\mStratUno}\max_{\mStratDue} u(\twoProvGame,\mStratUno,\mStratDue)$.
The size of the game is $|\twoProvGame|=|\mathcal{X}\times\mathcal{Y}\times\ansUnoSet \times \ansDueSet|$.

%A two-prover game is called a {\em free game} if there is no correlation between the questions sent to \mUno~and \mDue, i.e., $\mathcal{D}$ is a uniform distribution over $\mathcal{X}\times\mathcal{Y}$.
%
A two-prover game is called a {\em free game} if $\mathcal{D}$ is a uniform distribution over $\mathcal{X}\times\mathcal{Y}$. This implies that there is no correlation between the questions sent to \mUno~and \mDue.
It is possible to build a family of free games mapping to 3SAT formula arising from Dinur's PCP theorem. 
We say that the size $n$ of a formula $\varphi$ is the number of variables plus the number of clauses in the formula. 
Moreover, SAT($\varphi$)$\in[0,1]$ is the maximum fraction of clauses that can be satisfied in $\varphi$.
With this notation, the Dinur's PCP Theorem reads as follows:
\begin{theorem}[Dinur's PCP Theorem~\citep{dinur2007pcp}]\label{th:dinur}
	Given any 3SAT instance $\varphi$ of size $n$, and a constant $\rho\in (0,\frac{1}{8})$, we can produce in polynomial time a 3SAT instance $\varphi'$ such that: 
	\begin{enumerate}
		\item the size of $\varphi'$ is $n\,\textnormal{polylog}(n)$;
		
		\item each clause of $\varphi'$ contains exactly 3 variables, and every variable is contained in at most $d=O(1)$ clauses;
		
		\item if $\textnormal{SAT}(\varphi)=1$, then $\textnormal{SAT}(\varphi')=1$;
		
		\item if $\textnormal{SAT}(\varphi)<1$, then $\textnormal{SAT}(\varphi')<1-\rho$.
	\end{enumerate}
\end{theorem}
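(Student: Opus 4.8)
The plan is to prove the theorem by Dinur's \emph{gap amplification} method: build $\varphi'$ through $O(\log n)$ rounds of a size-economical transformation that roughly doubles the fraction of constraints that every assignment must violate, then read the result back as a 3SAT instance. First I would move everything into the language of constraint graphs. A $2$-CSP instance is a graph $G=(V,E)$ with a constant-size alphabet $\Sigma$ and a binary constraint on each edge; write $\mathrm{UNSAT}(G)$ for the minimum over assignments $V\to\Sigma$ of the fraction of violated edges. Any 3SAT formula $\varphi$ of size $n$ becomes, in linear time, such a $G_0$ with $|V|,|E|=O(n)$, constant alphabet, $\mathrm{UNSAT}(G_0)=0$ iff $\varphi$ is satisfiable, and $\mathrm{UNSAT}(G_0)\ge 1/|E|$ otherwise (one violated clause already gives that bound). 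Conversely, a bounded-degree constraint graph over a constant alphabet with a \emph{constant} $\mathrm{UNSAT}$ value can be rewritten as a 3SAT formula whose clauses contain exactly three distinct variables --- expand each edge constraint into a constant-size CNF gadget on the $O(1)$ Boolean variables that encode the two incident $\Sigma$-values, pad short clauses with fresh dummy variables, and use the standard clause-splitting trick to reach exactly three distinct variables per clause; because the graph is $O(1)$-regular, each Boolean variable lies in only $O(1)$ gadgets and hence in $O(1)$ clauses, giving item~2, while completeness is preserved, giving the easy direction of item~3. This conversion scales the gap down only by a fixed constant. So the whole theorem reduces to: from $G_0$, produce a bounded-degree, constant-alphabet $G$ with $\mathrm{UNSAT}(G)$ bounded below by a constant comfortably above $\rho$ times that conversion loss, at size $n\,\mathrm{polylog}(n)$.

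The heart is one amplification round $G\mapsto G'$, built from three steps. \emph{(i) Preprocessing:} make $G$ $d$-regular for a constant $d$ and a good spectral expander, by replacing each vertex with an expander ``cloud'' and superimposing a constant-degree expander on $V$; this changes $\mathrm{UNSAT}$ and $|E|$ by at most constant factors and keeps perfect satisfiability. \emph{(ii) Powering:} replace $G$ by $G^{t}$ for a large constant $t$ --- same vertices, one edge per length-$t$ walk, the constraint on a walk checking the (consistently folded) values of all vertices it visits, over the still-constant alphabet $\Sigma^{d^{O(t)}}$. The crucial lemma is $\mathrm{UNSAT}(G^{t})\ge c\sqrt{t}\cdot\min\!\bigl(\mathrm{UNSAT}(G),\,1/t\bigr)$ for a universal constant $c$. \emph{(iii) Alphabet reduction:} the alphabet is now a larger constant, so compose $G^{t}$ with a constant-size \emph{assignment tester} (PCP of proximity) for that alphabet, returning to a fixed constant alphabet, multiplying size by a constant, keeping completeness, and losing only a constant factor in the gap. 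Taking $t$ a large enough constant, one round yields $\mathrm{UNSAT}(G')\ge 2\min(\mathrm{UNSAT}(G),\alpha)$ for a constant $\alpha$, with $|E(G')|\le C\,|E(G)|$.

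I would then iterate this round starting from $G_0$: after $O(\log|E_0|)=O(\log n)$ rounds the gap climbs from $\ge 1/|E_0|$ to the constant ceiling $\alpha$, and we run a further $O(1)$ rounds if needed so that, after the final 3SAT conversion with its constant loss, the remaining gap still exceeds $\rho$ --- possible precisely because $\rho<1/8$ leaves enough slack below the amplification ceiling. Perfect satisfiability survives every step (item~3), an unsatisfiable $\varphi$ ends with $\mathrm{SAT}(\varphi')<1-\rho$ (item~4), and the size accounting gives item~1, where obtaining the stated $n\,\mathrm{polylog}(n)$ rather than merely a polynomial bound requires the more careful variant of the construction with near-linear-size assignment-tester gadgets. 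The main obstacle is the soundness analysis of the powering step: given an arbitrary assignment to $G^{t}$, one extracts the ``popular'' (plurality) assignment it induces on $G$, then argues via a random-walk second-moment computation together with the expander-mixing property that a $\Theta(\sqrt t)$ fraction of length-$t$ walks must cross an edge that this assignment violates. Everything else --- the expanderization, the composition with a constant-size assignment tester, and the translation in and out of 3SAT --- consists of fixed-size gadget constructions and delicate but routine constant-chasing.
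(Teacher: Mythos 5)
The paper does not prove this statement at all: it is quoted, with citation, as Dinur's PCP theorem and used as a black box, so there is no internal proof to compare against; your proposal is a reconstruction of Dinur's gap-amplification argument. As a sketch of that argument it is faithful --- constraint graphs, preprocessing/expanderization, powering with the $c\sqrt{t}$ amplification lemma, alphabet reduction by composition with a constant-size assignment tester, and the constant-loss translation back to 3SAT with exactly three variables per clause and $O(1)$ occurrences per variable are the right ingredients, and items 2--4 would come out of this machinery (modulo actually tracking the constants behind the specific threshold $\rho<\frac{1}{8}$, which you only gesture at).

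There is, however, a genuine quantitative gap in item 1. Your plan runs $O(\log n)$ amplification rounds starting from the raw gap $1/|E_0|$, and each round multiplies the instance size by a constant $C>1$; that gives a blowup of $C^{O(\log n)}=n^{\Theta(1)}$, i.e., a polynomial-size $\varphi'$, not size $n\,\textnormal{polylog}(n)$. Your parenthetical fix --- ``near-linear-size assignment-tester gadgets'' --- does not address this: the testers are applied to constant-size constraints and are constant-size anyway; the bottleneck is the number of rounds. The standard route to the quasilinear statement (and the one behind the version cited here) is different: first pass to a quasilinear-size PCP with inverse-polylogarithmic gap (Ben-Sasson--Sudan style, size $n\,\textnormal{polylog}(n)$ and $\mathrm{UNSAT}\ge 1/\textnormal{polylog}(n)$), and only then apply Dinur amplification for $O(\log\log n)$ rounds, so the extra blowup is $C^{O(\log\log n)}=\textnormal{polylog}(n)$. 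Without that (or an equivalent device) your construction proves a weaker theorem with ``size $\textnormal{poly}(n)$'' in item 1, which would not support the $n^{\tilde\Omega(\log n)}$ lower bounds under ETH that the paper derives from this statement.
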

A 3SAT formula can be seen as a bipartite graph in which the left vertices are the variables, the right vertices are the clauses, and there is an edge between a variable and a clause whenever that variable appears in that clause.
Then, a such bipartite graph has constant degree since each vertex has constant degree. This holds because each clause has at most 3 variables and each variable is contained in at most $d$ clauses. 
A useful result on bipartite graphs is the following.
\begin{lemma}[Lemma 1 of~\citet{deligkas2016inapproximability}]\label{lemma:bipartite}
	Let $(V,E)$ be a bipartite graph with $|V|=n$, and $U$ and $W$ be the two disjoints and independent sets such that $V=U\cup W$, and where each vertex has degree at most $d$.
	Suppose that $U$ and $W$ both have a constant fraction of the vertices, i.e., $|U|=cn$ and $|W|=(1-c)n$ for some $c\in[0,1]$.
	Then, we can efficiently find a partition $\{S_i\}_{i=1}^{\sqrt{n}}$ of $U$, and a partition $\{T_i\}_{i=1}^{\sqrt{n}}$ of $W$, such that each set has size at most $2\sqrt{n}$, and for all $i$ and $j$ we have $|(S_i\times T_j) \cap E|\leq 2d^2$.
\end{lemma}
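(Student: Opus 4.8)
The plan is to construct the two partitions in two stages, reducing the edge condition to matchings via König's edge-colouring theorem so that the budget $2d^2$ is collected $2$ per matching. (A direct greedy that assigns $U$-vertices to blocks while trying to keep every $S_i\times T_j$ light does not quite reach the constant $2d^2$; the matching decomposition, where injectivity can be exploited, is what makes the clean constant work.)

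\textbf{Stage 1: decompose and fix the $U$-side.} Since $G$ is bipartite with maximum degree at most $d$, König's edge-colouring theorem partitions $E$ in polynomial time into $d$ matchings $M_1,\dots,M_d$; let $\sigma_\ell\colon U\to W$ be the partial injection induced by $M_\ell$. Now fix \emph{any} ordering $u_1,\dots,u_{|U|}$ of $U$ and let $S_i$ be the $i$-th consecutive block of size $\lceil|U|/\sqrt n\rceil\le\sqrt n+1\le 2\sqrt n$ (padding with empty sets up to $\sqrt n$ blocks). The sole purpose of this partition is to induce, for each $\ell\in[d]$ and $i\in[\sqrt n]$, the \emph{slice} $\Sigma_{\ell,i}:=\sigma_\ell(S_i\cap\mathrm{dom}\,\sigma_\ell)\subseteq W$. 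By injectivity of $\sigma_\ell$ we have $|\Sigma_{\ell,i}|\le|S_i|\le\lceil|U|/\sqrt n\rceil$, and since every $w\in W$ is matched at most once by each $M_\ell$, every $w$ lies in at most $d$ slices (at most one per matching).

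\textbf{Stage 2: choose the $W$-side by greedy colouring.} Process the vertices of $W$ one at a time, assigning each a colour in $[\sqrt n]$, which defines the blocks $T_1,\dots,T_{\sqrt n}$. When colouring $w$, forbid a colour $j$ if either (i) colour class $j$ already contains $\ge 2\sqrt n$ vertices, or (ii) some slice $\Sigma_{\ell,i}$ with $w\in\Sigma_{\ell,i}$ already has $\ge 2d$ of its vertices coloured $j$. The crucial count: at any moment, fewer than $|W|/(2\sqrt n)=(1-c)\sqrt n/2$ classes are of type (i); and for each of the at most $d$ slices containing $w$, the number of type-(ii) colours is at most $|\Sigma_{\ell,i}|/(2d)\le\lceil|U|/\sqrt n\rceil/(2d)$, so summing over those slices gives at most $\lceil|U|/\sqrt n\rceil/2\le(c\sqrt n+1)/2$ type-(ii) colours. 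Adding the two bounds, strictly fewer than $\big((1-c)\sqrt n+c\sqrt n+1\big)/2=(\sqrt n+1)/2<\sqrt n$ colours are forbidden (for $n>1$), so an admissible colour always exists. Rule (i) keeps $|T_j|\le 2\sqrt n$, and rule (ii) guarantees $|\Sigma_{\ell,i}\cap T_j|\le 2d$ for all $\ell,i,j$.

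\textbf{Collecting the bound.} For every $i,j$, injectivity of $\sigma_\ell$ gives $|(S_i\times T_j)\cap M_\ell|=|\{u\in S_i:\sigma_\ell(u)\in T_j\}|=|\Sigma_{\ell,i}\cap T_j|\le 2d$, hence $|(S_i\times T_j)\cap E|=\sum_{\ell=1}^d|(S_i\times T_j)\cap M_\ell|\le 2d^2$, and every step runs in polynomial time. I expect the main obstacle to be arriving at this construction — in particular the realisation that one should fix the $U$-partition first by arbitrary intervals and bound each slice by $\lceil|U|/\sqrt n\rceil$ rather than by the a priori weaker $2\sqrt n$, since this is exactly what makes the type-(i) and type-(ii) counts telescope to $(\sqrt n+1)/2<\sqrt n$; the use of König's theorem and the remaining arithmetic are then routine.
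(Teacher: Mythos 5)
Your argument is correct: the König edge-colouring into $d$ matchings, the pre-fixed interval partition of $U$ with slice size bounded by $\lceil|U|/\sqrt n\rceil$, the two forbidding rules in the greedy colouring of $W$, and the final count $(1-c)\sqrt n/2+(c\sqrt n+1)/2<\sqrt n$ all check out, and the collection step $|(S_i\times T_j)\cap E|=\sum_\ell|\Sigma_{\ell,i}\cap T_j|\le 2d\cdot d$ is sound (injectivity of each $\sigma_\ell$ is used correctly, and the slices are fixed before Stage~2 so the inductive invariant $|\Sigma_{\ell,i}\cap T_j|\le 2d$ holds for all triples). Note, however, that this paper does not prove the lemma at all: it is imported verbatim as Lemma~1 of \citet{deligkas2016inapproximability}, whose own proof is a direct greedy assignment of the $W$-vertices with a counting argument on edges leaving each $S_i$ (no matching decomposition). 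Your parenthetical claim that a direct greedy ``does not quite reach'' $2d^2$ is therefore too strong: if one forbids a block $T_j$ once some relevant pair already carries more than $2d^2-d$ edges, the number of blocked colours per neighbouring $S_i$ is at most $(c\sqrt n+1)d/(2d^2-d+1)\le(c\sqrt n+1)/(2d-1)$, and the same telescoping with the size constraint goes through. What your detour through König buys is cleaner bookkeeping (a budget of $2$ per matching and per slice, so the constant $2d^2$ appears as $d\times 2d$ rather than from a threshold-with-headroom calculation), at the cost of invoking an extra classical theorem; both routes are valid and run in polynomial time.
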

Lemma~\ref{lemma:bipartite} can be used to build the following free game.
\begin{definition}[Definition 2 of~\citet{deligkas2016inapproximability}]\label{def:freegame}
	Given a 3SAT formula $\varphi$ of size $n$, we define a free game $\F_\varphi$ as follows:
	\begin{enumerate}
		\item \art~applies Theorem~\ref{th:dinur} to obtain formula $\varphi'$ of size $n\,\textnormal{polylog}(n)$;
		
		\item let $m=\sqrt{n\,\textnormal{polylog}(n)}$. \art~applies Lemma~\ref{lemma:bipartite} to partition the variables of $\varphi'$ in sets $\{S_i\}_{i=1}^{m}$, and the clauses in sets $\{T_i\}_{i=1}^{m}$;
		
		\item \art~draws an index $i$ uniformly at random from $[m]$, and independently an index $j$ uniformly at random from $[m]$.
		Then, he sends $S_i$ to \mUno~and $T_j$ to \mDue;
		
		\item \mUno~responds by choosing a truth assignment for each variable in $S_i$, and \mDue~responds by choosing a truth assignment to every variable that is involved with a clause in $T_j$;
		
		\item \art~awards the Merlins payoff 1 if and only if the following conditions are both satisfied:
			\begin{itemize}
				\item \mDue's assignment satisfies all clauses in $T_j$;
				%\item The two Merlins' assignments are {\em compatible}, i.e., for each variable $v$ appearing in $S_i$, there exists a clause of $T_j$ such that \mUno's assignment to $v$ agrees with \mDue's assignment to $v$.
				\item the two Merlins' assignments are {\em compatible}, i.e., for each variable $v$ appearing in $S_i$ and each clause in $T_j$ that contains $v$, \mUno's assignment to $v$ agrees with \mDue's assignment to $v$;
			\end{itemize}
		\art~awards payoff 0 otherwise.
	\end{enumerate}
\end{definition}

When computing Merlins' awards, the second condition is always satisfied when $S_i$ and $T_j$ share no variables.
Moreover, when \mUno's and \mDue's assignments are not compatible, we say that they are {\em in conflict}.

The following lemma shows that, if $\varphi$ is unsatisfiable, then the value of $\F_\varphi$ is bounded away from 1.
\begin{lemma}[Lemma 2 by~\citet{deligkas2016inapproximability}]\label{lemma:free_game_unsatisfiable}
	Given a 3SAT formula $\varphi$, the following holds:
	\begin{itemize}
		\item if $\varphi$ is satisfiable then $\omega(\F_\varphi)=1$;
		\item if $\varphi$ is unsatisfiable then $\omega(\F_\varphi)\leq 1-\rho/2d$.
	\end{itemize}
\end{lemma}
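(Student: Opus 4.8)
\emph{Plan.} I would prove the two bullets separately. The first (completeness) is immediate; all the work is in the second. The idea for soundness is: reduce to pure strategies; observe that a pure strategy of \mUno{} is nothing but a global truth assignment $\sigma$ to the variables of the amplified formula $\varphi'$; use the soundness part of Dinur's theorem to deduce that $\sigma$ leaves a constant fraction of the clauses of $\varphi'$ unsatisfied; and finally show that each unsatisfied clause forces the Merlins to lose on at least one question pair, while no question pair can be ``charged'' by too many unsatisfied clauses, by the incidence bound of Lemma~\ref{lemma:bipartite}.

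\emph{Completeness.} If $\varphi$ is satisfiable then, by the third property of Theorem~\ref{th:dinur}, $\textnormal{SAT}(\varphi')=1$, so some assignment $\sigma^\ast$ satisfies every clause of $\varphi'$. Let both Merlins answer according to $\sigma^\ast$: \mUno{} returns $\sigma^\ast$ restricted to $S_i$ on question $S_i$, and \mDue{} returns $\sigma^\ast$ restricted to the variables of the clauses in $T_j$ on question $T_j$. Then \mDue's answer satisfies all clauses in $T_j$ and the two answers agree on every shared variable, so \art{} awards payoff $1$ on every pair of questions; hence $\omega(\F_\varphi)=1$.

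\emph{Soundness.} Suppose $\varphi$ is unsatisfiable. For a fixed strategy of one Merlin the payoff is linear in the mixed strategy of the other, so $\omega(\F_\varphi)$ is attained at a pair of pure strategies and it suffices to bound the winning probability of an arbitrary pure pair. Since $\{S_i\}_i$ partitions the variables of $\varphi'$, a pure strategy of \mUno{} is just a global assignment $\sigma$; by the fourth property of Theorem~\ref{th:dinur}, $\sigma$ violates a set $C$ of clauses of $\varphi'$ with $|C|\ge\rho N_c$, where $N_c$ is the number of clauses of $\varphi'$ — and since each clause has exactly three variables and (we may assume) every variable occurs in some clause, $N_c$ is a constant fraction of the size $m^2=n\,\textnormal{polylog}(n)$ of $\varphi'$. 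Now fix \mDue's pure strategy, write $\tau_j$ for her answer on $T_j$, and — assuming $\tau_j$ satisfies all clauses of $T_j$, since otherwise the Merlins lose on $(i,j)$ for all $i$ — let $D_j$ be the set of variables occurring in clauses of $T_j$ on which $\tau_j$ disagrees with $\sigma$. Two observations drive the argument: $(i)$ for every $c\in C\cap T_j$, as $\tau_j$ satisfies $c$ but $\sigma$ does not, $D_j$ contains a variable of $c$, so $D_j$ is a hitting set for $C\cap T_j$; and $(ii)$ if $v\in D_j$ lies in block $S_i$, then $(i,j)$ is a losing pair, because $v$ exhibits an incompatibility of \mUno's and \mDue's answers on the clause of $T_j$ that contains $v$. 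Hence the losing pairs include $\{(i(v),j):v\in D_j\}$ for each $j$, where $i(v)$ is the index of the block containing $v$. To count distinct losing pairs, fix $j$ and map each $c\in C\cap T_j$ to a variable $v(c)\in D_j\cap c$ and then to $i(v(c))$; clauses sent to the same index $i$ yield distinct edges of $(S_i\times T_j)\cap E$, so by Lemma~\ref{lemma:bipartite} at most $2d^2$ clauses land on any $i$, and there are at least $|C\cap T_j|/(2d^2)$ losing pairs with second coordinate $j$. Summing over $j$ gives at least $|C|/(2d^2)\ge\rho N_c/(2d^2)$ losing pairs out of $m^2$ equally likely ones; as $N_c$ is a constant fraction of $m^2$, a constant fraction $\Omega(\rho/d^2)$ of all pairs are losing, so $\omega(\F_\varphi)\le 1-\Omega(\rho/d^2)$, and tightening the three constants in this count (the incidence bound of Lemma~\ref{lemma:bipartite}, the number of clauses charged to a single pair, and the clause-to-size ratio of $\varphi'$) yields the stated bound $1-\rho/2d$.

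\emph{Main obstacle.} The conceptual content — observations $(i)$ and $(ii)$ together with the incidence bound — is short; the genuinely delicate points are the constant-tracking needed to land exactly on $\rho/2d$ rather than a weaker $\Omega(\rho/d^2)$, and the asymmetry that only \mUno{} is forced to play a globally consistent assignment while \mDue's answers on different blocks $T_j$ need not be mutually consistent — which is exactly why the proof works block by block through the hitting-set observation $(i)$ instead of extracting a single global assignment from \mDue's strategy.
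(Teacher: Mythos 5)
First, note that the paper does not actually prove this statement: it is imported verbatim as Lemma~2 of \citet{deligkas2016inapproximability}, so there is no in-paper proof to compare against; your proposal has to stand on its own. Its skeleton is the right one and matches the standard argument: completeness is exactly as you say, and for soundness it is legitimate to pass to pure strategies, to identify \mUno's pure strategy with a global assignment $\sigma$ (since the $S_i$ partition the variables), to invoke the soundness of Theorem~\ref{th:dinur}, and to turn each clause violated by $\sigma$ into a losing question pair via your observations (i) and (ii) together with the incidence bound of Lemma~\ref{lemma:bipartite}. Up to that point the reasoning is correct.

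The genuine gap is in the last step, where you claim the stated constant. Your charging scheme gives at least $|C\cap T_j|/(2d^2)$ losing pairs per $j$, hence at least $\rho N_c/(2d^2)$ losing pairs overall, i.e. $\omega(\F_\varphi)\le 1-\rho N_c/(2d^2m^2)$. Since $m^2=N_v+N_c$ and each clause has three variables with each variable in at most $d$ clauses, one only gets $N_c/m^2\in[1/4,\,d/(d+3)]$, so the bound you actually establish is of the form $1-\Theta(\rho/d^2)$ (about $1-\rho/(8d^2)$), and the closing remark that ``tightening the three constants'' yields $1-\rho/(2d)$ is not substantiated. Indeed it cannot be fixed within this accounting: even granting $N_c\le m^2$ in full, the $2d^2$ incidence bound caps your method at $1-\rho/(2d^2)$, which is strictly weaker than $1-\rho/(2d)$ for every $d>1$, and the $2d^2$ charge per pair is tight for your scheme (a single pair $(S_i,T_j)$ can absorb up to $2d^2$ violated clauses). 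So as a proof of the lemma with its stated bound there is a missing ingredient --- one must either follow the original accounting of \citet{deligkas2016inapproximability} or redo the count differently. The damage is limited, though: any constant gap away from $1$ (such as your $\Omega(\rho/d^2)$) would suffice for the way the lemma is used downstream in this paper, at the price of changing the explicit constants $\delta$, $\epsilon^\ast$, etc.
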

We prove the following original result, which follows from Lemma~\ref{lemma:free_game_unsatisfiable}.
\begin{restatable}{lemma}{lemmaMerlin}\label{lemma:merlin_conflict}
	Given a 3SAT formula $\varphi$,
	if $\varphi$ is unsatisfiable, then for each (possibly randomized) \mDue's strategy $\mStratDue$ there exists a set $S_i$ such that each \mUno's assignment to variables in $S_i$ is in conflict with \mDue's assignment with probability at least $\rho/2d$.
\end{restatable}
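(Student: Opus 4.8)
The plan is to derive Lemma~\ref{lemma:merlin_conflict} from Lemma~\ref{lemma:free_game_unsatisfiable} by a simple averaging/contrapositive argument over the random index $i$. Fix any (possibly randomized) strategy $\mStratDue$ for \mDue. For each question $S_i$ sent to \mUno, let $q_i$ denote the probability (over \mUno's randomization and \mDue's randomization) that \mUno's assignment to the variables in $S_i$ is \emph{in conflict} with \mDue's assignment. The claim to prove is that $\max_i q_i \geq \rho/2d$. Suppose toward a contradiction that $q_i < \rho/2d$ for every $i\in[m]$.

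Under this assumption I would construct a good joint Merlin strategy and contradict the value bound $\omega(\F_\varphi)\leq 1-\rho/2d$ from Lemma~\ref{lemma:free_game_unsatisfiable}. The idea is: since $\varphi$ is unsatisfiable, Lemma~\ref{lemma:free_game_unsatisfiable} still applies, so the Merlins cannot win with probability exceeding $1-\rho/2d$; but if conflicts are rare I can lower-bound their winning probability. Concretely, for \mUno{} I would pick the best deterministic response to each $S_i$ — specifically, a response that is consistent with some globally fixed ``reference'' assignment $\tau^\ast$ to all variables of $\varphi'$ (any fixed total assignment works), so \mUno{} answers on $S_i$ by restricting $\tau^\ast$. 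For \mDue{} I keep $\mStratDue$. Then the payoff is $1$ exactly when \mDue{} satisfies $T_j$ \emph{and} there is no conflict on the shared variables. The subtlety is that $\mStratDue$ was an arbitrary strategy, not necessarily a good one, so I cannot directly say \mDue{} satisfies many clauses; hence this particular construction does not immediately close the loop.

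The cleaner route — and the one I would actually pursue — is to argue the contrapositive directly in terms of the given $\mStratDue$. Take the $\mStratDue$ from the lemma statement as given and assume $q_i<\rho/2d$ for all $i$. I would then exhibit \emph{some} \mUno{} strategy, paired with this same $\mStratDue$, achieving payoff $>1-\rho/2d$, contradicting Lemma~\ref{lemma:free_game_unsatisfiable}. To build such a \mUno{} strategy, observe that whether \mDue{} satisfies $T_j$ depends only on $j$ and $\mStratDue$; call the set of ``good'' $j$'s (those where \mDue{} satisfies $T_j$ with decent probability) $G$. For $j\notin G$ nothing \mUno{} does matters. For $j\in G$, \mUno{} wants to avoid conflicts on the variables shared between $S_i$ and $T_j$. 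By Lemma~\ref{lemma:bipartite} each pair $(S_i,T_j)$ shares at most $2d^2$ variables, and one can pick \mUno's assignment on $S_i$ to be consistent with \mDue's \emph{marginal} assignment on those few shared variables with good probability; averaging the conflict probability over $i$ and $j$ and using the per-$i$ bound $q_i<\rho/2d$ then yields total winning probability strictly above $1-\rho/2d$. Putting these pieces together contradicts Lemma~\ref{lemma:free_game_unsatisfiable}, so some $S_i$ must have conflict probability at least $\rho/2d$.

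The main obstacle I anticipate is handling the interaction between the two winning conditions correctly: the conflict event and the ``$\mStratDue$ satisfies $T_j$'' event are coupled through \mDue's randomization, so I need to be careful that ``small conflict probability for every $i$'' really does translate, after averaging over $j$, into the Merlins beating the $1-\rho/2d$ threshold without accidentally double-counting the failure modes. I would manage this by working with the expectation $\Expec_{i,j}[\text{win}] = \Expec_{i,j}[\,\mathds{1}[\mStratDue\text{ sat }T_j]\cdot\mathds{1}[\text{no conflict on }S_i,T_j]\,]$, bounding $1-\text{win} \leq \mathds{1}[\mStratDue\text{ fails }T_j] + \mathds{1}[\text{conflict}]$ pointwise, and then taking expectations so the conflict term contributes at most $\Expec_i[q_i] < \rho/2d$ while the other term, combined with an optimal choice of \mUno{} consistent with a best assignment, is exactly what Lemma~\ref{lemma:free_game_unsatisfiable} controls — giving the desired contradiction.
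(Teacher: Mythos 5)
Your proposal has a genuine gap, and it comes from trying to treat the two rejection causes separately. Read with ``conflict'' meaning only incompatibility of the two assignments, the statement cannot be established the way you plan (and is in fact false under that literal reading): let \mDue{} answer every $T_j$ with the restriction of one fixed global assignment $\tau$; then \mUno{} can answer every $S_i$ with the restriction of $\tau$, so every conflict probability is $0$, and since $\varphi'$ is unsatisfiable the verifier still rejects often --- but only through the ``\mDue{} fails to satisfy $T_j$'' condition. So the lemma must be read (as the paper's proof, and its later use in Theorem~\ref{th:main_hardness}, make clear) in terms of the probability that \art{} rejects, i.e.\ $\ver=0$, not of the conflict event alone. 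Your closing step is exactly where this bites: after bounding $1-\mathrm{win}$ pointwise by the sum of the two failure indicators, you assert that the term $\Pr[\mStratDue\text{ fails }T_j]$ is ``exactly what Lemma~\ref{lemma:free_game_unsatisfiable} controls''. It is not: that lemma only upper-bounds the optimal joint value $\omega(\F_\varphi)$; it gives no bound whatsoever on how often an arbitrary fixed $\mStratDue$ satisfies $T_j$. If that failure probability is large, your lower bound on the winning probability falls below $1-\rho/2d$ and no contradiction arises, even though all conflict probabilities may be zero. The side constructions (a reference assignment $\tau^\ast$, matching \mDue's marginals on the at most $2d^2$ shared variables from Lemma~\ref{lemma:bipartite}) do not repair this, and your definition of $q_i$ also quietly fixes a \mUno{} strategy, whereas the negated statement requires, for each $i$, only that \emph{some} \mUno{} answer has small failure probability, i.e.\ the relevant quantity is a max over $i$ of a min over \mUno's answers.

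The paper's proof never separates the failure modes: it defines the conditional value $\omega(\F_\varphi,\mStratDue\mid S_i)\defeq\max_{\mStratUno}\Expec_{T_j}[\ver(S_i,T_j,\mStratUno,\mStratDue)]$, notes that, since questions are uniform and independent and \mUno{} may best-respond to each $S_i$ separately, $\omega(\F_\varphi)=\max_{\mStratDue}\frac{1}{m}\sum_{i}\omega(\F_\varphi,\mStratDue\mid S_i)\geq \max_{\mStratDue}\min_{i}\omega(\F_\varphi,\mStratDue\mid S_i)$, and then applies Lemma~\ref{lemma:free_game_unsatisfiable} to bound the left-hand side by $1-\rho/2d$; hence for every $\mStratDue$ there is an $S_i$ on which every \mUno{} answer is accepted with probability at most $1-\rho/2d$, which is the rejection-probability statement actually used later. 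If you want to keep your contrapositive framing, the fix is to define $q_i$ as the \emph{rejection} probability under \mUno's best response to $S_i$ (bundling unsatisfied clauses and conflicts together); then ``$q_i<\rho/2d$ for all $i$'' immediately yields a joint strategy of value exceeding $1-\rho/2d$ and the desired contradiction --- but at that point you have reproduced the paper's averaging argument rather than a genuinely different route.
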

\begin{proof}
	Let $\omega(\F_\varphi,\mStratDue|S_i)$ be the probability with which \art~accepts Merlin's answers when \mUno~receives $S_i$, and \mDue~follows strategy $\mStratDue$. Formally:
	\[
	\omega(\F_\varphi,\mStratDue|S_i)\defeq\max_{\mStratUno}\Expec_{T_i}[\ver(S_i,T_i,\mStratUno,\mStratDue)].
	\]
	By definition of the value of a free game, we have:
	\[
	\omega(\F_\varphi)=\frac{1}{m}\max_{\mStratDue}\sum_{S_i}\omega(\F_\varphi,\mStratDue|S_i) \geq \max_{\mStratDue}\min_{S_i}\omega(\F_\varphi,\mStratDue|S_i).
	\]
	Then, by Lemma~\ref{lemma:free_game_unsatisfiable}, this results in:
	\[
	\max_{\mStratDue}\min_{S_i}\omega(\F_\varphi,\mStratDue|S_i)\leq 1-\frac{\rho}{2d},
	\]
	which proves our statement.
\end{proof}

We define \fg~as a particular problem within the class of {\em promise problems} (see, e.g.,~\citet{even1984complexity,goldreich2006promise}).
\begin{definition}[\fg]\label{def:free_game_decision}~\\
\vspace{-.4cm}
\begin{itemize}
	\item \texttt{INPUT}: a free game $\F_\varphi$ and a constant $\delta>0$.
	\item \texttt{OUTPUT}: \textsc{Yes}-instances: $\omega(\F_\varphi)=1$;  \textsc{No}-instances: $\omega(\F_\varphi)\le1-\delta$.
\end{itemize}
\end{definition}
Finally, the following lower bound will be exploited in the remainder of the paper.
We will need to assume the {\em Exponential Time Hypothesis} (ETH), which conjectures that any deterministic algorithm solving 3SAT requires $2^{\Omega(n)}$ time.
\begin{theorem}(Theorem 2 by~\citet{deligkas2016inapproximability})
	Assuming ETH, there exists a constant $\delta=\rho/2d$ such that \fg~requires time $n^{\tilde \Omega(\textnormal{log}n)}$.\footnote{$\tilde\Omega$ hides polylogarithmic factors.}
\end{theorem}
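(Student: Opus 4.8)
Following~\citet{deligkas2016inapproximability}, the idea is a gap-preserving reduction from 3SAT followed by a careful size accounting that turns the exponential ETH lower bound into a quasi-polynomial one. Given a 3SAT instance $\varphi$ of size $n$, apply Dinur's PCP Theorem (Theorem~\ref{th:dinur}) with a fixed $\rho\in(0,\tfrac18)$ to obtain, in time $\textnormal{poly}(n)$, a 3SAT instance $\varphi'$ of size $N=n\,\textnormal{polylog}(n)$ with bounded variable occurrences $d=O(1)$ and the gap $\textnormal{SAT}(\varphi)=1\Rightarrow\textnormal{SAT}(\varphi')=1$, $\textnormal{SAT}(\varphi)<1\Rightarrow\textnormal{SAT}(\varphi')<1-\rho$. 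From $\varphi'$ build the free game $\F_\varphi$ of Definition~\ref{def:freegame}: use Lemma~\ref{lemma:bipartite} on the constant-degree variable–clause bipartite graph to split the variables into $m=\sqrt N$ blocks $\{S_i\}$ and the clauses into $m$ blocks $\{T_j\}$, let \art~send a uniformly random pair $(S_i,T_j)$, let \mUno~answer a truth assignment to the variables in $S_i$ and \mDue~a truth assignment to the variables occurring in clauses of $T_j$, and accept iff \mDue~satisfies $T_j$ and the two assignments agree on shared variables.

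Two facts make this work. First, by Lemma~\ref{lemma:free_game_unsatisfiable}, $\omega(\F_\varphi)=1$ if $\varphi$ is satisfiable and $\omega(\F_\varphi)\le 1-\rho/2d$ otherwise; hence every $\F_\varphi$ produced by the reduction already satisfies the promise of \fg~with $\delta=\rho/2d$, so a \fg-solver may legitimately be run on it, and its \textsc{Yes}/\textsc{No} answer decides the satisfiability of $\varphi$ (by the contrapositive of the soundness case). Second — and this is where the quasi-polynomial bound originates — the game has size only $s\defeq|\F_\varphi|=|\mathcal{X}\times\mathcal{Y}\times\ansUnoSet\times\ansDueSet|=2^{\Theta(\sqrt N)}$: indeed $|\mathcal{X}|=|\mathcal{Y}|=\sqrt N$, while each $S_i$ and each $T_j$ has at most $2\sqrt N$ elements by Lemma~\ref{lemma:bipartite}, so $|\ansUnoSet|\le 2^{2\sqrt N}$ and $|\ansDueSet|\le 2^{6\sqrt N}$ (three variables per clause). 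Writing $\F_\varphi$ down together with the truth table of $\ver$ takes $\textnormal{poly}(s)=2^{\Theta(\sqrt N)}$ time.

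It remains to compose running times. Taking logarithms in $s=2^{\Theta(\sqrt N)}$ and substituting $N=n\,\textnormal{polylog}(n)$ yields $(\log s)^2=\Theta(n\,\textnormal{polylog}(n))$, i.e.\ $n$ equals $(\log s)^2$ up to polylogarithmic factors. Suppose some algorithm solved \fg~in time asymptotically below $s^{\tilde\Omega(\log s)}$ (that is, with an exponent $o(\log s)$ up to polyloglog refinements). Chaining it after the reduction would decide 3SAT of size $n$ in time $\textnormal{poly}(s)+s^{o(\log s)}$; the first term is $s^{O(1)}=2^{O(\log s)}$ and, by the displayed identity, the second is $2^{o((\log s)^2)}=2^{o(n)}$, so the whole bound is $2^{o(n)}$, contradicting ETH. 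Hence \fg~requires time $s^{\tilde\Omega(\log s)}=n^{\tilde\Omega(\log n)}$ after renaming the instance size to $n$, with $\delta=\rho/2d$.

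\textbf{Main obstacle.} The delicate part is the exponent bookkeeping, not the combinatorics: one must verify that the $\textnormal{polylog}(n)$ blow-up in Dinur's theorem degrades the lower bound only from the ``ideal'' $n^{\Omega(\log n)}$ to $n^{\tilde\Omega(\log n)}$ — this polylog loss is exactly what forces the tilde — and that the $2^{\Theta(\sqrt N)}$ cost of \emph{constructing} $\F_\varphi$ is dominated by, rather than in tension with, the ETH bound $2^{\Omega(n)}$ (it is, since $2^{\Theta(\sqrt N)}=2^{o(n)}$). A secondary, easy-to-miss point is that the constructed family $\{\F_\varphi\}$ must land strictly inside the promise — value exactly $1$ or at most $1-\delta$, never in the gap — which is precisely what Lemma~\ref{lemma:free_game_unsatisfiable} guarantees; without it a \fg-solver would carry no obligation on these instances.
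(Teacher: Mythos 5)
This theorem is imported verbatim from \citet{deligkas2016inapproximability} (their Theorem~2); the paper states it without proof, so there is no internal argument to compare against, and your reconstruction is the standard one, built from exactly the ingredients the paper assembles for this purpose (Theorem~\ref{th:dinur}, Lemma~\ref{lemma:bipartite}, Definition~\ref{def:freegame}, Lemma~\ref{lemma:free_game_unsatisfiable}) — so the approach is right and the argument is essentially correct. One caution on the bookkeeping: the displayed step $2^{o((\log s)^2)}=2^{o(n)}$ is not literally valid, because $(\log s)^2=\Theta(n\,\textnormal{polylog}(n))$; it only goes through after using the slack hidden in the $\tilde\Omega$, i.e., negating the claimed bound yields an exponent $o\bigl(\log s/\textnormal{polylog}(\log s)\bigr)$ with $\textnormal{polylog}(\log s)=\textnormal{polylog}(n)$, whose product with $\log s$ is $o(n)$ for a suitable polylog power — precisely the point you flag in your ``main obstacle'' paragraph, but it should be carried through the main chain rather than left as a remark.
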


\subsection{Error-Correcting Codes}\label{subsec:errorcorrectingcodes}

%
% In this section, we introduce error-correcting codes and some of their basic properties (see~\citet{richardson2008modern} for further details).
%
A {\em message} of length $k\in\mathbb{N}_+$ is encoded as a {\em block} of length $n\in\mathbb{N}_+$, with $n\geq k$.
A {\em code} is a mapping $e:\{0,1\}^k \to \{0,1\}^n$. 
Moreover, let $\dist(e(x),e(y))$ be the {\em relative Hamming distance} between $e(x)$ and $e(y)$, which is defined as the Hamming distance weighted by $1/n$.
The {\em rate} of a code is defined as $R=\frac{k}{n}$. 
Finally, the {\em relative distance} $\dist(e)$ of a code $e$ is the maximum value $d$ such that $\dist(e(x),e(y))\ge d$ for each $x,y\in \{0,1\}^k$.

In the following, we will need an infinite sequence of codes $\mathcal{E}\defeq\{e_k:\{0,1\}^k\to\{0,1\}^n\}_{k\in\mathbb{N_{+}}}$ containing one code $e_k$ for each possible message length $k$.
The following result, due to~\citet{gilbert52}, can be used to construct an infinite sequence of codes with constant rate and distance.

\begin{theorem}[Gilbert-Varshamov Bound] \label{th:code}
	For every $k \in \mathbb{N}_+$, $0 \le d < \frac{1}{2}$ and $n \ge\frac {k}{1-\mathcal{H}_2(d)}$, there exists a code $e:\{0,1\}^k\to\{0,1\}^n$ with $\dist(e)=d$, where 
	\[
	\mathcal{H}_2(d)\defeq d \log_2\left(\frac{1}{d}\right) + (1-d)\log_2\left(\frac{1}{1-d}\right).
	\]
	Moreover, it can be computed in time $2^{O(n)}$.
\end{theorem}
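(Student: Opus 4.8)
The plan is to prove the bound by the classical greedy sphere-covering argument. Constructing a code $e:\{0,1\}^k\to\{0,1\}^n$ with $\dist(e)\ge d$ is equivalent to exhibiting a set $C\subseteq\{0,1\}^n$ of $2^k$ points that are pairwise at Hamming distance at least $dn$: once such a $C$ is available, fixing any bijection between $\{0,1\}^k$ and $C$ yields the desired code. So the entire task reduces to a packing statement in the Hamming cube.

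For the construction I would proceed greedily: start with $C=\emptyset$, and repeatedly add to $C$ any point $z\in\{0,1\}^n$ whose Hamming distance to every element already in $C$ is at least $dn$, stopping when no such point exists. When the process halts with $|C|=M$, every point of $\{0,1\}^n$ lies within Hamming distance $\lceil dn\rceil-1$ of some codeword, i.e. the Hamming balls of radius $\lceil dn\rceil-1$ centered at the points of $C$ cover the cube. Hence $M\cdot V(n,\lceil dn\rceil-1)\ge 2^n$, where $V(n,r)=\sum_{i=0}^{r}\binom{n}{i}$ is the volume of a Hamming ball of radius $r$, so it suffices to upper bound this volume.

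The key estimate is the entropy bound $V(n,\lfloor dn\rfloor)\le 2^{\mathcal{H}_2(d)\,n}$, which holds because $d<\frac12$; I would derive it in one line from $1=(d+(1-d))^n\ge\sum_{i\le dn}\binom{n}{i}d^i(1-d)^{n-i}$ together with $d/(1-d)\le 1$, which lets us lower bound $d^i(1-d)^{-i}$ by its value at $i=dn$. Since $\lceil dn\rceil-1\le\lfloor dn\rfloor$, combining with the covering inequality gives $M\ge 2^{n}/2^{\mathcal{H}_2(d)n}=2^{n(1-\mathcal{H}_2(d))}\ge 2^{k}$, where the last step uses the hypothesis $n\ge k/(1-\mathcal{H}_2(d))$. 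Thus the greedy set $C$ has at least $2^k$ codewords; retaining any $2^k$ of them produces a code with minimum relative distance at least $d$. (The statement asks for $\dist(e)=d$ exactly; strictly the construction guarantees $\dist(e)\ge d$, with $d$ attained up to the discretization granularity $1/n$, but in the way the result is later used only a constant lower bound on the relative distance is needed, so this is immaterial.)

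Finally, for the running time: the greedy procedure performs at most $2^n$ insertions, and each candidate point is tested against at most $2^k\le 2^n$ current codewords via an $O(n)$-time distance computation, so the construction runs in $2^{O(n)}$ time. The main obstacle is obtaining the ball-volume bound with exactly the exponent $\mathcal{H}_2(d)$; once that estimate is in hand, the rest of the argument is bookkeeping.
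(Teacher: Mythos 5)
Your proof is correct: the greedy sphere-covering construction combined with the entropy upper bound $V(n,\lfloor dn\rfloor)\le 2^{\mathcal{H}_2(d)n}$ is the standard argument for the Gilbert--Varshamov bound, and since the paper offers no proof of Theorem~\ref{th:code} (it simply cites Gilbert, 1952), your argument is essentially the one underlying the cited result, with the exhaustive greedy search giving the stated $2^{O(n)}$ construction time. Your caveat about $\dist(e)=d$ versus $\dist(e)\ge d$ is also appropriate, as the paper only ever uses a lower bound on the relative distance (e.g., $\dist(e)\ge 1/5$ in the reductions), so the discretization issue is immaterial.
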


\section{Maximum $\epsilon$-Feasible Subsystem of linear inequalities} 

Given a system of linear inequalities $A\,\mathbf{x}\geq 0$ with $A\in [-1,1]^{\nrow\times\ncol}$ and $\xvec\in\Delta_{\ncol}$, we study the problem of finding the largest subsystem of linear inequalities that violates the constraints of at most $\epsilon$.
As we will show in Section~\ref{sec:hardness}, this problem presents some deep connections with the problem of determining {\em good} posteriors in signaling problems. 
\begin{definition}[\mssi]\label{def:mssi}
	Given a matrix $A\in [-1,1]^{\nrow \times \ncol}$, let $\xvec^\ast\in\Delta_{\ncol}$ be the probability vector maximizing
	$
	k^\ast\defeq \sum_{i\in[\nrow]}I[w_i^\ast\geq 0],
	$
	where $\wvec^\ast\defeq A\,\xvec^\ast$.
	The problem of finding the {\em maximum $\epsilon$-feasible subsystem of linear inequalities} (%$\epsilon$-
	\mssi) amounts to finding a probability vector $\xvec\in\Delta_{\ncol}$ such that, by letting $\wvec= A\,\xvec$, it holds: $\sum_{i\in[\nrow]}I[w_i\geq -\epsilon]\geq k^\ast$.
\end{definition}

This problem is previously studied by~\citet{cheng2015mixture}.
They design a bi-criteria PTAS for the \mssi{} problem guaranteeing the satisfaction of at least $k^\ast-\epsilon\,\nrow$ inequalities.
Initially, we show that \mssi{} can be solved in $n^{O(\log n)}$ steps.
We introduce the following auxiliary definition.
\begin{definition}[{\em $k$-uniform distribution}]\label{def:k_distribution}
%	A probability distribution $\xvec\in \Delta_X$ is {\em $k$-uniform} if, for each $i\in[|X|]$, $x_i$ is a multiple of $1/k$.
%
A probability distribution $\xvec\in\Delta_X$ is {\em $k$-uniform} if and only if it is the average of a multiset of $k$ basis vectors in $|X|$-dimensional space.
\end{definition}
Equivalently, each entry $x_i$ of a $k$-uniform distribution has to be a multiple of $1/k$.
Then, the following result holds.
\begin{theorem}\label{th:mssi_upper}
	\mssi~can be solved in $n^{O(\log n)}$ steps.
\end{theorem}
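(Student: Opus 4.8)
The plan is to reduce the search for an $\epsilon$-feasible $\xvec$ to a search over $k$-uniform distributions for a suitable $k = O(\log n / \epsilon^2)$, which is a set of quasi-polynomial size that can be enumerated exhaustively. First I would invoke the standard sampling/concentration argument: let $\xvec^\ast \in \Delta_{\ncol}$ be the optimal solution satisfying $k^\ast$ inequalities exactly (i.e.\ $w_i^\ast = (A\xvec^\ast)_i \geq 0$ for $i$ in some set $K$ with $|K| = k^\ast$). Draw $k$ i.i.d.\ samples $j_1,\dots,j_k$ from the distribution $\xvec^\ast$ over $[\ncol]$, and let $\hat\xvec$ be the resulting empirical ($k$-uniform) distribution, i.e.\ $\hat\xvec = \frac1k\sum_{\ell=1}^k \mathbf{e}_{j_\ell}$. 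For each fixed row $i$, the quantity $\hat w_i = (A\hat\xvec)_i = \frac1k\sum_{\ell=1}^k A_{i,j_\ell}$ is an average of $k$ i.i.d.\ random variables in $[-1,1]$ with mean exactly $w_i^\ast$, so by Hoeffding's inequality $\pr[\hat w_i < w_i^\ast - \epsilon] \leq e^{-\epsilon^2 k/2}$. Choosing $k = \lceil 2\ln(2\nrow)/\epsilon^2\rceil = O(\log n/\epsilon^2)$ makes this probability at most $1/(2\nrow)$, so by a union bound over the at most $\nrow$ rows in $K$, with probability at least $1/2$ every row $i \in K$ satisfies $\hat w_i \geq w_i^\ast - \epsilon \geq -\epsilon$. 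Hence there exists a $k$-uniform distribution $\hat\xvec$ with $\sum_{i\in[\nrow]} I[\hat w_i \geq -\epsilon] \geq |K| = k^\ast$, which is exactly an \mssi{} solution.

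The algorithm is then simply: enumerate all $k$-uniform distributions in $\Delta_{\ncol}$, for each one compute $A\hat\xvec$ and count the rows with $\hat w_i \geq -\epsilon$, and return the best. Since existence is guaranteed by the argument above, some enumerated candidate attains the target count $k^\ast$ (note the algorithm does not need to know $k^\ast$; it just returns the $\xvec$ maximizing the count, which is at least $k^\ast$). The number of $k$-uniform distributions over $\ncol$ coordinates is $\binom{\ncol + k - 1}{k} \leq (\ncol+k)^k = n^{O(k)} = n^{O(\log n/\epsilon^2)} = n^{O(\log n)}$, treating $\epsilon$ as a fixed constant, and each candidate is evaluated in $\mathrm{poly}(\nrow,\ncol)$ time, giving overall running time $n^{O(\log n)}$.

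The only subtlety — and the part requiring care rather than being a true obstacle — is making sure the reduction to $k$-uniform distributions loses nothing on the feasibility side: the Hoeffding bound controls each row's value two-sidedly around its mean, but we only need the one-sided guarantee $\hat w_i \geq w_i^\ast - \epsilon$, and crucially $w_i^\ast \geq 0$ for $i \in K$, so the $\epsilon$ slack is exactly what the bi-criteria relaxation affords us. One should also note the argument is non-constructive in that it shows \emph{existence} of a good $k$-uniform point and the algorithm finds it by brute-force enumeration; no derandomization is needed since we enumerate the entire (quasi-polynomial) support. I expect the write-up to be short, with the Hoeffding step and the counting of $k$-uniform distributions being the two concrete estimates to state explicitly.
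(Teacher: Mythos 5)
Your proposal is correct and follows essentially the same route as the paper: an existence argument via Hoeffding plus a union bound showing that the empirical distribution of $k = O(\log n/\epsilon^2)$ i.i.d.\ samples from $\xvec^\ast$ is a $k$-uniform vector losing at most $\epsilon$ on each satisfied row, followed by exhaustive enumeration of the quasi-polynomially many $k$-uniform distributions. The only differences are cosmetic (you make the one-sidedness, the explicit count $\binom{\ncol+k-1}{k}$, and the fact that the algorithm need not know $k^\ast$ explicit, which the paper leaves implicit).
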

\begin{proof}
	Denote by $\xvec^\ast$ the optimal solution of \mssi.
	Let $\tilde\xvec\in\Delta_{\ncol}$ be the empirical distribution of $k$ i.i.d. samples drawn from probability distribution $\xvec^\ast$.
	Moreover, let $\wvec^\ast\defeq A\,\xvec^\ast$ and $\tilde\wvec\defeq A\,\tilde\xvec$.
	By Hoeffding's inequality we have $\pr(w_i^\ast-\tilde w_i\geq\epsilon)\leq e^{-2k\epsilon^2}$ for each $i\in[\nrow]$.
	Then, by the union bound, we get $\pr(\exists i \textnormal{ s.t. } w_i^\ast-\tilde w_i\geq\epsilon)\leq \nrow e^{-2k\epsilon^2}$.
	Finally, we can write $\pr(w_i^\ast-\tilde w_i\leq\epsilon ~\forall i\in[\nrow])\geq 1-\nrow e^{-2k\epsilon^2}$.
	Thus, setting $k=\log \nrow/\epsilon^2$ ensures the existence of a vector $\tilde x$ guaranteeing that, if $w_i^\ast\ge0$, then $\tilde w_i\ge -\epsilon$.
	Since $\tilde x$ is $k$-uniform by construction, we can find it by enumerating over all the $O((\ncol)^k)$ $k$-uniform probability vectors where $k=\log \nrow/\epsilon^2$. Trivially, this task can be performed in $n^{\log \nrow/\epsilon^2}$ steps and, therefore, in $n^{O(\log n)}$ steps.
\end{proof}

We show that \mssi{} requires at least $n^{\tilde\Omega(\log n)}$ steps, thus closing the gap with the upper bound stated by Theorem~\ref{th:mssi_upper} except for polylogarithmic factors of $\log n$ in the denominator of the exponent.
\begin{theorem} \label{th:mssi_lower}
	Assuming ETH, there exists a constant $\epsilon>0$ such that solving \mssi~requires time $n^{\tilde\Omega(\log n)}$.
\end{theorem}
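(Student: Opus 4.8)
The plan is to reduce the promise problem \fg{} to \mssi{}, so that the $n^{\tilde\Omega(\log n)}$ lower bound for \fg{} under ETH (Theorem~2 of~\citet{deligkas2016inapproximability}) transfers. Given a free game $\F_\varphi$ of size $N$, I would build a matrix $A$ of dimensions polynomial in $N$ whose optimal $k^\ast$ and $\epsilon$-feasible relaxation encode whether $\omega(\F_\varphi)=1$ or $\omega(\F_\varphi)\le 1-\delta$. The natural dictionary is the one hinted at in the paper: a probability vector $\xvec\in\Delta_{\ncol}$ plays the role of \mDue's (possibly randomized) strategy $\mStratDue$, and the rows of $A$ are indexed by the pairs $(S_i,\mStratUno|_{S_i})$, i.e.\ by a question to \mUno{} together with a candidate local assignment; a row is ``feasible'' roughly when that \mUno-answer is consistent, with high probability over \mDue's randomization, with what \mDue{} says on the shared variables and \mDue{} satisfies her clauses. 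Concretely I would let the columns of $A$ be indexed by $(T_j,\ansDue)$ pairs (so $\xvec$ ranges over distributions over \mDue's answers, one block per question $T_j$), and design the entries in $[-1,1]$ so that the $i$-th row value $w_i=(A\xvec)_i$ equals (up to affine normalization) the probability that \mUno's fixed answer on $S_i$ is \emph{not} in conflict with the sampled \mDue-answer and \mDue{} wins her clause-check.

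The key steps, in order, are: (1) fix the encoding above and verify $A\in[-1,1]^{\nrow\times\ncol}$ with $\nrow,\ncol=\mathrm{poly}(N)$, so a quasi-polynomial \mssi{} algorithm would give a quasi-polynomial \fg{} algorithm; (2) \textbf{completeness}: if $\varphi$ is satisfiable, take $\xvec$ to be the uniform-per-block distribution supported on the single satisfying assignment $\sigma$; then for the row corresponding to the \mUno-answer $\sigma|_{S_i}$ we get $w_i\ge 0$ for \emph{every} $i$, so $k^\ast=\nrow$; (3) \textbf{soundness}: if $\varphi$ is unsatisfiable, invoke Lemma~\ref{lemma:merlin_conflict} — for \emph{every} \mDue-strategy (i.e.\ every $\xvec$) there is some $S_i$ on which \emph{every} \mUno-answer conflicts with probability $\ge\rho/2d$; this forces the corresponding rows' values to be bounded below $-\epsilon$ for a suitable constant $\epsilon<\rho/2d$, so strictly fewer than $k^\ast$ inequalities can be made $\epsilon$-feasible; (4) choose $\epsilon$ as a fixed constant (a small fraction of $\rho/2d$), so that the gap between the ``yes'' case ($k^\ast$ satisfiable even with slack $-\epsilon$) and the ``no'' case (at least the block of rows for the conflicting $S_i$ lost) is a genuine combinatorial gap, not just an $\epsilon\nrow$ slack.

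The main obstacle I expect is step~(3): making the soundness argument tight at the level of \emph{counting} inequalities rather than a fractional/additive quantity. Lemma~\ref{lemma:merlin_conflict} gives one bad question $S_i$, but a single $S_i$ might correspond to only a small number of rows, whereas \mssi{} only cares whether we beat $k^\ast$ by one. So I would need the encoding to blow up each question $S_i$ into enough rows — e.g.\ by padding, or by indexing rows with an error-correcting-code gadget (this is presumably why Theorem~\ref{th:code} is in the toolkit) — so that ``one conflicting question'' translates into ``many lost rows'', more than the number of rows one could possibly gain elsewhere by deviating from the honest assignment. A clean way to arrange this: make the honest assignment the \emph{unique} way to satisfy a large block of rows, and use the code's relative distance to guarantee that any $\xvec$ deviating on a question loses a constant fraction of that block. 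I would also need to double-check that randomized \mDue-strategies are fully captured by $\xvec\in\Delta_{\ncol}$ (they are, since $\ver$ is multilinear in the answer distributions once \mUno{} is best-responding), so that no convexification argument is lost. Once the ``many rows per question'' amplification is in place, setting $k^\ast$ to the honest value and $\epsilon$ to the appropriate constant closes the reduction.
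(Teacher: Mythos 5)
Your high-level plan (reduce \fg{} to \mssi{} and invoke Lemma~\ref{lemma:merlin_conflict}) matches the paper's, and you correctly locate the difficulty in the soundness step, but the construction as proposed does not work and the fix you sketch is not the one needed. With columns indexed only by $(T_j,\ansDue)$ and rows by $(S_i,\ansUno)$, nothing forces $\xvec$ to spread mass (almost) uniformly over the blocks $T_j$. An adversarial $\xvec$ can concentrate all mass on a single $T_j$ together with an answer satisfying its clauses; for any $S_{i'}$ sharing no variables with that $T_j$, \emph{every} row $(S_{i'},\ansUno)$ is then non-conflicting, so all $|\ansUnoSet|$ of those rows become feasible at once and the count blows past the honest value (which, contrary to your claim $k^\ast=\nrow$, is only one row per question, i.e.\ $m$). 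Moreover, Lemma~\ref{lemma:merlin_conflict} is a statement about the free game in which \art{} draws questions uniformly, so it cannot be applied to the \mDue-strategy induced by an arbitrary $\xvec$ unless you first prove that the induced question-marginal is $O(\epsilon)$-close to uniform in $\ell_1$. Your proposal contains no mechanism for either issue, and your suggested repair (padding/blowing up the rows per question so that one bad $S_i$ costs many rows) attacks the wrong quantity: the failure mode is not that a bad question loses too few rows, but that a skewed $\xvec$ can \emph{gain} many rows per question and escape the uniform-question regime entirely.

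The paper resolves both problems with ingredients absent from your sketch: (a) extra \emph{columns} $x_{i,\ell}$ that represent \mUno's answers bit-by-bit through the Gilbert--Varshamov code, plus balance rows ($\type_1$) and uniformity-forcing rows ($\type_2,\type_3$) replicated $q\gg m$ times, so that any $\xvec$ hitting the threshold $k^\ast=(1+\binom{m}{m/2}+\binom{8m}{4m})q+m$ must split its mass roughly $1/2$--$1/2$ and be $16\epsilon$-close to uniform over both the $T_j$'s and the code coordinates (Lemmas~\ref{lemma:almost_uniform_1} and~\ref{lemma:almost_uniform_2}, via Lemma~\ref{lemma:deviation_from_uniform}); (b) the code's relative distance $1/5$ is used to prove that \emph{at most one} row $(\type_4,S_i,\ansUno)$ per question can be $\epsilon$-feasible (Lemma~\ref{lemma:at_most_one_row}), which is what rules out the multiplicity attack above and extracts a well-defined \mUno-strategy $\ansUno^i$; and (c) a final transfer step from the induced game $\F_\varphi^\ast$ (where \art{} uses the near-uniform marginal $\vvec_1$) back to the canonical $\F_\varphi$, costing only $O(\epsilon)$ in the compatibility probability, after which $\epsilon=\rho/52d$ contradicts Lemma~\ref{lemma:merlin_conflict}. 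Without (a)--(c), the completeness/soundness gap you describe is not established, so the proposal has a genuine gap rather than an alternative route.
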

\begin{proof}
	\MyParagraph{Overview.} We provide a polynomial-time reduction from \fg{} (Def.~\ref{def:freegame}) to \mssi, where $\epsilon=\frac{\delta}{26}=\frac{\rho}{52 d}$ (see Section~\ref{subsec:free_games} for the definition of parameters $\delta, \rho, d$). 
	We show that, given a free game $\F_\varphi$ instance, it is possible to build a matrix $A$ s.t., for a certain value $k^\ast$, the following holds: 
	(i) if $\omega(\F_\varphi)=1$, then there exists a vector $\xvec$ s.t. 
	\begin{equation}\label{eq:cond1_mssi_reduction}
	\sum_{i\in[\nrow]} I[w_i\geq 0]=k^\ast,
	\end{equation}
	where $\wvec=A\,\xvec$; 
	(ii) if $\omega(\F_\varphi)\leq 1-\delta$, then all vectors $\xvec$ are s.t.
	\begin{equation}\label{eq:cond2_mssi_reduction}
	\sum_{i\in[\nrow]} I[w_i\geq -\epsilon]<k^\ast.
	\end{equation}

	\MyParagraph{Construction.}
	In the free game $\F_\varphi$, \art~sends a set of variables $S_i$ to \mUno~and a set of clauses $T_j$ to \mDue, where $i,j\in[m]$, $m=\sqrt{n\,\textnormal{polylog}(n)}$.
	Then, \mUno's (resp., \mDue's) answer is denoted by $\ansUno\in\ansUnoSet$ (resp., $\ansDue\in\ansDueSet$).
	The system of linear inequalities used in the reduction has a vector of variables $\xvec$ structured as follows.
	\begin{enumerate}
	\item {\em Variables corresponding to \mDue's answers.} There is a variable $x_{T_j,\ansDue}$ for each $j\in [m]$ and, due to Lemma~\ref{lemma:bipartite} and the assumption $|T_j|=2m$, it holds $\ansDue\in\ansDueSet=\{0,1\}^{6m}$  (if $|T_j|< 2m$, we extend $\ansDue$ with extra bits).
	\item {\em Variables corresponding to \mUno's answers.} 
	We need to introduce some further machinery to augment the dimensionality of $\ansUnoSet$ via a viable mapping.
	Let $e:\{0,1\}^{2m}\to \{0,1\}^{8m}$ be the code stated in Theorem~\ref{th:code} with rate $1/4$ and relative distance $\dist (e)\ge1/5$.
	We can safely assume that $\ansUno \in\ansUnoSet=\{0,1\}^{2m}$ (if $|S_i|< 2m$, we extend $\ansUno$ with extra bits).
	Then, $e(\ansUno)$ is the $8m$-dimensional encoding of answer $\ansUno$ via code $e$.
	Let $e(\ansUno)_j$ be the $j$-th bit of vector $e(\ansUno)$. 
	We have a variable $x_{i,\ell}$ for each index $i\in[8m]$ and $\ell\defeq\{\ell_j\}_{j\in [m]}\in\{0,1\}^{m}$.
	These variables can be interpreted as follows.
	Suppose to have an answer encoding for each of the possible set $S_j$.
	There are $m$ such encodings, each of them having $8m$ bits.
	Then, it holds  $x_{i,\ell}>0$ if the $i$-th bit of the encoding corresponding to $S_j$ is $\ell_j$. 
	\end{enumerate}
There is a total of $m\,2^m\,(2^{5m}+8)$ variables. 
Matrix $A$ has a number of columns equal to the number of variables. 
We denote with $A_{\cdot,(T_j,\ansDue)}$ the column with the same index of variable $x_{T_j,\ansDue}$.
Analogously, $A_{\cdot,(i,\ell)}$ is the column corresponding to variable $x_{i,\ell}$.
Rows are grouped in four types, denoted by $\{\type_i\}_{i=1}^4$.
We write $A_{\type_i,\cdot}$ when referring to an entry of {\em any} row of type $\type_i$.
Further arguments may be added as a subscript to identify specific entries of $A$.
Rows are structured as follows.
\begin{enumerate}
	\item {\em Rows of type $\type_1$}: there are $q$ (the value of $q$ is specified later in the proof) rows of type $\type_1$ s.t.~$A_{\type_1,(T_j,\ansDue)}=1$ for each $j\in [m],\ansDue\in\ansDueSet$, and $A_{\type_1,\cdot}=-1$ otherwise.
	
	\item {\em Rows of type $\type_2$}: there are $q$ rows for each subset $\mathcal{T}\subseteq \{T_j\}_{j\in[m]}$ with cardinality $m/2$ (i.e., there is a total of $q\binom{m}{m/2}$ rows of type $\type_2$). 
	%
%	\red{A generic entry of a type $\type_2$ row associated to a subset $\mathcal{T}$ is denoted by $A_{(\type_2,\mathcal{T}),\cdot}$.LO USIAMO?}
	%
	Then, the following holds for each $\mathcal{T}$: 
	\begin{equation*}
		A_{(\type_2,\mathcal{T}),(T_j,\ansDue)}=\begin{cases}\begin{array}{ll}
		-1 & \textnormal{ if } T_j\in\mathcal{T},\ansDue\in\ansDueSet \\
		1 & \textnormal{ if } T_j\notin\mathcal{T},\ansDue\in\ansDueSet 
		\end{array}
		\end{cases}
		\textnormal{ and }\quad
		A_{(\type_2,\mathcal{T}),(i,\ell)}=0 \textnormal{ for each } i\in[8m], \ell\in\{0,1\}^{m}.
	\end{equation*}

	\item {\em Rows of type $\type_3$}: there are $q$ rows of type $\type_3$ for each subset of $4m$ indices $\mathcal{I}$ drawn from $[8m]$, for a total of $q\binom{8m}{4m}$ $\type_3$ rows.
	For each subset of indices $\mathcal{I}$ we have:
	\begin{equation*}
	A_{(\type_3,\mathcal{I}),(T_j,\ansDue)}=0 \textnormal{ for each } T_j,\ansDue\quad  \textnormal{ and }\quad 
	A_{(\type_3,\mathcal{I}),(i,\ell)}=
	\begin{cases}\begin{array}{ll}
	-1 & \textnormal{ if } i\in\mathcal{I},\ell\in\{0,1\}^m\\
	1 & \textnormal{ if } i\notin\mathcal{I},\ell\in\{0,1\}^m
	\end{array}.
	\end{cases}
	\end{equation*}
	
	\item {\em Rows of type $\type_4$}: there is a row of type $\type_4$ for each $S_i$ and $\ansUno$. 
	Each of these rows is s.t.:
	\begin{equation*}
	A_{(\type_4,S_i,\ansUno),(T_j,\ansDue)} = \begin{cases}
	\begin{array}{ll}
	-1/2 & \textnormal{if } \ver(S_i,T_j,\ansUno,\ansDue)=1\\
	-1 & \textnormal{otherwise}
	\end{array}
	\end{cases}
	\hspace{-.3cm}
	\textnormal{and }\quad
		A_{(\type_4,S_i,\ansUno),(j,\ell)} = \begin{cases}
	\begin{array}{ll}
	1/2 & \textnormal{if } e(\ansUno)_j=\ell_i \\
	-1 & \textnormal{otherwise}
	\end{array}
	\end{cases}\hspace{-.4cm}.
	\end{equation*}
\end{enumerate}

Finally, we set $k^\ast= \left(1+\binom{m}{m/2}+\binom{8m}{4m}\right)q + m$ and $q\gg m$ (e.g., $q=2^{10m}$).
We say that row $i$ satisfies \mssi~condition for a certain $\xvec$ if $w_i\geq -\epsilon$, where $\wvec=A\,\xvec$ (in the following, we will also consider $w_i\geq 0$ as an alternative condition).
We require at least $k^\ast$ rows to satisfy the \mssi{} condition.
Then, all rows of types $\type_1$, $\type_2$, $\type_3$ and at least $m$ rows of type $\type_4$ must be s.t.~$w_i$ satisfies the condition.

\MyParagraph{Completeness.}
Given a satisfiable assignment of variables $\ass$ to $\varphi$, we build vector $\xvec$ as follows.
Let $\ass_{T_j}$ be the partial assignment obtained by restricting $\ass$ to the variables in the clauses of $T_j$ (if $|T_j|<2m$ we pad $\ass_{T_j}$ with bits 0 until $\ass_{T_j}$ has length $6m$).
Then, we set $x_{T_j,\ass_{T_j}}=1/2m$.
Moreover, for each $i\in[8m]$ and $\ell^i=(e(\ass_{S_1})_i,\ldots,e(\ass_{S_m})_i)$, we set $x_{i,\ell^i}=1/16m$.
We show that $\xvec$ is s.t.~there are at least $k^\ast$ rows $i$ with $w_i\geq 0$ (Condition~\eqref{eq:cond1_mssi_reduction}).
First, each row $i$ of type $\type_1$ is s.t.~$w_i=0$ since $\sum_{T_j,\ansDue} x_{T_j,\ansDue}=\sum_{i,\ell} x_{i,\ell}=1/2$. 
For each $T_j$, $\sum_{\ansDue}x_{T_j,\ansDue}=1/2m$.
Then, for each subset $\mathcal{T}$ of $\{T_j\}_{j\in[m]}$, we have $\sum_{\ansDue,T_j\in\mathcal{T}} x_{T_j,\ansDue}=1/4$.
This implies that each row $i$ of type $\type_2$ is s.t.~$w_i=0$.
A similar argument holds for rows of type $\type_3$.
Finally, we show that for each $S_i$ there is at least a row $i$ of type $\type_4$ s.t.~$w_i\geq 0$.
Take the row corresponding to $(S_i,\ass_{S_i})$. 
For each $x_{b,\ell}>0$ where $b\in [8m]$ and $\ell\in\{0,1\}^m$, it holds $e(\ass_{S_i})_b=\ell_i$.
Then, there are $8m$ columns played with probability $1/16m$ with value $1/2$, i.e., $\sum_{b,\ell} A_{(\type_4,S_i,\zeta_{S_i}),(b,\ell)}x_{b,\ell}=1/4$.
Moreover, for each $(T_j,\zeta_{T_j})$, it holds $\ver(S_i,T_j,\ass_{S_i},\ass_{T_j})=1$.
Then, $\sum_{T_j,\ansDue}A_{(\type_4,S_i,\ass_{S_i}),(T_j,\ass_{T_j})}x_{T_j,\ansDue}=-1/4$. 
This concludes the completeness section.

\MyParagraph{Soundness.}
We show that, if $\omega(\F_\varphi)\leq 1-\delta$, there is not any probability distribution $\xvec$ s.t. 
\begin{equation}\label{eq:condition_soundness}
\sum_{i\in \nrow} I[w_i\geq -\epsilon]\geq k^\ast,
\end{equation}
with $\wvec=A\,\xvec$.
Assume, by contradiction, that one such vector $\xvec$ exists.
For the sake of clarity, we summarize the structure of the proof.
(i) We show that the probability assigned by $\xvec$ to columns of type $(T_j,\ansDue)$ has to be {\em close} to $1/2$, and the same has to hold for columns of type $(i,\ell)$.
(ii) We show that $\xvec$ has to distribute probability {\em almost} uniformly among $T_j$s and indices $i$ (resp., Lemma~\ref{lemma:almost_uniform_1} and Lemma~\ref{lemma:almost_uniform_2} below).
Intuitively, this resembles the fact that, in $\F_\varphi$, \art~draws questions $T_j$ according to a uniform probability distribution.
(iii) For each $S_i$, there is at most one row $(\type_4,S_i,\ansUno)$ s.t.~$w_{(\type_4,S_i,\ansUno)}\geq -\epsilon$ (Lemma~\ref{lemma:at_most_one_row}).
This implies, together with the hypothesis, that there exists exactly one such row for each $S_i$.
(iv) Finally, we show that the above construction leads to a contradiction with Lemma~\ref{lemma:merlin_conflict} for a suitable free game.

Before providing the details of the four above steps, we introduce the following result, due to~\citet{babichenko2015can}.
\begin{lemma}[Essentially Lemma 2 of~\citet{babichenko2015can}]\label{lemma:deviation_from_uniform}
	Let $\mathbf{v}\in\Delta^n$ be a probability vector, and $\mathbf{u}$ be the $n$-dimensional uniform probability vector.
	If $||\mathbf{v}-\mathbf{u}||>c$, then there exists a subset of indices $\mathcal{I}\subseteq [n]$ such that $|\mathcal{I}|=n/2$ and $\sum_{i\in \mathcal{I}}\mathbf{v}_i>\frac{1}{2}+\frac{c}{4}$. 
\end{lemma}
Then, 

\begin{enumerate}[label=(\roman*),leftmargin=.7cm]
	\item Equation~\ref{eq:condition_soundness} requires all rows $i$ of type $\type_1$, $\type_2$, $\type_3$ to be s.t.~$w_i\geq -\epsilon$.
	This implies that, for rows of type $\type_1$, it holds $\sum_{T_j,\ansDue} x_{T_j,\ansDue}\geq 1/2(1-\epsilon)$.
	Indeed, if, by contradiction, this inequality does not hold, each row $i$ of type $\type_1$ would be s.t.~$w_i<1/2-\epsilon/2-(1/2+\epsilon/2)=-\epsilon$, thus violating Equation~\ref{eq:condition_soundness}.
	Moreover, Equation~\ref{eq:condition_soundness} implies that at least a row $(\type_4,S_i,\ansUno)$ has $w_{(\type_4,S_i,\ansUno)}\geq -\epsilon$.
	Therefore, it holds $\sum_{i,\ell}x_{i,\ell}\geq 1/2-\epsilon$.
	Indeed, if, by contradiction, this condition did not hold, all rows of type $\type_4$ would have $w_i<1/2(1/2-\epsilon)-1/2(1/2+\epsilon)=-\epsilon$.
	
	\item Let $\vvec_1\in\Delta_m$ be the probability vector defined as $v_{1,j}\defeq \frac{\sum_{\ansDue} x_{T_j,\ansDue}}{\sum_{j,\ansDue}x_{T_j,\ansDue}}$, and $\tilde\vvec$ be a generic uniform probability vector of suitable dimension.
	The following result shows that the element-wise difference between $\vvec_1$ and $\tilde\vvec$ has to be bounded if Equation~\ref{eq:condition_soundness} has to be satisfied.
	\begin{restatable}{lemma}{almostUniformOne}\label{lemma:almost_uniform_1}
		If $||\vvec_1-\tilde\vvec||_1>16\epsilon$, there exists a row $i$ of type $\type_2$ such that $w_i<-\epsilon$.
	\end{restatable}
	\begin{proof}
		Lemma~\ref{lemma:deviation_from_uniform} implies that, if $||\vvec_1-\tilde\vvec||_1>16\epsilon$, there exists a subset $\mathcal{T}\subseteq \{T_j\}_{j\in [m]}$ such that
		$\sum_{T_j\in\mathcal{T}}\sum_{\ansDue}x_{T_j,\ansDue}>(1/2+4\epsilon)\sum_{j,\ansDue}x_{T_j,\ansDue} > 1/4+\epsilon$.
		It follows that $\sum_{T_j\notin\mathcal{T}} \sum_{\ansDue}x_{T_j,\ansDue}<1/2+\epsilon-1/4-\epsilon=1/4$, which implies that row $(\type_2,\mathcal{T})$ is s.t. $w_{\type_2,\mathcal{T}}<-1/4-\epsilon+1/4<-\epsilon$.	
	\end{proof}

	Let $\vvec_2\in\Delta_{[8m]}$ be the probability vector defined as $v_{2,i}\defeq \frac{\sum_{\ell}x_{i,\ell}}{\sum_{i,\ell}x_{i,\ell}}$, and $\tilde\vvec$ be a suitable uniform probability vector.
	The following holds.
	
	\begin{restatable}{lemma}{almostUniformTwo}\label{lemma:almost_uniform_2}
		If $||\vvec_2-\tilde\vvec||_1>16\epsilon$, there exists a row $i$ of type $\type_3$ such that $w_i<-\epsilon$.
	\end{restatable}
	\begin{proof}
		Lemma~\ref{lemma:deviation_from_uniform} implies that, if $||\vvec_2-\tilde\vvec||_1>16\epsilon$, there exists a set $\mathcal{I}\subseteq [8m]$ such that $\sum_{i\in\mathcal{I}}\sum_{\ell}x_{i,\ell}>(1/2+4\epsilon)\sum_{i,\ell}x_{i,\ell}>1/4+\epsilon$.
		Then, $\sum_{i\notin\mathcal{I}}\sum_{\ell}x_{i,\ell}<1/2 + \epsilon/2 -1/4 -\epsilon=1/4-\epsilon/2$. 
		It follows that there exists a row $(\type_3,\mathcal{I})$ such that $w_{\type_3,\mathcal{I}}<-1/4-\epsilon+1/4-\epsilon/2<-\epsilon$.
	\end{proof}

	In order to satisfy Equation~\ref{eq:condition_soundness}, all rows $i$ of type $\type_2$ and $\type_3$ have to be s.t.~$w_i\geq -\epsilon$.
	Then, by Lemmas~\ref{lemma:almost_uniform_1} and~\ref{lemma:almost_uniform_2}, it has to hold that $||\vvec_1-\tilde\vvec||_1\leq 16\epsilon$ and $||\vvec_2-\tilde\vvec||_1\leq 16\epsilon$. 
	
	\item We show that, for each $S_i$, there exists at most one row $(\type_4, S_i,\ansUno)$ for which $w_{(\type_4,S_i,\ansUno)}\geq -\epsilon$.
	\begin{restatable}{lemma}{atMostOne}\label{lemma:at_most_one_row}
		For each $S_i$, $i\in [m]$, there exists at most one row $(\type_4, S_i,\ansUno)$ s.t.~$w_{(\type_4,S_i,\ansUno)}\geq -\epsilon$.
	\end{restatable}
	\begin{proof}
		Let $\auxFunc(\xvec,\ansUno)\defeq \sum_{j:\ell_i=e(\ansUno)_j} x_{j,\ell}$.
		Assume, by contradiction, that for a given $S_i$ there exist two assignments $\ansUno'$ and $\ansUno''$ such that $w_{(\type_4,S_i,\ansUno)}\geq -\epsilon$ for each $\ansUno\in\{\ansUno',\ansUno''\}$. 
		Then, $\auxFunc(\xvec,\ansUno)\geq 1/2 -\epsilon$, for each $\ansUno\in\{\ansUno',\ansUno''\}$.
		Otherwise we would get $w_{(\type_4,S_i,\ansUno)}<1/2(1/2-\epsilon)-1/2(1/2+\epsilon)=-\epsilon$ for at least one $\ansUno\in\{\ansUno',\ansUno''\}$.
		Let $\xvec'$ be the vector such that $x'_{i,\ell}\defeq \frac{x_{i,\ell}}{\sum_{i,\ell}x_{i,\ell}}$.
		Then, $\auxFunc(\xvec',\ansUno)\geq \frac{1/2-\epsilon}{1/2+\epsilon}\geq 1-4\epsilon$, for $\ansUno\in\{\ansUno',\ansUno''\}$.
		By Lemma~\ref{lemma:deviation_from_uniform} and~\ref{lemma:almost_uniform_2}, we have that $||\vvec_2-\tilde \vvec||_1\leq 16\epsilon$. 
		Therefore, we can obtain a uniform vector $\tilde \xvec$ by moving at most $16\epsilon$ probability from $\xvec'$.
		This results in a decrease of $\auxFunc$ of at most $16\epsilon$, that is $\auxFunc(\tilde\xvec,\ansUno)\geq 1-20\epsilon$ for each $\ansUno\in\{\ansUno',\ansUno''\}$.
		
		By construction $\dist(e)=1/5$, which implies $\dist(e(\ansUno'),e(\ansUno''))\geq 1/5$.
		Then, there exists a set of indices $\mathcal{I}$, with $|\mathcal{I}|\geq 8m/5$, such that $e(\ansUno')_j\neq e(\ansUno'')_j$ for each $j\in\mathcal{I}$.
		Therefore, $\auxFunc(\tilde\xvec,\ansUno')+\auxFunc(\tilde\xvec,\ansUno'')\leq \sum_{j\in\mathcal{I}}1/8m+\sum_{j\notin\mathcal{I}}2/8m\leq 2-1/5$.
		This leads to a contradiction with $\auxFunc(\tilde\xvec,\ansUno')+\auxFunc(\tilde\xvec,\ansUno'')\geq 2-40\epsilon$.
	\end{proof}

	Then, there are at least $m$ rows $(\type_4,S_i,\ansUno)$ s.t.~$w_{(\type_4,S_i,\ansUno)}\geq -\epsilon$ and, by Lemma~\ref{lemma:at_most_one_row}, we get that there exists exactly one such row for each $S_i$, $i\in [m]$.
	Therefore, for each $S_i$, there exists $\ansUno^i\in\ansUnoSet$ s.t.~$\sum_{(T_j,\ansDue):\ver(S_i,T_j,\ansUno^i,\ansDue)=1}x_{(T_j,\ansDue)}\geq 1/2-4\epsilon$.
	If this condition did not hold, by Step (i), we would obtain  
	$w_{\type_4,S_i,\ansUno^i}<-1/2(1/2-4\epsilon)-7/2\epsilon+1/2(1/2+\epsilon/2)=-\epsilon$.
	
	\item Finally, let $\F_\varphi^\ast$ be a free game in which \art~(i.e., the verifier) chooses question $T_j$ with probability $v_{1,j}$ as defined in Step (ii), and \mDue~(i.e., the second prover) answers $\ansDue$ with probability $x_{T_j,\ansDue}/v_{1,j}$.
	In this setting (i.e., $\F_\varphi^\ast$), given question $S_i$ to \mUno, the two provers will provide compatible answers with probability $\mathbb{P}(\ver^\ast(S_i,T_j,\ansUno^i,\ansDue)=1\mid S_i)=\frac{1/2-4\epsilon}{\sum_{j,\ansDue}x_{T_j,\ansDue}}\geq \frac{1/2-4\epsilon}{1/2+\epsilon}\geq 1-10\epsilon$, where the first inequality holds for the condition at Step (i).
	In a canonical  (i.e., as in Definition~\ref{def:freegame}) free game $\F_\varphi$, \art~picks questions according to a uniform probability distribution.
	The main difference between $\F_\varphi$ and $\F_\varphi^\ast$ is that, in the latter, \art~draws questions for \mDue~from $\vvec_1$.
	However, we know that differences between $\vvec_1$ and a uniform probability vector must be limited.
	Specifically, by Lemma~\ref{lemma:almost_uniform_1}, we have $||\vvec_1-\tilde\vvec||_1\leq 16\epsilon$.
	Then, if \mUno~and \mDue~applied in $\F_\varphi$ the strategies we described for $\F_\varphi^\ast$, their answers would be compatible with probability at least $\mathbb{P}(\ver(S_i,T_j,\ansUno^i,\ansDue)=1\mid S_i)\geq 1-26\epsilon$, for each $S_i$. 
	Finally, by picking $\epsilon=\rho/52d$, we reach a contradiction with Lemma~\ref{lemma:merlin_conflict}. 
	This concludes the proof.
\end{enumerate}
\end{proof}

\section{The Hardness of $(\alpha,\epsilon)$-persuasion}\label{sec:hardness}

%We prove that even allowing an -violation of the incentive constraints, a signaling
%scheme that approximates the optimal one cannot be computed in polynomial time. In
%particular, we focus on a very simple function used in voting settings and we prove
%that the optimal signaling scheme cannot be approximated to within any factor in less
%that n Ω̃(log n) .

We show that a public signaling scheme approximating the value of the optimal one cannot be computed in polynomial time even if we allow it to be $\epsilon$-persuasive (see Equation~\ref{eq:eps_persuasive}).
Specifically, computing an $(\alpha,\epsilon)$-persuasive signaling scheme requires at least $n^{\tilde\Omega(\log n)}$, where the dimension of the instance is $n=O(\bar n \, d)$.
We prove this result for the specific case of the $k$-voting problem, as introduced in Section~\ref{sec:application}.
Besides its practical applicability, this problem is particularly instructive in highlighting the strong connection between the problem of finding suitable posteriors and the \mssi~problem, as discussed in the following lemma.
\begin{lemma}\label{lemma:posterior_mssi}
	Given a $k$-voting instance,
	the problem of finding a posterior $\pvec\in\Delta_\Theta$ such that $W_\epsilon(\pvec)\geq 0$ is equivalent to finding an $\epsilon$-feasible subsystem of $k$  linear inequalities over the simplex when $A\in[-1,1]^{\bar n\times d}$ is such that: 
	\begin{equation}\label{eq:matrix_equivalence}
		A_{r,\theta}=u^r_\theta(a_0)-u_\theta^r(a_1) \quad\textnormal{ for each }\quad r\in\rec,\theta\in\Theta.
	\end{equation}
\end{lemma}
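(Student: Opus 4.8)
The plan is to establish a literal identification between the two search problems via the obvious bijection between posteriors over $\Theta$ and probability vectors over $[d]$. First I would record that the matrix $A$ defined in Equation~\eqref{eq:matrix_equivalence} is well-formed for \mssi: since all receiver utilities lie in $[0,1]$, each entry $A_{r,\theta} = u^r_\theta(a_0) - u^r_\theta(a_1)$ lies in $[-1,1]$, so $A \in [-1,1]^{\bar n \times d}$, matching Definition~\ref{def:mssi} with $\nrow = \bar n$ and $\ncol = d$. A posterior $\pvec \in \Delta_\Theta$ is then, coordinate for coordinate, a probability vector $\xvec \in \Delta_{d}$.

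The core step is a one-line computation: for a posterior $\pvec$, set $\wvec \defeq A\,\pvec$; unpacking the matrix-vector product gives $w_r = \sum_{\theta \in \Theta} p_\theta\bigl(u^r_\theta(a_0) - u^r_\theta(a_1)\bigr)$ for every $r \in \rec$. This is exactly the expression used to define $W$ and $W_\epsilon$ in Section~\ref{sec:application}; hence $W(\pvec) = \sum_{r \in \rec} I[w_r \ge 0]$ and $W_\epsilon(\pvec) = \sum_{r \in \rec} I[w_r \ge -\epsilon]$. In particular, receiver $r$ votes for $a_0$ (up to the $\epsilon$ slack) under $\pvec$ if and only if the $r$-th inequality of $A\,\xvec \ge 0$ is $\epsilon$-satisfied by $\xvec = \pvec$.

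Finally I would read off the equivalence: for any target count $k$, a posterior $\pvec$ with $W_\epsilon(\pvec) \ge k$ exists if and only if there is $\xvec \in \Delta_{d}$ satisfying, up to an additive $\epsilon$, at least $k$ of the $\bar n$ inequalities $A\,\xvec \ge 0$; taking $k$ to be $k^\ast \defeq \max_{\pvec} W(\pvec)$ turns this into exactly the \mssi{} instance of Definition~\ref{def:mssi}. Because $\pvec \mapsto \xvec$ is a bijection preserving both the exactly-satisfied count $W(\cdot)$ and the $\epsilon$-satisfied count $W_\epsilon(\cdot)$, the problems coincide rather than merely reduce to one another. There is no substantial obstacle here --- the lemma is pure bookkeeping --- and the only points that need care are (i) the sign convention, namely that a receiver prefers $a_0$ precisely when the corresponding inequality is satisfied rather than violated, and (ii) aligning the index sets $\Theta$ and $[\ncol]$ together with the $[-1,1]$ range requirement on $A$.
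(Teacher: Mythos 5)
Your proposal is correct and follows exactly the same route as the paper's proof, which simply sets $\xvec=\pvec$ and observes that $\sum_{r\in[\bar n]}I[A_r\xvec\geq -\epsilon]\geq k$ holds iff $W_\epsilon(\pvec)\geq k$; your extra checks (that $A\in[-1,1]^{\bar n\times d}$ and that the coordinate identification is a bijection preserving both counts) are just the bookkeeping the paper leaves implicit.
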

\begin{proof}
	By setting $\xvec=\pvec$, it directly follows that $\sum_{i\in[\bar n]}I[A_i\xvec\geq -\epsilon]\geq k$ iff $W_\epsilon(\pvec)\geq k$.
\end{proof}
The above lemma shows that deciding if there exists a posterior $\pvec$ such that $W(\pvec)\geq k$ or if {\em all} the posteriors have $W_\epsilon(\pvec)<k$ (i.e., deciding if the utility of the sender can be greater than zero) is as hard as solving \mssi.
%
%We already showed that deciding if there exists a posterior $\mu$ such that $W(\mu)\geq k$ or if {\em all} the posteriors have $W_\epsilon(\mu)<k$ (i.e., deciding if the utility of the sender can be greater than zero) is as hard as solving \mssi.
%
More precisely, if the \mssi~instance does not admit any solution, then there does not exist a posterior guaranteeing the sender strictly positive winning probability.
On the other hand, if the \mssi~instance admits a solution, there exists a signaling scheme where at least one of the induced posteriors guarantees the sender wining probability $>0$.
However, the above connection between the \mssi{} problem and the $k$-voting problem is not sufficient to prove the inapproximability of the $k$-voting problem, as the probability whereby this posterior is reached may be arbitrarily small.
%
%Moreover, this gap would not allow us to provide any additive inapproximability consideration.
%
%***Finally, we can prove our main result.***

Luckily enough, the next theorem shows that it is possible to strengthen the inapproximability result by constructing an instance in which, when 3SAT is satisfiable, there is a signaling scheme such that all the induced posteriors satisfy $W(\pvec)\geq k$ (i.e., the sender wins with probability 1).

\begin{theorem}\label{th:main_hardness}
	Given a $k$-voting instance and assuming ETH, there exists a constant $\epsilon^\ast$ such that, for any $\epsilon\leq\epsilon^\ast$, finding an $(\alpha,\epsilon)$-persuasive signaling scheme requires $n^{\tilde\Omega(\log n)}$ steps for any multiplicative or additive factor $\alpha$.
\end{theorem}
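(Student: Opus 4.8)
The plan is to reduce from the \mssi{} problem (or more directly from \fg) to the $(\alpha,\epsilon)$-$k$-voting problem, exploiting Lemma~\ref{lemma:posterior_mssi} but strengthening the construction so that the \emph{gap} is amplified to all induced posteriors. The key point is that the naive correspondence of Lemma~\ref{lemma:posterior_mssi} only tells us that \emph{some} posterior has $W(\pvec)\geq k$; to get inapproximability for every factor $\alpha$ we need that in the \textsc{Yes}-case there is a signaling scheme in which \emph{all} posteriors reached with positive probability have $W(\pvec)\geq k$, so that the sender's value is exactly $1$, while in the \textsc{No}-case \emph{every} posterior has $W_\epsilon(\pvec)<k$, so that the sender's $\epsilon$-persuasive value is $0$. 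Since $\opt=1$ versus $\opt=0$, no $\alpha$-approximate (multiplicative: $\alpha\cdot 1>0$) or $\alpha$-optimal (additive: $1-\alpha>0$ for $\alpha<1$) $\epsilon$-persuasive scheme can exist, and any algorithm distinguishing the two cases would solve \fg, which by the cited Theorem of~\citet{deligkas2016inapproximability} requires $n^{\tilde\Omega(\log n)}$ time under ETH.

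Concretely, I would take the matrix $A\in[-1,1]^{\nrow\times\ncol}$ produced by the reduction in the proof of Theorem~\ref{th:mssi_lower} and use its rows to define voter utility differences via Equation~\eqref{eq:matrix_equivalence}: set $u^r_\theta(a_0)-u^r_\theta(a_1)=A_{r,\theta}$ (picking, e.g., $u^r_\theta(a_1)$ to be a constant and $u^r_\theta(a_0)$ accordingly, rescaled into $[0,1]$), one voter per row, one state of nature per column, with $k$ set to the threshold $k^\ast$ of that construction. The prior $\mu$ should be chosen so that the posteriors the sender can induce are exactly (or essentially) the probability vectors $\xvec\in\Delta_{\ncol}$ appearing in \mssi; a uniform prior on the $\ncol$ states works because any $\pvec\in\textnormal{int}(\Delta_\Theta)$ close to uniform is realizable as a posterior, and more importantly Bayes-plausibility ($\Expec_s[\pvec_s]=\mu$) must be respected. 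In the completeness direction, the feasible \mssi{} solution $\xvec^\ast$ built from the satisfying assignment of $\varphi$ is, by construction in Theorem~\ref{th:mssi_lower}, not just one good posterior but has enough structure (it is supported on the "answer" coordinates symmetrically) that one can build a signaling scheme whose induced posteriors are all translates/permutations of $\xvec^\ast$ — each satisfying $W(\pvec)=k^\ast$ — giving sender value $1$. In the soundness direction, the soundness analysis of Theorem~\ref{th:mssi_lower} already shows every $\xvec\in\Delta_{\ncol}$ fails Equation~\eqref{eq:condition_soundness}, i.e.\ $W_\epsilon(\pvec)<k^\ast$ for all posteriors, hence no $\epsilon$-persuasive scheme recommends $a_0$ to $k$ voters at any signal, so the sender's value is $0$.

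The main obstacle I expect is the \emph{Bayes-plausibility / realizability} gap: in \mssi{} we may choose $\xvec$ freely in the simplex, but a persuasion instance only lets the sender induce posteriors that average (weighted by signal probabilities) back to the fixed prior $\mu$, and moreover we need \emph{all} induced posteriors simultaneously good in the \textsc{Yes}-case. Resolving this requires exhibiting, from the single \mssi{} witness $\xvec^\ast$, an entire \emph{family} of posteriors $\{\pvec_s\}$ each with $W(\pvec_s)\geq k^\ast$ and with a convex combination equal to $\mu$. This is exactly why the construction in Theorem~\ref{th:mssi_lower} was engineered with the extra "type" rows and the $\binom{m}{m/2}$, $\binom{8m}{4m}$ symmetry constraints and the large multiplicity $q$: those force near-uniformity of $\xvec$ on blocks of coordinates, so that permuting coordinates within blocks yields many distinct posteriors that are all still \mssi-good, and whose average is the uniform prior. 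So the bulk of the proof is: (i) choose $\mu$ uniform on $\Theta$; (ii) in the \textsc{Yes}-case, use coordinate permutations of $\xvec^\ast$ (respecting the block structure) as the signal posteriors, verify each has $W=k^\ast$ and that they average to $\mu$, yielding a valid $1$-persuasive scheme of value $1$; (iii) in the \textsc{No}-case, invoke the soundness of Theorem~\ref{th:mssi_lower} verbatim to conclude $W_\epsilon(\pvec)<k^\ast$ for every posterior, hence $\opt_\epsilon=0$; (iv) conclude that any $(\alpha,\epsilon)$-persuasive algorithm distinguishes \textsc{Yes} from \textsc{No} and therefore inherits the $n^{\tilde\Omega(\log n)}$ lower bound, with $\epsilon^\ast$ the constant $\rho/52d$ from Theorem~\ref{th:mssi_lower} and $n=O(\nrec\,d)$ polynomially related to the \fg{} size. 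Finally, Corollary~\ref{th:main_hardnessGeneral} follows immediately since $k$-voting has binary action spaces.
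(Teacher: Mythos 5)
Your high-level plan (Yes-case value $1$ versus No-case $\epsilon$-persuasive value $0$, so that any $(\alpha,\epsilon)$-persuasive algorithm distinguishes \fg{} instances) is exactly the paper's strategy, and you correctly isolate the real difficulty: Bayes-plausibility forces the sender to exhibit a whole family of posteriors, \emph{all} with $W(\pvec)\geq k$, averaging to the prior. But your resolution of that difficulty does not work. You propose to reuse the matrix $A$ of Theorem~\ref{th:mssi_lower} verbatim (one voter per row, one state per column, uniform prior) and to take as signal posteriors the within-block coordinate permutations of the single witness $\xvec^\ast$. The rows of types $\type_1,\type_2,\type_3$ are indeed invariant under such permutations, since they only depend on block sums; the rows of type $\type_4$ are not. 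A row $(\type_4,S_i,\ansUno)$ is satisfied only when the probability mass sits on pairs $(T_j,\ansDue)$ with $\ver(S_i,T_j,\ansUno,\ansDue)=1$ and on encoding coordinates consistent with $e(\ansUno)$; after an arbitrary permutation of the answer coordinates within a block, the shifted answers generally neither satisfy the clauses of $T_j$ nor match any single \mUno{} answer, so for a permuted posterior there is typically \emph{no} $\ansUno$ whose type-$\type_4$ row meets the threshold. Hence the permuted posteriors fail the requirement of $m$ satisfied type-$\type_4$ rows, the Yes-case value is not $1$, and the gap argument collapses. (A smaller issue: the block averages of $\xvec^\ast$ over permutations are not the uniform distribution, so the prior would anyway have to be chosen non-uniform to match them.)

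The paper's fix is precisely to \emph{change the instance} rather than permute within the old one: it introduces $2^{7m}$ additional states $\theta_{\dvec}$ with a carefully weighted (non-uniform) prior, and replaces the type-$\type_4$ rows by receivers of type $\type_5$ indexed by $(S_i,\ansUno,\dvec)$ whose utilities evaluate $\ver$ and the code $e$ on XOR-shifted answers ($\ansDue\oplus\dvec_T$, $\ell\oplus\dvec_S$), with the state $\theta_{\dvec}$ contributing the positive term that makes exactly the $\dvec$-shifted posterior acceptable to that receiver. In the Yes-case the sender uses one signal $s_{\dvec}$ per shift, each inducing an XOR-translate of the witness posterior that is good for the receivers tailored to that $\dvec$, and these posteriors average to the prior by construction; in the No-case a good posterior is ``unshifted'' via the $\dvec$ whose state carries mass $\approx 1/2$ and mapped (after renormalizing by $\gamma$ and adjusting constants, $\epsilon=\epsilon'/30$) to an \mssiPrime{} solution, contradicting the soundness of Theorem~\ref{th:mssi_lower}. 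So the soundness step is also not ``verbatim'' as you claim, though it is a routine adaptation; the missing idea in your proposal is the XOR-shift augmentation of both the state space and the receiver set, without which the completeness direction fails.
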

\begin{proof}
	\MyParagraph{Overview.} By following the proof of Theorem~\ref{th:mssi_upper}, we provide a polynomial-time reduction from \fg~to the problem of finding an $\epsilon$-persuasive signaling scheme in $k$-voting, with $\epsilon=\delta/780=\rho/1560d$.
	Specifically, if $\omega(\F_\varphi)=1$, there exists a signaling scheme guaranteeing the sender an expected value of 1.
	Otherwise, if $\omega(\F_\varphi)\leq 1-\delta$, then all posteriors are such that $W_\epsilon(\pvec)<k$ (i.e., the sender cannot obtain more than 0).
	
	\MyParagraph{Construction.} The $k$-voting instance has the following possible states of nature.
	\begin{enumerate}
		\item $\theta_{(T_j,\ansDue)}$ for each set of clauses $T_j$, $j\in[m]$, and answer $\ansDue\in\ansDueSet=\{0,1\}^{6m}$.
		 Let $e:\{0,1\}^{2m}\to \{0,1\}^{8m}$ be an encoding function with $R=1/4$ and $\dist(e)\ge1/5$ (as in the proof of Theorem~\ref{th:mssi_upper}).
		We have a state $\theta_{(i,\ell)}$ for each $i\in [8m]$, and $\ell=(\ell_1,\ldots,\ell_m)\in\{0,1\}^m$.
		
		\item There is a state $\theta_{\dvec}$ for each $\dvec\in\{0,1\}^{7m}$.
		It is useful to see vector $\dvec$ as the union of the subvector $\dvec_S\in\{0,1\}^m$ and the subvector $\dvec_T\in\{0,1\}^{6m}$.
	\end{enumerate}
	The shared prior $\mu$ is such that: $\mu_{\theta_{(T_j,\ansDue)}}=\frac{1}{m 2^{2+6m}}$ for each $\theta_{(T_j,\ansDue)}$, $\mu_{\theta_{(i,\ell)}}=\frac{1}{m2^{5+m}}$ for each $\theta_{(i,\ell)}$, and $\mu_{\theta_\dvec}=\frac{1}{2^{1+7m}}$ for each $\theta_\dvec$.
	To simplify the notation, in the remaining of the proof let $u_\theta^r\defeq u_\theta^r(a_0)-u_\theta^r(a_1)$.
	The $k$-voting instance comprises the following receivers.
	\begin{enumerate}
		\item {\em Receivers of type $\type_1$}: there are $q$ (the value of $q$ is specified later in the proof) receivers of type $\type_1$, which are such that $u^{\type_1}_{\theta_{(T_j,\ansDue)}}=1$ for each $(T_j,\ansDue)$, and $-1/3$ otherwise.
		
		\item {\em Receivers of type $\type_2$:} there are $q$ receivers of type $\type_2$ such that $u^{\type_2}_{\theta_{(i,\ell)}}=1$ for each $(i,\ell)$, and $-1/3$ otherwise.
		
		\item {\em Receivers of type $\type_3$:} there are $q$ receivers of type $\type_3$ for each subset $\tcal\subseteq \{T_j\}_{j\in[m]}$ of cardinality $m/2$.
		Each receiver corresponding to the subset $\tcal$ is such that:
		\begin{equation*}
		u^{(\type_3,\tcal)}_{\theta_{(T_j,\ansDue)}}=
		\begin{cases}\begin{array}{ll}
		-1 & \textnormal{ if } T_j\in \tcal,\ansDue\in\ansDueSet\\
		1 & \textnormal{ if } T_j \notin \tcal, \ansDue\in\ansDueSet
		\end{array}
		\end{cases}
		\textnormal{ and }
		u_{\cdot}^{(\type_3,\tcal)}=0 \textnormal{ otherwise.}
		\end{equation*}
		
		\item {\em Receivers of type $\type_4$:}
		we have $q$ receivers ot type $\type_4$ for each subset $\ical$ of $4m$ indices selected from $[8m]$.
		Each receiver corresponding to subset $\ical$ is such that:
		\begin{equation*}
		u_{\theta_{(i,\ell)}}^{(\type_4,\ical)} = 
			\begin{cases}\begin{array}{ll}
			-1 & \textnormal{ if }i\in\ical,\ell\in\{0,1\}^m\\
			1 & \textnormal{ if } i\notin\ical,\ell\in\{0,1\}^m 
			\end{array}
			\end{cases}
			\textnormal{ and }
			u_{\cdot}^{(\type_4,\ical)}=0 \textnormal{ otherwise.}
		\end{equation*}
		
		\item {\em Receivers of type $\type_5$:}
		there is a receiver of type $\type_5$ for each $S_i$, $\ansUno\in\ansUnoSet$ and $\dvec\in\{0,1\}^{7m}$.
		Let $\oplus$ be the XOR operator.
		Then, for each receiver of type $\type_5$ the following holds:
		\begin{equation*}
		u_{\theta}^{(\type_5,S_i,\ansUno,\dvec)} = 
		\begin{cases}\begin{array}{ll}
		 -1/2 & \textnormal{ if }\theta=\theta_{(T_j,\ansDue)} \textnormal{ and }\ver(S_i,T_j,\ansUno,\ansDue\oplus \dvec_{T})=1\\
		 -1/2 & \textnormal{ if }\theta = \theta_{(i',\ell)}\textnormal{ and } e(\ansUno)_{i'}=[\ell\oplus\dvec_S]_{i}\\
		 1/2 & \textnormal{ if } \theta=\theta_\dvec\\
		 -1 & \textnormal{otherwise}
		\end{array}
		\end{cases}.		
	\end{equation*}
	\end{enumerate}
	Finally, we set $k=\left(2+\binom{m}{m/2}+\binom{8m}{4m}\right)q + m$. 
	By setting $q\gg m$ (e.g., $q=2^{10m}$), candidate $a_0$ can get at least $k$ votes only if all receivers of type $\type_1$, $\type_2$, $\type_3$, $\type_4$ vote for her.

	\MyParagraph{Completeness.}
	Given a satisfiable assignment $\ass$ to the variables in $\varphi$, let $[\ass]_{T_j}\in\{0,1\}^{6m}$ be the vector specifying the variables assignment of each clause in $T_j$, and $[\ass]_{S_i}\in \{0,1\}^2m$ be the vector specifying the assignment of each variable belonging to $S_i$. 
	The sender has a signal for each $\dvec\in\{0,1\}^{7m}$. The set of signals is denoted by $\sset$, where $|\sset|=2^{7m}$, and a signal is denoted by $s_\dvec\in\sset$.
	We define a signaling scheme $\phi$ as follows.
	First, we set $\phi_{\theta_\dvec}(s_\dvec)=1$ for each $\theta_\dvec$.
	If $|T_j|<2m$ for some $j\in[m]$, we pad $[\zeta]_{T_j}$ with bits 0 util $|[\ass]_{T_j}|=6m$.
	Then, for each $T_j$, $\phi_{\theta_{(T_j,[\ass]_{T_j}\oplus \dvec_T)}}(s_\dvec)=1/2^m$.
	For each $i\in[8m]$, set $\phi_{\theta_{(i,\ell\oplus\dvec_S)}}=1/2^{6m}$, where $\ell=(e([\zeta]_{S_1})_i,\ldots, e([\zeta]_{S_m})_i)$.
	First, we prove that the signaling scheme is consistent.
	For each state $\theta_{(T_j,\ansDue)}$, it holds that
	\[
	\sum_{s_\dvec\in\sset}\phi_{\theta_{(T_j,\ansDue)}}(s_\dvec)=\frac{1}{2^m} |\{\dvec: [\ass]_{T_j}\oplus \dvec_T=\ansDue\}|=1,
	\]
	and, for each $\theta_{(i,\ell)}$, the following holds:
	\[
	\sum_{s_\dvec\in\sset}\phi_{\theta_{(i,\ell)}}(s_\dvec)=\frac{1}{2^{6m}}|\{\dvec: (e([\ass]_{S_1})_i,\ldots, e([\ass]_{S_m})_i\oplus \dvec_S=\ell \}|=1.
	\] 
	
	Now, we show that there exist at least $k$ voters that will choose $a_0$.
	Let $\pvec\in\Delta_\Theta$ be the posterior induced by a signal $s_\dvec$.
	All receivers of type $\type_1$ choose $a_0$ since it holds:
	\[\sum_{(T_j,\ansDue)}p_{\theta_{(T_j,\ansDue)}}=
	\frac{\sum_{(T_j,\ansDue)}\mu_{\theta_{(T_j,\ansDue)}}\phi_{\theta_{(T_j,\ansDue)}}(s_\dvec)}{\sum_{\theta\in\Theta}\mu_\theta\phi_\theta(s_\dvec)}=
	\frac{1}{2^{2+7m}}\left(\frac{1}{2^{1+7m}}+\frac{1}{2^{2+7m}}+\frac{1}{2^{2+7m}}\right)^{-1}=\frac{1}{4}.\]
	Analogously, all receivers of type $\type_2$ select $a_0$.
	For each $T_j$, it holds $\sum_{\ansDue} p_{\theta_{(T_j,\ansDue)}}=1/4m$.
	Then, for each subset $\tcal\subseteq\{T_j\}_{j\in[m]}$ of cardinality $m/2$, $\sum_{T_j\in\tcal,\ansDue}p_{\theta_{(T_j,\ansDue)}}=m/2\cdot1/4m=1/8$.
	Therefore, each receiver of type $\type_3$ chooses $a_0$.
	An analogous argument holds for receivers of type $\type_4$.
	
	%Finally, for each $S_i$, let $\ansUno^i \in \ansUnoSet$ be such that $e([\ass]_{S_i} \oplus d_{S} )_j= e(\ansUno^i)_j$ for all $j\in [8m]$.
	%Finally, for each $S_i$, there is at least one receiver $(\type_5,S_i,\ansUno,\dvec)$ who chooses $a_0$. 
	%
	%Let $\tilde \dvec$ be such that $e([\ass]_{S_i})_j\oplus \tilde d_{S,j} = e(\ansUno)_j$ for all $j\in [8m]$.
	%
	Finally, we show that, for each $S_i$, the receiver $(\type_5,S_i,[\ass]_{S_i},\dvec)$ chooses $a_0$. 
	Receiver $(\type_5,S_i,[\ass]_{S_i}, \dvec)$ has the following expected utility:
	\[
	\frac{1}{2}p_{\theta_{\dvec}}-\frac{1}{2} \sum_{(T_j,\ansDue)}p_{\theta_{(T_j,\ansDue)}}-\frac{1}{2}\sum_{(i',\ell)} p_{\theta_{(i',\ell)}}=0
	\]
	since, for each $p_{(T_j,\ansDue)}>0$, $\ansDue \oplus \dvec_T= [\ass]_{T_j}\oplus \dvec_T \oplus \dvec_T=[\ass]_{T_j}$ and $\ver(S_i,T_j,[\ass]_{S_i}, \ansDue \oplus\dvec_T)=\ver(S_i,T_j,[\ass]_{S_i}, [\ass]_{T_j}) =1$ for each $T_j$.
	Moreover, for each $p_{(\theta_{i',l})}>0$, $[l \oplus d_S]_{i}=e([\ass]_{S_i})_{i'} \oplus d_{S,i} \oplus d_{S,i}=e([\ass]_{S_i})_{i'}$.
	This concludes the completeness section. \footnote{
%		For easy of exposition, we use the indirect signals $s_d$. We can substitute each signal $s_d$ with a direct signal that recommends $a_0$ to all receivers with $\sum_\theta p_\theta u^r_\theta \ge 0$ and $a_1$ to all the other receivers, where $p$ is the posterior induced by signal $s_d$.
		%
		To simplify the presentation, we employed indirect signals of type $s_\dvec$. 
		However, it is possible to construct an equivalent direct signaling scheme. 
		Let $\pvec^\dvec\in\Delta_\Theta$ be the posterior induced by $s_\dvec$.
		Then, it is enough to substitute each $s_\dvec$ with a direct signal recommending $a_0$ to all receivers such that $\sum_\theta p^\dvec_\theta u_\theta^r\geq 0$, and $a_1$ to all the others.
}

	\MyParagraph{Soundness.}
	We prove that, if $\omega(\F_\varphi)\le 1-\delta$, there does not exists a posterior in which $a_0$ is chosen by at least $k$ receivers, thus implying that the sender's utility is equal to 0.
	Now, suppose, towards a contradiction, that there exists a posterior $\pvec$ such that at least $k$ receivers select $a_0$.
	Let $\gamma\defeq \sum_{(T_j,\ansDue)} p_{\theta_{(T_j,\ansDue)}} + \sum_{(i,\ell)} p_{\theta_{(i,\ell)}}$.
	Since all voters of types $\type_1$ and $\type_2$ vote for $a_0$, it holds that $\sum_{(T_j,\ansDue)} p_{\theta_{(T_j,\ansDue)}} \ge \frac{1}{4}-\epsilon$ and  $\sum_{(i,\ell)} p_{\theta_{(i,\ell)}} \ge \frac{1}{4}-\epsilon$.
	Moreover, since at least a receiver $(\type_5,S_i,\ansUno,\dvec)$ must play $a_0$, there exists a $\dvec\in\{0,1\}^{7m}$ and a state $\theta_\dvec$ with $p_{\theta_\dvec} \ge \frac{1}{2}-\epsilon$.
	This implies that $\frac{1}{2}-2\epsilon\le \gamma\le \frac{1}{2}+\epsilon$.
	
	Consider the reduction to $\epsilon'$-\textsc{MFS}, with $\epsilon'=\rho/52 d$ (Theorem \ref{th:mssi_lower}).
	Let $x_{(T_j,\ansDue)}=p_{\theta_{(T_j,\ansDue\oplus \dvec_T)}}/\gamma$, $x_{(i,\ell)}= p_{\theta_{(i,\ell \oplus \dvec_{S})}}/{\gamma}$, and $\epsilon=\epsilon'/30$.
	All rows of type $\type_1$ of $\epsilon'$-\textsc{\textsc{MFS}} are such that \[w_{\type_1}=\frac{1}{\gamma}\left(\sum_{(T_j,\ansDue)}p_{\theta_{(T_j,\ansDue)}}-\sum_{(i,l)} p_{\theta_{(i,l)}}\right)\ge -\frac{3\epsilon}{\gamma}\ge -9\epsilon \ge- \epsilon'.\]
	All voters of type $\type_3$ choose $a_0$. Then, for all $\mathcal{T}\subseteq \{T_j\}_{j \in [m]}$ of cardinality $m/2$, it holds:
	\[\sum_{(T_j,\ansDue): T_j \in \mathcal{T}} p_{\theta_{(T_j,\ansDue)}}- \sum_{(T_j,\ansDue): T_j \notin \mathcal{T}} p_{\theta_{(T_j,\ansDue)}} \ge-\epsilon.\] 
	Then, all rows of type $\type_2$ of \mssiPrime~are such that: 
	\[w_{(\type_2,\tcal)}= \frac{1}{\gamma}\left( \sum_{(T_j,\ansDue): T_j \in \mathcal{T}} p_{\theta_{(T_j,\ansDue)}}- \sum_{(T_j,\ansDue): T_j \notin \mathcal{T}} p_{\theta_{(T_j,\ansDue)}}\right)\ge -\frac{\epsilon} {\gamma}\ge -3\epsilon\ge-\epsilon'.\]
	A similar argument proves that all rows of type $\type_3$ of \mssiPrime~have $w_{(\type_3,\mathcal{I})}\ge -\epsilon'$.
	
	To conclude the proof, we prove that, for each voter $(\type_5,S_i,\ansUno,\dvec)$ that votes for $a_0$, the corresponding row $(\type_4,S_i,\ansUno)$ of \mssiPrime~is such that $w_{(\type_4,S_i,\ansUno)}\ge -\epsilon'$. 
	Let $\gamma'\defeq\sum_{(T_j,\ansDue):\ver(S_i,T_j,\ansUno,\ansDue) =1} x_{(T_j,\ansDue)} $ and $\gamma''\defeq\sum_{(i',\ell):e(\ansUno)_{i'}=\ell_i} x_{(i',\ell)}$.
	First, we have that $\gamma'\ge 1/4-7\epsilon$. If this did not hold, we would have 
	$$\sum_{\theta}p_\theta u_\theta^{(\type_5,S_i,\ansUno,\dvec)}<-\frac{1}{2}(1/4-\epsilon)-\frac{1}{2}(1/4-7\epsilon)-6\epsilon+\frac{1}{2}(1/2+2\epsilon)=\epsilon.$$
	Similarly, $\gamma'' \ge 1/4-7\epsilon$.
	Hence
	\begin{equation*}
	\begin{aligned}
	w_{(\type_4,S_i,\ansUno)} &= - \frac{1}{2} \gamma' + \frac{1}{2} \gamma'' - (1 -\gamma'-\gamma'') =\\
	&= 	\frac{1}{2\gamma}\left(\sum_{(T_j,\ansDue):\ver(S_i,T_j,\ansUno,\ansDue)=1}p_{\theta_{(T_j,\ansDue\oplus\dvec_T)}}
	+3\sum_{(i',\ell):e(\ansUno)_{i'}=\ell_i} p_{\theta_{(i',\ell\oplus\dvec_S)}}\right)-1\ge \\
	&\ge \frac{2(1/4-7\epsilon)}{1/2+\epsilon}-1 \ge  -30\epsilon=-\epsilon'.
	\end{aligned}
	\end{equation*}
	Thus, there exists a probability vector $\xvec$ for \mssiPrime~in which at least $k$ rows satisfy the \mssiPrime{} condition (Equation~\ref{eq:cond2_mssi_reduction}), which is in contradiction with $\omega(\F_\varphi)\leq1-\delta$.
	This concludes the proof.
\end{proof}

\section{A quasi-polynomial time algorithm for $(\alpha,\epsilon)$-persuasion}

In this section, we prove that our hardness result (Theorem~\ref{th:main_hardness}) is tight by devising a bi-criteria approximation algorithm.
Our result extends the results by~\citet{cheng2015mixture} and~\citet{xu2019tractability} for signaling problems with binary action spaces.
Indeed, it encompasses scenarios with an arbitrary number of actions and state-dependent sender's utility functions.

In order to prove our result, we need some further machinery. 
Let $\Z^r\defeq 2^{\A^r}$ be the power set of $\A^r$.
Then, $\Z\defeq \times_{r\in\rec}\Z^r$ is the set of tuples specifying a subset of $\A^r$ for each receiver $r$.
For a given probability distribution over the states of nature, we are interested in determining the set of best responses of each receiver $r$, i.e., the subset of $\A^r$ maximizing her expected utility. 
Formally, we have the following.
\begin{definition}[\br-set]\label{def:br_set}
	Given $\pvec\in\Delta_\Theta$, the {\em best-response set} (\br-set) $\brset_\pvec\defeq (Z^1,\ldots,Z^n)\in \Z$ is such that
	\[
	Z^r = \argmax_{a\in\A^r}\sum_{\theta\in\Theta}p_\theta u_\theta^r(a)\qquad\textnormal{ for each } r\in\rec.
	\]
\end{definition}

Similarly, we define a notion of $\epsilon$-\br-set which comprises $\epsilon$-approximate best responses to a given distribution over the states of nature.
\begin{definition}[$\epsilon$-\br-set]\label{def:eps_br_set}
	Given $\pvec\in\Delta_\Theta$, the {\em $\epsilon$-best-response set} ($\epsilon$-\br-set) $\brset_{\pvec,\epsilon}\defeq (Z^1,\ldots,Z^n)\in \Z$ is such that, for each $r\in\rec$, action $a$ belongs to $Z^r$ if and only if
	\[
	\sum_{\theta\in\Theta}p_\theta u_\theta^r(a)\geq \sum_{\theta\in\Theta}p_\theta u_\theta^r(a')-\epsilon \qquad\textnormal{ for each }a'\in\A^r.
	\]
\end{definition}

We introduce a suitable notion of {\em approximability} of the sender's objective function.
Our notion of {\em $\alpha$-approximable function} is a generalization of~\citet[Definition 4.5]{xu2019tractability} to the setting of arbitrary action spaces and state-dependent sender's utility functions.
\begin{definition}[$\alpha$-Approximability]\label{def:alfa_approx}
	Let $f\defeq\{f_\theta\}_{\theta\in\Theta}$ be a set of functions $f_\theta:\A\to[0,1]$.
	We say that $f$ is $\alpha$-approximable if there exists a function $g:\Delta_\Theta\times\Z\to\A$ computable in polynomial time such that, for all $\pvec\in\Delta_\Theta$ and $Z\in\Z$, it holds: $\avec=g(\pvec,Z)$, $\avec\in Z$ and 
	\[
	\sum_{\theta\in\Theta}p_\theta f_\theta(\avec)\geq \alpha \max_{\avec^\ast\in Z}\sum_{\theta\in\Theta} p_\theta f_\theta(\avec^\ast).
	\]
\end{definition}
The $\alpha$-approximability assumption is natural since otherwise it would be intractable even to evaluate the sender's objective value. 
When $f$ is $\alpha$-approximable, it is possible to find an approximation of the optimal receivers' tie breaking when they are constrained to select actions profiles in $Z$.

% can be expressed as a convex combination of s-uniform posteriors
% 

We now provide an algorithm which computes in quasi-polynomial time, for any $\alpha$-approximable $f$, a bi-criteria approximation of the optimal solution with an approximation  on the objective value arbitrarily close to $\alpha$.
When $f$ is $1$-approximate our result yields a bi-criteria QPTAS for the problem.
The key idea is showing  that an optimal signaling scheme can be approximated by a convex combination of suitable $k$-uniform posteriors.
Let $\nAct\defeq \max_{r\in\rec}\nAct_r$, $\nrec\defeq |\rec|$, and $d\defeq |\Theta|$.

\qptas
\begin{proof}
	We show that there exists a $\textnormal{poly}\left(d^{\frac{\log(\nrec\nAct/\delta)}{\epsilon^2}}\right)$ algorithm that computes the given approximation. 
	Let $k=\frac{32\log(4 \nrec\nAct/\delta)}{\epsilon^2}$ and $\K\subset\Delta_\Theta$ be the set of $k$-uniform distributions over $\Theta$ (Def.~\ref{def:k_distribution}).
	We prove that all posteriors $\pvec^\ast\in\Delta_\Theta$ can be decomposed as a convex combination of $k$-uniform posteriors without lowering too much the sender's expected utility.
	Formally, each posterior $\pvec^\ast\in\Delta_\Theta$ can be written as $\pvec^\ast=\sum_{\pvec\in\K}\gamma_{\pvec}\pvec$, with $\gamma\in\Delta_\K$ such that 
	\[
	\sum_{\pvec\in\K}\gamma_{\pvec}\sum_{\theta\in\Theta} p_\theta f_\theta(g(p,\brset_\epsilon(p)))\geq \alpha(1-\delta) \max_{\avec^\ast\in\brset(\pvec^\ast)}\sum_{\theta\in \Theta}p^\ast_\theta f_\theta(\avec^\ast).
	\]
	Let $\tilde\gamma\in\K$ be the empirical distribution of $k$ i.i.d. samples from $\pvec^\ast$, where each $\theta$ has probability $p^\ast_\theta$ of being sampled. 
	Therefore, the vector $\tilde\gamma$ is a random variable supported on $k$-uniform posteriors with expectation $\pvec^\ast$. 
	Moreover, let $\gamma\in\Delta_{\K}$ be a probability distribution such as, for each $\pvec\in\K$, $\gamma_\pvec\defeq \Pr(\tilde\gamma=\pvec)$.
	For a each $\gamma\in\Delta_\K$ and $\pvec\in\K$, we define by $\gamma_\pvec^{(\theta,i)}$ the conditional probability of having observed posterior $\pvec$, given that the posterior must assign probability $i/k$ to state $\theta$.
	Formally, for each $\pvec\in\K$, if $p_\theta=i/k$ we have $\gamma_\pvec^{(\theta,i)}=\gamma_\pvec/\sum_{\pvec':p'_\theta=i/k}\gamma_{\pvec'}$, and $\gamma_\pvec^{(\theta,i)}=0$ otherwise.
	The random variable $\tilde\gamma^{(\theta,i)}\in\K$ is such that, for each $\pvec\in\K$, $\Pr(\tilde\gamma^{(\theta,i)}=\pvec)=\gamma^{(\theta,i)}_\pvec$.
	Finally, let $\Pcal\subseteq\K$ be the set of posteriors such that $\pvec\in\Pcal$ if and only if $|\sum_{\theta}p_\theta u^r_\theta(a)-\sum_{\theta}p^\ast_\theta u_\theta^r(a)|\leq \frac{\epsilon}{2}$ for each $r \in \rec$ and $a\in\A^r$.
	
	We prove the following intermediate result.
	\begin{restatable}{lemma}{bicriteriaOne}\label{lemma:bicriteria_1}
		Given $\pvec^\ast\in \Delta_\Theta$, for each $\theta\in\Theta$ and for each $i\in[k]$ such that $|	i/k-p^\ast_\theta|\leq \epsilon/4$, it holds:
		\[
		\sum_{\pvec\in\Pcal:p_\theta=i/k}\gamma_\pvec\geq \left(1-\frac{\delta}{2}\right)\sum_{\pvec\in\K:p_\theta=i/k}\gamma_\pvec,
		\]
		where $\gamma$ is the distribution of $k$ i.i.d samples from $\pvec^\ast$.
	\end{restatable}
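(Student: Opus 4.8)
The plan is to recast the claim as a conditional-probability bound and then combine Hoeffding's inequality with a union bound over receiver--action pairs. Throughout, fix the index $\theta$ and the integer $i$ from the statement (so $|i/k-p^\ast_\theta|\le\epsilon/4$), write $\langle\pvec,u^r(a)\rangle\defeq\sum_{\theta'\in\Theta}p_{\theta'}u^r_{\theta'}(a)$ for the expected utility of $a$ under $\pvec$, and recall that $\pvec\in\Pcal$ iff $|\langle\pvec,u^r(a)\rangle-\langle\pvec^\ast,u^r(a)\rangle|\le\epsilon/2$ for all $r\in\rec$ and $a\in\A^r$. The inequality is vacuous when $\sum_{\pvec\in\K:p_\theta=i/k}\gamma_\pvec=0$, so assume it is positive. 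Since that sum equals $\Pr(\tilde\gamma_\theta=i/k)$ and $\sum_{\pvec\in\Pcal:p_\theta=i/k}\gamma_\pvec=\Pr(\tilde\gamma\in\Pcal,\ \tilde\gamma_\theta=i/k)$ — where $\tilde\gamma$ is the empirical distribution of $k$ i.i.d.\ draws $X_1,\dots,X_k\sim\pvec^\ast$ and $\tilde\gamma^{(\theta,i)}$ has, by definition, the law of $\tilde\gamma$ conditioned on $\tilde\gamma_\theta=i/k$ — the claim is equivalent to $\Pr(\tilde\gamma^{(\theta,i)}\in\Pcal)\ge 1-\delta/2$. The starting observation is that, conditioning on exactly $i$ of the $X_j$ equalling $\theta$, exchangeability makes the remaining $k-i$ samples i.i.d.\ draws $Z_1,\dots,Z_{k-i}$ from the renormalized distribution $q$ on $\Theta\setminus\{\theta\}$ with $q_{\theta'}=p^\ast_{\theta'}/(1-p^\ast_\theta)$; hence $\tilde\gamma^{(\theta,i)}\overset{d}{=}\tfrac ik\,e_\theta+\tfrac1k\sum_{j=1}^{k-i}e_{Z_j}$ (the degenerate case $p^\ast_\theta=1$, which forces $i=k$ and $\tilde\gamma^{(\theta,i)}=\pvec^\ast$, is immediate).

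The core step is a decomposition of the per-pair error. Put $B^{r,a}\defeq\sum_{\theta'\ne\theta}q_{\theta'}u^r_{\theta'}(a)$ and $\bar V^{r,a}\defeq\tfrac1{k-i}\sum_{j=1}^{k-i}u^r_{Z_j}(a)$, the latter an average of $k-i$ i.i.d.\ $[0,1]$-valued random variables with mean $B^{r,a}$. Using $\langle\pvec^\ast,u^r(a)\rangle=p^\ast_\theta u^r_\theta(a)+(1-p^\ast_\theta)B^{r,a}$ together with $\tfrac{k-i}{k}=1-\tfrac ik$, a short rearrangement gives, for every $r\in\rec$ and $a\in\A^r$,
\[
\langle\tilde\gamma^{(\theta,i)},u^r(a)\rangle-\langle\pvec^\ast,u^r(a)\rangle=\Big(\tfrac ik-p^\ast_\theta\Big)\big(u^r_\theta(a)-B^{r,a}\big)+\tfrac{k-i}{k}\big(\bar V^{r,a}-B^{r,a}\big).
\]
Because $u^r_\theta(a),B^{r,a}\in[0,1]$, the deterministic term has absolute value at most $|\tfrac ik-p^\ast_\theta|\le\epsilon/4$ by hypothesis, so it suffices to show that $\tfrac{k-i}{k}|\bar V^{r,a}-B^{r,a}|\le\epsilon/4$ holds for \emph{all} pairs $(r,a)$ simultaneously with probability at least $1-\delta/2$; on that event $|\langle\tilde\gamma^{(\theta,i)},u^r(a)\rangle-\langle\pvec^\ast,u^r(a)\rangle|\le\epsilon/2$ for all $(r,a)$, i.e.\ $\tilde\gamma^{(\theta,i)}\in\Pcal$.

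For a fixed $(r,a)$ I would apply Hoeffding's inequality to $\bar V^{r,a}$ with the deviation threshold $t_i\defeq\tfrac{k\epsilon}{4(k-i)}$, chosen so that $\tfrac{k-i}{k}t_i=\epsilon/4$: this yields $\Pr\big(\tfrac{k-i}{k}|\bar V^{r,a}-B^{r,a}|>\epsilon/4\big)\le 2\exp(-2(k-i)t_i^2)=2\exp\!\big(-\tfrac{k^2\epsilon^2}{8(k-i)}\big)\le 2\exp\!\big(-\tfrac{k\epsilon^2}{8}\big)$, using $k-i\le k$ (and when $i=k$ the random term vanishes, so the bound is trivial). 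Substituting $k=32\log(4\nrec\nAct/\delta)/\epsilon^2$ makes the right-hand side at most $2(\delta/4\nrec\nAct)^4$, and a union bound over the at most $\nrec\nAct$ pairs $(r,a)$ with $r\in\rec$, $a\in\A^r$ bounds the total failure probability by $2\delta^4/\big(256(\nrec\nAct)^3\big)\le\delta/2$. Combined with the previous paragraph this gives $\Pr(\tilde\gamma^{(\theta,i)}\in\Pcal)\ge1-\delta/2$, which is exactly the assertion.

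The one genuinely delicate point is the choice of decomposition: it must arrange for the deterministic error to be governed \emph{only} by the hypothesis $|i/k-p^\ast_\theta|\le\epsilon/4$, and in particular the regime $i\to k$, where $k-i$ is small and Hoeffding is weak, must be absorbed for free — which it is, since the rescaled threshold $t_i$ grows to compensate, keeping $(k-i)t_i^2\ge k\epsilon^2/16$ throughout. The remaining ingredients — the exchangeability argument for the conditional sample, Hoeffding, and the union bound over $(r,a)$ — are routine.
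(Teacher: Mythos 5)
Your proof is correct and follows essentially the same route as the paper: reduce the claim to showing $\Pr(\tilde\gamma^{(\theta,i)}\in\Pcal)\ge 1-\delta/2$, charge $\epsilon/4$ of the error to the deterministic bias coming from $|i/k-p^\ast_\theta|\le\epsilon/4$, and control the remaining $\epsilon/4$ by Hoeffding plus a union bound over the $\nrec\nAct$ receiver--action pairs, exactly as in the paper. The only difference is bookkeeping: the paper bounds the bias by computing $\Expec[\tilde\gamma^{(\bar\theta,i)}]$ and its $\ell_1$ deviation from $\pvec^\ast$, while you use an exact decomposition that makes the conditional law (the $k-i$ remaining samples being i.i.d.\ from the renormalized distribution) explicit -- a slightly cleaner treatment of the same estimate.
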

	\begin{proof}
		Fix $\bar \theta\in\Theta$ and $i\in[k]$ with $|i/k-p^*_{\bar \theta}|\le \epsilon/4$. 
		Then, for each $r\in\rec$ and $a\in\A^r$, let $\tilde t^r_a\defeq \sum_{\theta}\tilde\gamma_\theta^{(\bar\theta,i)} u^r_\theta(a)$ and $t_a^r\defeq \sum_{\theta} p^\ast_\theta u_\theta^r(a)$.
		First, we show that $|\Expec[\tilde t_a^r]-t_a^r|\leq\epsilon/4$.
		Equivalently, 
		$|\sum_{\theta}u_\theta^r(a)\left(\Expec[\tilde\gamma_\theta^{(\bar\theta,i)}]-p^\ast_\theta\right)|\leq \epsilon/4$.
		Assume $i/k\geq p^\ast_{\bar\theta}$.
		Then, 
		\begin{subequations}\label{eq:risultatino}
			\begin{align}
			\sum_{\theta}|\Expec[\tilde\gamma^{(\bar\theta,i)}_\theta]-p^\ast_\theta|=&\frac{i}{k}-p^\ast_{\bar\theta}+\sum_{\theta\neq\bar\theta}\left(p^\ast_\theta - \frac{p^\ast_\theta}{\sum_{\theta'\neq\bar\theta}p^\ast_{\theta'}}\cdot\left(1-\frac{i}{k}\right) \right)\leq\\
			\leq & \frac{\epsilon}{4}+1-p^\ast_{\bar\theta}-1+\frac{i}{k}\leq \frac{\epsilon}{2}.
			\end{align}	
		\end{subequations}
		Analogously, if $i/k\leq p^\ast_{\bar\theta}$, we get that $\sum_{\theta}|\Expec[\tilde\gamma^{(\bar\theta,i)}_\theta]-p^\ast_\theta|\leq \epsilon/2$.
		Let $M_1\defeq \left\{\theta\in\Theta\mid \Expec[\tilde\gamma_\theta^{(\bar\theta,i)}]-p^\ast_\theta\geq 0\right\}$, and $M_2\defeq \Theta\setminus  M_1$.
		Then, 
		\begin{align*}
		\sum_{\theta}u_\theta^r(a)\left(\Expec[\tilde\gamma^{(\bar\theta,i)}_\theta]-p^\ast_\theta\right)=&\sum_{\theta\in M_1}u_\theta^r(a)\left(\Expec[\tilde\gamma^{(\bar\theta,i)}_\theta]-p^\ast_\theta\right)+\sum_{\theta\in M_2}u_\theta^r(a)\left(\Expec[\tilde\gamma^{(\bar\theta,i)}_\theta]-p^\ast_\theta\right)\leq \frac{\epsilon}{4},
		\end{align*}
		where we use $\sum_{\theta\in M_2}u_\theta^r(a)\left(\Expec[\tilde\gamma^{(\bar\theta,i)}_\theta]-p^\ast_\theta\right)\le0$ and $\sum_{\theta\in M_1}u_\theta^r(a)\left(\Expec[\tilde\gamma^{(\bar\theta,i)}]-p^\ast_\theta\right)\le \epsilon/4$ (by Equation~\ref{eq:risultatino}).
		Analogously, it is possible to show that $\sum_{\theta}u_\theta^r(a)\left(\Expec[\tilde\gamma^{(\bar\theta,i)}_\theta]-p^\ast_\theta\right)\geq -\epsilon/4$.
		
		Then, $\Pr(|t_a^r-\tilde t_a^r|\geq \epsilon/2)\leq \Pr(|\tilde t_a^r-\Expec[\tilde t_a^r]|\geq \epsilon/4)$. 
		Moreover, by the Hoeffding's inequality we have that, for each $r\in\rec$ and $a\in\A^r$,
		\begin{equation*}\label{eq:lemma_hoeffding}
		\Pr(|\tilde t^r_a - \Expec[\tilde t^r_a]|\geq \epsilon/4)\leq 2e^{-2k(\frac{\epsilon}{4})^2}=2e^{\frac{-4\epsilon^2\log(4\nrec\nAct/\delta)}{\epsilon^2}}=2\left(\frac{\delta}{4\nrec\nAct}\right)^4\leq
		\frac{\delta}{2\nrec\nAct} .
		\end{equation*}
		The union bound yields the following:
		\begin{align*}
		\Pr\left(\bigcap_{r\in\rec,a\in\A^r} |\tilde t_a^r-t_a^r|\leq \frac{\epsilon}{2}\right)\geq & 1-\sum_{r,a}\Pr\left(|\tilde t^r_a-t^r_a|\geq \frac{\epsilon}{2}\right)\geq\\
		\geq &1-\sum_{r,a}\Pr\left(|\tilde t^r_a-\Expec[\tilde t_a^r]|\geq \frac{\epsilon}{4}\right)=1-\frac{\delta}{2}.
		\end{align*}
		By the definition of $\Pcal$, this implies that $\Pr(\tilde\gamma^{(\bar \theta,i)}\in\Pcal)\geq 1-\delta/2$.
		Finally,
		\begin{align*}
		\sum_{\pvec \in \Pcal : p_{\bar{\theta}}=i/k} \gamma_p =& \Pr\left( \tilde{\gamma}_{\bar\theta}=\frac{i}{k}\right) \Pr \left( \tilde \gamma \in \Pcal \mid \tilde\gamma_{\bar\theta}=\frac{i}{k}\right)=\\=& 
		\Pr\left( \tilde{\gamma}_{\bar\theta}=\frac{i}{k}\right)\Pr\left(\tilde\gamma^{(\bar\theta,i)}\in\Pcal\right)\geq\\\geq&
		\left(1-\frac{\delta}{2}\right) \Pr\left( \tilde{\gamma}_{\bar\theta}=\frac{i}{k}\right)=\left(1-\frac{\delta}{2}\right) \sum_{\pvec \in \K : p_{\bar{\theta}}=i/k} \gamma_p.
		\end{align*}
	\end{proof}

	Then, we prove the following auxiliary lemma:
	\begin{restatable}{lemma}{bicriteriaDue}\label{lemma:bicriteria_2}
		Given $\pvec^\ast\in \Delta_\Theta$, for each $\theta \in \Theta$, it holds:
		\[
		\sum_{i : |i/k-\pvec^\ast_\theta|\ge \epsilon/4} \sum_{\pvec \in \K:p_\theta=i/k} \gamma_p \le \frac{\delta}{2} p^\ast_\theta,
		\]
		where $\gamma$ is the distribution of $k$ i.i.d samples from $\pvec^\ast$.
	\end{restatable}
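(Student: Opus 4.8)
The plan is to first rewrite the quantity of interest probabilistically, and then split into two regimes according to the size of $p^\ast_\theta$. By definition $\gamma$ is the law of the empirical frequency vector $\tilde\gamma$ of $k$ i.i.d.\ draws from $\pvec^\ast$, so for every $i\in\{0,\dots,k\}$ we have $\sum_{\pvec\in\K:\,p_\theta=i/k}\gamma_\pvec=\Pr(\tilde\gamma_\theta=i/k)$, and hence the left-hand side of the claim equals $\Pr(|\tilde\gamma_\theta-p^\ast_\theta|\ge\epsilon/4)$. Fixing $\theta$, the count $k\,\tilde\gamma_\theta$ is distributed as $\mathrm{Bin}(k,p^\ast_\theta)$, so the whole lemma reduces to the binomial tail bound $\Pr(|\tilde\gamma_\theta-p^\ast_\theta|\ge\epsilon/4)\le \tfrac{\delta}{2}\,p^\ast_\theta$.

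For the ``non-rare'' regime $p^\ast_\theta\ge\epsilon/(8e)$ I would invoke Hoeffding's inequality, exactly as in the proof of Lemma~\ref{lemma:bicriteria_1}, to get $\Pr(|\tilde\gamma_\theta-p^\ast_\theta|\ge\epsilon/4)\le 2e^{-2k(\epsilon/4)^2}=2(\delta/4\nrec\nAct)^{4}$; since $p^\ast_\theta$ is bounded below here, the target $\tfrac{\delta}{2}p^\ast_\theta\ge\tfrac{\delta\epsilon}{16e}$ is comparatively large, and the value $k=\tfrac{32\log(4\nrec\nAct/\delta)}{\epsilon^2}$ makes the stated inequality hold with room to spare. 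For the ``rare'' regime $p^\ast_\theta<\epsilon/(8e)$ (so in particular $p^\ast_\theta<\epsilon/4$), the downward deviation $\tilde\gamma_\theta\le p^\ast_\theta-\epsilon/4$ is impossible since $\tilde\gamma_\theta\ge0$, so the event sits inside $\{k\tilde\gamma_\theta\ge m\}$ with $m=\lceil k\epsilon/4\rceil$. I would then use the union bound $\Pr(\mathrm{Bin}(k,p^\ast_\theta)\ge m)\le\binom{k}{m}(p^\ast_\theta)^m\le(ekp^\ast_\theta/m)^m$, peel off a single factor $p^\ast_\theta$, and bound the rest: $ekp^\ast_\theta/m\le 4e\,p^\ast_\theta/\epsilon<\tfrac12$ by the case hypothesis, so $\Pr\le\tfrac{8e\,p^\ast_\theta}{\epsilon}\,2^{-m}\le\tfrac{\delta}{2}p^\ast_\theta$, the last step using that $m\ge k\epsilon/4=8\log(4\nrec\nAct/\delta)/\epsilon$ is large enough to force $2^{-m}\le\tfrac{\delta\epsilon}{16e}$.

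I expect the rare regime to be the main obstacle: a plain Hoeffding or Chernoff estimate does not suffice there, because its bound is independent of $p^\ast_\theta$ and so cannot be pushed below the target $\tfrac{\delta}{2}p^\ast_\theta$ once $p^\ast_\theta$ is tiny. The content of the statement is precisely that unlikely states contribute proportionally little ``off-target'' probability, and this only becomes visible by extracting one factor of $p^\ast_\theta$ from the binomial tail; the remaining work is routine constant-chasing against the prescribed $k$, to make the two regime bounds hold simultaneously. Finally, summing the resulting inequality over $\theta\in\Theta$ and using $\sum_\theta p^\ast_\theta=1$ gives $\sum_{\theta}\sum_{i:\,|i/k-p^\ast_\theta|\ge\epsilon/4}\sum_{\pvec\in\K:\,p_\theta=i/k}\gamma_\pvec\le\tfrac{\delta}{2}$, which is the form in which this lemma is used in the proof of Theorem~\ref{th:qptas}.
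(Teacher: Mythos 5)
Your reduction of the left-hand side to the binomial tail $\Pr(|\tilde\gamma_\theta-p^\ast_\theta|\ge\epsilon/4)$ with $k\tilde\gamma_\theta\sim\mathrm{Bin}(k,p^\ast_\theta)$ is exactly the paper's starting point, and your rare-regime argument (downward deviations impossible, then $\binom{k}{m}(p^\ast_\theta)^m\le(ekp^\ast_\theta/m)^m$ with one factor $p^\ast_\theta$ peeled off) is a sound, elementary way to obtain the crucial factor $p^\ast_\theta$ there. The paper avoids any case split: it applies a Chernoff bound whose exponent depends on $p^\ast_\theta$, namely $\Pr(|\tilde\gamma_\theta-p^\ast_\theta|\ge\epsilon/4)\le 2e^{-k\epsilon^2/(32p^\ast_\theta)}=2\bigl(\delta/(4\nrec\nAct)\bigr)^{1/p^\ast_\theta}\le 2(\delta/16)^{1/p^\ast_\theta}$, and then extracts the factor $p^\ast_\theta$ for \emph{every} value of $p^\ast_\theta$ at once via $e^x\ge ex$ applied with the negative exponent $\log(\delta/16)\le -1$, which yields $2(\delta/16)^{1/p^\ast_\theta}\le\frac{\delta}{8}p^\ast_\theta\le\frac{\delta}{2}p^\ast_\theta$.

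The genuine gap is in your non-rare regime. Hoeffding gives the $p^\ast_\theta$-independent bound $2e^{-2k(\epsilon/4)^2}=2\bigl(\delta/(4\nrec\nAct)\bigr)^4$, while the target at the bottom of that regime is $\frac{\delta}{2}p^\ast_\theta\approx\frac{\delta\epsilon}{16e}$; comparing the two requires roughly $\delta^3\le 8(\nrec\nAct)^4\epsilon/e$ (about $750\,\epsilon$ when $\nrec=\nAct=2$), which is not implied by anything in the theorem, since $\epsilon$ and $\delta$ are independent parameters. Concretely, with $\nrec=\nAct=2$, $\delta=1/2$, $\epsilon=10^{-5}$, the Hoeffding bound is about $1.9\cdot 10^{-6}$ whereas $\frac{\delta}{2}\cdot\frac{\epsilon}{8e}\approx 1.1\cdot 10^{-7}$, so the step ``holds with room to spare'' fails. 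The objection you yourself raise against Hoeffding in the rare regime --- that its bound does not shrink with $p^\ast_\theta$ --- bites equally in this intermediate zone, because your regime boundary $\epsilon/(8e)$ is tied to $\epsilon$ rather than to the scale (roughly $\delta^3$) at which the Hoeffding estimate becomes adequate. The fix is to use a $p^\ast_\theta$-dependent tail bound in the non-rare regime as well (Bernstein, or a multiplicative Chernoff bound with exponent of order $k\epsilon^2/p^\ast_\theta$), at which point the case split becomes unnecessary and you essentially recover the paper's single-bound argument.
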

	\begin{proof}
		The random variable $\tilde\gamma_{\theta}$ is drawn from a binomial distribution.
		Then, by Chernoff's bound
		\begin{subequations}
			\begin{align}
			\Pr\left(|\tilde{\gamma}_\theta-p^\ast_\theta|\ge \frac{\epsilon}{4} \right) \leq& 2 e^{-\frac{k\epsilon^2}{32p^\ast_\theta}}=2 e^{-\frac{32\log(4\nrec\nAct/\delta)}{32p^\ast_\theta}} =
			2 \left(\frac{\delta}{4\nrec\nAct}\right)^\frac{1}{p^\ast_\theta}\leq \label{eq:lemma_1}\\
			\le & 2 \left(\frac{\delta}{16}\right)^\frac{1}{p^\ast_\theta}= \label{eq:lemma_2}\\
			=& 2e^{\log\left(\frac{\delta}{16}\right)\frac{1}{p^\ast_\theta}} = 2 \left( e^{\frac{1}{p^\ast_\theta}} \right)^{\log\left(\frac{\delta}{16}\right)}\leq\label{eq:lemma_3}\\
			\le & 2 \left(\frac{1}{p^\ast_\theta} e\right)^{\log\left(\frac{\delta}{16}\right)}\leq \label{eq:lemma_4}\\ \le &  
			2 \left(\frac{1}{p^\ast_\theta}\right)^{-1} e^{\log\left(\frac{\delta}{16}\right)} \le \\
			\le & \frac{\delta}{2} p^\ast_\theta.\label{eq:lemma_5}
			\end{align}
		\end{subequations}
		We get from  \eqref{eq:lemma_1} to \eqref{eq:lemma_2} via the natural assumption of having at least 2 actions for each receiver (i.e., $\delta\geq 2$), and of having at least 2 receivers (i.e., $\nrec\geq 2$).
		In \eqref{eq:lemma_4} we are using $e^x\ge e x$.
		Then,
		\[
		\sum_{i : |i/k-\pvec^\ast_\theta|\ge \epsilon/4} \sum_{\pvec \in \K:p_\theta=i/k} \gamma_p =\Pr\left( |\tilde\gamma_\theta-p^\ast_\theta | \ge \frac{\epsilon}{4} \right)\leq \frac{\delta}{2}p_\theta^\ast,
		\]
		which concludes the proof of the lemma.
	\end{proof}
	
	Now we can prove that, given a $\pvec^\ast \in \Delta_\Theta$ and for each $\theta$, $\sum_{\pvec \in \Pcal} \gamma_p p_\theta\ge (1-\delta) p^*_\theta$.
	\begin{restatable}{lemma}{bicriteriaTre}\label{lm:bicriteria_3}
		Given a  $\pvec^\ast \in \Delta_\Theta$, for each $\theta\in\Theta$, it holds:
		\[\sum_{\pvec \in \Pcal} \gamma_p p_\theta\ge (1-\delta) p^*_\theta,	\]
		where $ \gamma$ is the distribution of $k$ i.i.d samples from $\pvec^\ast$.
	\end{restatable}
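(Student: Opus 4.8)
The plan is to combine Lemmas~\ref{lemma:bicriteria_1} and~\ref{lemma:bicriteria_2} by decomposing the sum over $\K$ according to the mass $p_\theta = i/k$ that a $k$-uniform posterior assigns to the fixed state $\theta$. Write $I_{\mathrm{near}}\defeq\{i:|i/k-p^\ast_\theta|\le\epsilon/4\}$. Since $\tilde\gamma$ is the empirical distribution of $k$ i.i.d.\ draws from $\pvec^\ast$, we have $\Expec[\tilde\gamma_\theta]=p^\ast_\theta$, i.e.
\[
p^\ast_\theta=\sum_{i}\frac{i}{k}\sum_{\pvec\in\K:p_\theta=i/k}\gamma_\pvec .
\]

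First I would bound the contribution of the far indices. By Lemma~\ref{lemma:bicriteria_2}, $\sum_{i\notin I_{\mathrm{near}}}\sum_{\pvec\in\K:p_\theta=i/k}\gamma_\pvec\le\frac{\delta}{2}p^\ast_\theta$, and since $i/k\le 1$ this also gives $\sum_{i\notin I_{\mathrm{near}}}\frac{i}{k}\sum_{\pvec\in\K:p_\theta=i/k}\gamma_\pvec\le\frac{\delta}{2}p^\ast_\theta$. Subtracting from the identity above yields
\[
\sum_{i\in I_{\mathrm{near}}}\frac{i}{k}\sum_{\pvec\in\K:p_\theta=i/k}\gamma_\pvec\ge\Big(1-\frac{\delta}{2}\Big)p^\ast_\theta .
\]

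Next I would restrict to $\Pcal$: dropping from $\sum_{\pvec\in\Pcal}\gamma_\pvec p_\theta$ all terms with $i\notin I_{\mathrm{near}}$ (all terms are nonnegative) and applying Lemma~\ref{lemma:bicriteria_1} to each $i\in I_{\mathrm{near}}$ gives
\[
\sum_{\pvec\in\Pcal}\gamma_\pvec p_\theta\ \ge\ \sum_{i\in I_{\mathrm{near}}}\frac{i}{k}\sum_{\substack{\pvec\in\Pcal\\ p_\theta=i/k}}\gamma_\pvec\ \ge\ \Big(1-\frac{\delta}{2}\Big)\sum_{i\in I_{\mathrm{near}}}\frac{i}{k}\sum_{\substack{\pvec\in\K\\ p_\theta=i/k}}\gamma_\pvec\ \ge\ \Big(1-\frac{\delta}{2}\Big)^{2}p^\ast_\theta\ \ge\ (1-\delta)p^\ast_\theta ,
\]
where the last inequality uses $(1-\delta/2)^2\ge 1-\delta$. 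This concludes the proof.

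The step I expect to be most delicate is the bookkeeping of which probability mass may be discarded: Lemma~\ref{lemma:bicriteria_1} only controls $\gamma_\pvec$ conditioned on $p_\theta=i/k$ for \emph{near} $i$, so one must ensure that the far mass thrown away in the second display is itself negligible, which is precisely what Lemma~\ref{lemma:bicriteria_2} supplies. The two losses of $\delta/2$ then compose multiplicatively rather than additively, which is why $k$ is chosen with the stated dependence on $\delta$, $\nrec$, and $\nAct$.
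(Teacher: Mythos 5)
Your proof is correct and follows essentially the same route as the paper: the same decomposition of the $\gamma$-mass by the value $p_\theta=i/k$, Lemma~\ref{lemma:bicriteria_1} applied on the near indices, Lemma~\ref{lemma:bicriteria_2} together with $i/k\le 1$ to discard the far-index mass (the paper also implicitly uses the identity $\Expec[\tilde\gamma_\theta]=p^\ast_\theta$ at this step), and the final bound $(1-\delta/2)^2\ge 1-\delta$. The only difference is cosmetic ordering of the steps, so there is nothing substantive to add.
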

	\begin{proof}
		First, by restricting the set of posteriors, we have:
		\[
		\sum_{\pvec \in \Pcal} \gamma_\pvec p_\theta \geq
		\sum_{i:|i/k-\pvec^\ast_\theta|\le \epsilon/4} \frac{i}{k} \sum_{\pvec \in \Pcal:p_\theta=i/k} \gamma_p.
		\]
		By Lemma~\ref{lemma:bicriteria_1},
		\[
		\sum_{i:|i/k-\pvec^\ast_\theta|\le \epsilon/4} \frac{i}{k} \sum_{\pvec \in \Pcal:p_\theta=i/k} \gamma_p	\geq \sum_{i:|i/k-p^\ast_\theta|\le \epsilon/4} \frac{i}{k} \sum_{\pvec \in \K:p_\theta=i/k} \left(1-\frac{\delta}{2}\right) \gamma_\pvec .
		\]
		Finally,
		
		\begin{subequations}\label{eq:1}
			\begin{align*}
			\sum_{i:|i/k-p^\ast_\theta|\le \epsilon/4} \frac{i}{k} \sum_{\pvec \in \K:p_\theta=i/k} \left(1-\frac{\delta}{2}\right) \gamma_\pvec=&
			\left(1-\frac{\delta}{2}\right)  \sum_{i:|i/k-p^\ast_\theta|\le \epsilon/4} \frac{i}{k} \sum_{\pvec \in \K:p_\theta=i/k} \gamma_\pvec \ge\\ \geq &
			\left(1-\frac{\delta}{2}\right) \left(p^\ast_\theta -\sum_{i:|i/k-p^\ast_\theta|\ge \epsilon/4} \frac{i}{k}\sum_{\pvec \in \K:\pvec_\theta=i/k} \gamma_\pvec\right) \ge \\
			\geq &
			\left(1-\frac{\delta}{2}\right) \left(p^\ast_\theta -\sum_{i:|i/k-p^\ast_\theta|\ge \epsilon/4} \sum_{\pvec \in \K:\pvec_\theta=i/k} \gamma_\pvec\right) \ge \hspace{1cm} \textnormal{($i/k\leq 1$)}\\
			\geq & \left(1-\frac{\delta}{2}\right)^2 p^*_\theta \ge \hspace{5cm} 
			\textnormal{(by Lemma~\ref{lemma:bicriteria_2})}\\
			\geq & (1-\delta) p^*_\theta.
			\end{align*}
		\end{subequations}
		This concludes the proof of the lemma.
	\end{proof}

	We need to prove that all the posteriors in $\Pcal$ guarantee to the sender at least the same expected utility of $\pvec^\ast$.
	Formally, we prove that the $\epsilon$-\br-set of each $\pvec \in \Pcal$ contains the \br-set of $\pvec^\ast$.
	This is shown via the following lemma.
	\begin{restatable}{lemma}{bicriteriaQuattro}\label{lemma:bicriteria_4}
		Given $\pvec^\ast \in \Delta_{\Theta}$, for each $\pvec \in \Pcal$, it holds: $\brset(\pvec^\ast) \subseteq  \brset_\epsilon(\pvec)$. 
	\end{restatable}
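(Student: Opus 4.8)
The plan is to prove the inclusion $\brset(\pvec^\ast)\subseteq\brset_\epsilon(\pvec)$ component by component via a short chain of inequalities. Fix a receiver $r\in\rec$ and an action $a$ in the $r$-th component of $\brset(\pvec^\ast)$, so that by Definition~\ref{def:br_set} we have $\sum_{\theta}p^\ast_\theta u^r_\theta(a)\ge \sum_{\theta}p^\ast_\theta u^r_\theta(a')$ for every $a'\in\A^r$. The goal is to show that $a$ lies in the $r$-th component of $\brset_\epsilon(\pvec)$, i.e., that $\sum_{\theta}p_\theta u^r_\theta(a)\ge \sum_{\theta}p_\theta u^r_\theta(a')-\epsilon$ for every $a'\in\A^r$ (Definition~\ref{def:eps_br_set}).

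The key point is that $\pvec\in\Pcal$ means precisely that, for every receiver and every action, the expected utility computed under $\pvec$ deviates from the one under $\pvec^\ast$ by at most $\epsilon/2$. First I would use this bound on the action $a$ to get $\sum_{\theta}p_\theta u^r_\theta(a)\ge \sum_{\theta}p^\ast_\theta u^r_\theta(a)-\epsilon/2$; then I would invoke optimality of $a$ under $\pvec^\ast$ to replace $a$ by an arbitrary competitor $a'$, giving $\sum_{\theta}p^\ast_\theta u^r_\theta(a)\ge \sum_{\theta}p^\ast_\theta u^r_\theta(a')$; finally I would apply the $\Pcal$ bound once more, this time on $a'$, to return to $\pvec$: $\sum_{\theta}p^\ast_\theta u^r_\theta(a')\ge \sum_{\theta}p_\theta u^r_\theta(a')-\epsilon/2$. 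Chaining these three inequalities yields $\sum_{\theta}p_\theta u^r_\theta(a)\ge \sum_{\theta}p_\theta u^r_\theta(a')-\epsilon$, which is exactly the $\epsilon$-best-response condition. Since $r$, $a$, and $a'$ were arbitrary, this establishes the claimed inclusion.

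I expect no real obstacle here: the statement follows from applying the $\Pcal$ bound twice (once on each side of the inequality) together with the optimality of $a$ under $\pvec^\ast$, the two $\epsilon/2$ slacks summing to the $\epsilon$ slack permitted in the $\epsilon$-\br-set. The only care needed is bookkeeping — making sure the two invocations of the $\Pcal$ bound are applied to the correct actions ($a$ on one side, $a'$ on the other) and in the correct direction — but no quantitative estimate beyond this is required.
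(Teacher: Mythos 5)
Your proof is correct and is essentially identical to the paper's own argument: the same three-step chain using the $\epsilon/2$ bound from the definition of $\Pcal$ on $a$, optimality of $a$ under $\pvec^\ast$, and the $\Pcal$ bound again on $a'$, with the two $\epsilon/2$ slacks summing to $\epsilon$. No gaps.
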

	\begin{proof}
		Let $Z_1=\brset_\epsilon(\pvec)$ and $Z_2=\brset(\pvec^\ast)$. Suppose $a \in Z_2^r$. Then,  for all $a' \in \A^r$, 
		\[\sum_\theta p_\theta u^r_\theta(a) \ge \sum_\theta p^\ast_\theta u^r_\theta(a)-\frac{\epsilon}{2} \ge \sum_\theta p^\ast_\theta u^r_\theta(a')- \frac{\epsilon}{2} \ge \sum_\theta p_\theta u^r_\theta(a')- \epsilon. 
		\]
		Thus, $a \in Z_1^r$, which proves the lemma.
	\end{proof}

	Finally, we prove that we can represent each posterior $\pvec^\ast$ as a convex combination of $k$-uniform posteriors with a small loss in the sender's expected utility.
	For $\pvec \in \K$ and $Z \in \Z$, let $g^\ast: \Delta_\Theta\times\Z\to [0,1]$ be a function such that 
	$g^*(\pvec,Z)\defeq\max_{\avec \in Z} \sum_{\theta} p_\theta f_\theta(\avec)$.
	Given $\pvec^\ast\in\Delta_\Theta$, we are interested in bounding the difference in the sender's expected utility when $\pvec^\ast$ is approximated as a convex combination $\gamma$ of $k$-uniform posteriors, the sender exploits an $\alpha$-approximation of $f$, and she allows receivers for $\epsilon$-persuasive best-responses.
	Formally,
	
	\begin{restatable}{lemma}{lemmaFinale}
		Given a $\pvec^\ast \in \Delta_\Theta$, it holds:
		\[\sum_{\pvec \in \K} \gamma_\pvec \sum_{\theta} p_\theta f_\theta(g(\pvec,\brset_\epsilon(\pvec))) \ge f_\theta(g^\ast(\pvec^\ast,\brset(\pvec^\ast))),\]
		where $ \gamma$ is the distribution of $k$ i.i.d samples from $\pvec^\ast$.
	\end{restatable}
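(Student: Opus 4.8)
The plan is to read off the inequality by peeling its left-hand side down, step by step, to the target value $\alpha(1-\delta)\,g^\ast(\pvec^\ast,\brset(\pvec^\ast))$, using nothing beyond the structural lemmas already in hand (Lemmas~\ref{lemma:bicriteria_1}--\ref{lm:bicriteria_3} and Lemma~\ref{lemma:bicriteria_4}) together with the $\alpha$-approximability of $f$ (Definition~\ref{def:alfa_approx}). First I would fix a profile $\avec^\ast\in\argmax_{\avec\in\brset(\pvec^\ast)}\sum_{\theta}p^\ast_\theta f_\theta(\avec)$, so that $g^\ast(\pvec^\ast,\brset(\pvec^\ast))=\sum_\theta p^\ast_\theta f_\theta(\avec^\ast)$; the whole argument then reduces to lower bounding $\sum_{\pvec\in\K}\gamma_\pvec\sum_\theta p_\theta f_\theta(g(\pvec,\brset_\epsilon(\pvec)))$ by $\alpha(1-\delta)\sum_\theta p^\ast_\theta f_\theta(\avec^\ast)$.

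The core is a four-step chain. (i) Since each $f_\theta$ is nonnegative, I restrict the outer sum from $\K$ to the ``good'' set $\Pcal$: the left-hand side is at least $\sum_{\pvec\in\Pcal}\gamma_\pvec\sum_\theta p_\theta f_\theta(g(\pvec,\brset_\epsilon(\pvec)))$. (ii) For every $\pvec\in\Pcal$, Lemma~\ref{lemma:bicriteria_4} gives $\avec^\ast\in\brset(\pvec^\ast)\subseteq\brset_\epsilon(\pvec)$, so $\avec^\ast$ is feasible in the maximization defining $\alpha$-approximability at $(\pvec,\brset_\epsilon(\pvec))$; hence $\sum_\theta p_\theta f_\theta(g(\pvec,\brset_\epsilon(\pvec)))\ge \alpha\max_{\avec\in\brset_\epsilon(\pvec)}\sum_\theta p_\theta f_\theta(\avec)\ge \alpha\sum_\theta p_\theta f_\theta(\avec^\ast)$. (iii) I swap the order of summation to isolate, for each state $\theta$, the total $\gamma$-mass on $\Pcal$ times its $\theta$-coordinate: $\sum_{\pvec\in\Pcal}\gamma_\pvec\,\alpha\sum_\theta p_\theta f_\theta(\avec^\ast)=\alpha\sum_\theta f_\theta(\avec^\ast)\big(\sum_{\pvec\in\Pcal}\gamma_\pvec p_\theta\big)$. (iv) Lemma~\ref{lm:bicriteria_3} bounds the inner parenthesis by $(1-\delta)p^\ast_\theta$ for each $\theta$, and since $f_\theta(\avec^\ast)\ge0$ this applies state by state, giving $\alpha(1-\delta)\sum_\theta p^\ast_\theta f_\theta(\avec^\ast)=\alpha(1-\delta)\,g^\ast(\pvec^\ast,\brset(\pvec^\ast))$, as claimed.

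I do not expect a genuine obstacle here: all the probabilistic content — the concentration of the empirical posterior $\tilde\gamma$ around $\pvec^\ast$ and the consequent fact that $\gamma$ places almost all of its state-wise mass inside $\Pcal$ — is already isolated in Lemmas~\ref{lemma:bicriteria_1}--\ref{lm:bicriteria_3}, and ``the $\epsilon$-best responses of any $\pvec\in\Pcal$ contain the exact best responses of $\pvec^\ast$'' is precisely Lemma~\ref{lemma:bicriteria_4}. The only care needed is bookkeeping: invoking nonnegativity of the $f_\theta$'s twice (to drop the $\pvec\notin\Pcal$ terms in step (i) and to apply Lemma~\ref{lm:bicriteria_3} inside the $\theta$-sum in step (iv)), and noting that $g(\pvec,\brset_\epsilon(\pvec))$ is well defined and always lies in $\brset_\epsilon(\pvec)$, which is exactly the output guarantee of $g$ in Definition~\ref{def:alfa_approx}. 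With this lemma in place, Theorem~\ref{th:qptas} follows: in the revelation-principle-optimal scheme one replaces the posterior induced by each signal with the corresponding mixture $\gamma$ over $\K$ (recommending $g(\pvec,\brset_\epsilon(\pvec))$ at each $\pvec$), which keeps the scheme $\epsilon$-persuasive by the very definition of $\brset_\epsilon$ and loses only a factor $\alpha(1-\delta)$ in value; since $|\K|\le d^{\,k}$ with $k=O(\log(\nrec\nAct/\delta)/\epsilon^2)$, the resulting linear program over $\Delta_\K$ is of quasi-polynomial size and is solvable within the stated running time.
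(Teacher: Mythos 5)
Your argument is correct and follows essentially the same chain as the paper's proof: apply $\alpha$-approximability, restrict the mixture to $\Pcal$, use Lemma~\ref{lemma:bicriteria_4} to make the optimal profile $\avec^\ast$ for $\pvec^\ast$ feasible in each $\brset_\epsilon(\pvec)$, and finish with Lemma~\ref{lm:bicriteria_3}; your version is in fact cleaner, since fixing $\avec^\ast$ once avoids the paper's abuse of notation in treating $g^\ast$ as both a value and an action profile. Note also that what you prove (and what the paper's proof actually establishes) is the intended bound $\alpha(1-\delta)\,g^\ast(\pvec^\ast,\brset(\pvec^\ast))$, correcting the typo in the lemma's displayed right-hand side.
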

	\begin{proof}
		
		We prove the following:
		\begin{subequations}
			\begin{align*}
			&\sum_{\pvec \in \K} \gamma_\pvec \sum_{\theta} p_\theta f_\theta(g(\pvec,\brset_\epsilon(\pvec))) \ge \hspace{1.8cm}\textnormal{(Relaxed sender's expected util.)}\\
			\geq & \alpha \sum_{\pvec \in \K} \gamma_\pvec \sum_{\theta} p_\theta f_\theta(g^\ast(\pvec,\brset_\epsilon(\pvec))) \ge \hspace{4.3cm} \textnormal{(by Def.~\ref{def:alfa_approx})}\\
			\geq & \alpha \sum_{\pvec \in \Pcal} \gamma_\pvec \sum_{\theta} p_\theta f_\theta(g^\ast(\pvec,\brset_\epsilon(\pvec))) \ge \hspace{1cm}\textnormal{(By restricting the set of posteriors)}\\
			\geq & \alpha \sum_{\pvec \in \Pcal} \gamma_\pvec \sum_{\theta} p_\theta f_\theta(g^\ast(\pvec^\ast,\brset_\epsilon(\pvec))) \ge \hspace{3.4cm}\textnormal{(Optimality of $g^\ast$)}\\
			\geq & \alpha \sum_{\pvec \in \Pcal} \gamma_\pvec \sum_{\theta} p_\theta f_\theta(g^\ast(\pvec^\ast,\brset(\pvec^\ast))) \ge\hspace{3.7cm}\textnormal{(By Lemma~\ref{lemma:bicriteria_4})}\\
			\geq & \alpha (1-\delta)  \sum_{\theta} p_\theta^\ast f_\theta(g^\ast(\pvec^\ast,\brset(\pvec^\ast)))\hspace{4.1cm}\textnormal{(By Lemma~\ref{lm:bicriteria_3})}
			\end{align*}
		\end{subequations}
		This concludes the proof.
	\end{proof}

	Thus, we can restrict to posteriors in $\K$. Since there are $|\K|=\textnormal{poly}\left(d^{\frac{\log(\nrec \nAct/\epsilon)}{ \epsilon^2}}\right)$ posteriors, the following linear program (LP \ref{eq:lp1}) has $O(|\mathcal{K}|)$ variables and constraints and finds a $\alpha(1-\delta)$-approximation of the optimal signaling scheme:
	\begin{subequations}\label{eq:lp1}
		\begin{align}
		\max_{\gamma\in \Delta_\K} & \sum_{\pvec \in \K}  \gamma_\pvec \sum_{\theta \in \Theta} p_\theta f_\theta(g(\pvec,\mathcal{M}_\epsilon(	\pvec))) \label{lp1:obj}\\
		\textnormal{s.t.} &\sum_{p \in \K} \gamma_p p_\theta =\mu_\theta \qquad\forall \theta \in \Theta \label{lp1:cons1}
		\end{align}
	\end{subequations}
	
	Given the distribution on the $k$-uniform posteriors $\gamma$, we can construct a direct signaling scheme $\phi$ by setting, for each $\theta\in\Theta$ and $\avec\in\A$, 
	\[
	\phi_\theta(\avec)=\sum_{\pvec\in\K:\avec=g(\pvec,\mathcal{M}_\epsilon(\pvec))} \gamma_\pvec p_\theta.
	\]
	We showed that such a $\phi$ is $\alpha(1-\delta)$-approximate and $\epsilon$-persuasive, which are precisely our desiderata. 
	This concludes the proof.
\end{proof}
%

% Bibliography
\bibliographystyle{named}
\bibliography{biblio}

% Appendix

\end{document}